\theoremstyle{definition}
\newtheorem{theorem}{Theorem}
\newtheorem{definition}[theorem]{Definition}
\newcommand\footnoteref[1]{\protected@xdef\@thefnmark{\ref{#1}}\@footnotemark}
\def\blfootnote{\xdef\@thefnmark{}\@footnotetext}
\long\def\com#1{}
\newcommand{\ie}{{\em i.e.,}\xspace}
\newcommand{\eg}{{\em e.g.,}\xspace}
\newcommand{\st}{{s.t.}\xspace}
\newcommand{\N}{\mathbb N}
\newcommand{\G}{\mathbb G}
\newcommand{\E}{\ensuremath{\mathcal{E}}\xspace}
\newcommand{\D}{\ensuremath{\mathcal{D}}\xspace}
\newcommand{\M}{\ensuremath{\mathcal{M}}\xspace}
\newcommand{\V}{\ensuremath{\mathcal{V}}\xspace}
\newcommand{\hasha}{\textnormal{H}\xspace}
\newcommand{\hashb}{\ensuremath{\hat{\textnormal{H}}}\xspace}
\newcommand{\hashpayload}{\ensuremath{\textnormal{H}'}\xspace}
\newcommand{\mac}{\textsf{MAC}\xspace}
\newcommand{\A}{\ensuremath{\mathcal{A}}\xspace} 
\newcommand{\B}{\ensuremath{\mathcal{B}}\xspace} 
\newcommand{\cpa}{\textnormal{ind\$-cpa}\xspace}
\newcommand{\ccatwo}{\textnormal{ind\$-cca2}\xspace}
\newcommand{\indcca}{\textnormal{ind-cca2}\xspace}
\newcommand{\suite}{\ensuremath{S}\xspace}
\newcommand{\onelambda}{1^\lambda}
\newcommand{\getrand}{\overset{\$}{\gets}}
\newcommand{\meta}{\textsf{meta}\xspace}
\newcommand{\keys}{\textsf{keys}\xspace}
\newcommand{\aux}{\textsf{aux}\xspace}
\newcommand{\ctxtpayload}{c_{\textsf{payload}}}
\newcommand{\mackey}{\ensuremath{K_{mac}}\xspace}
\newcommand{\enckey}{\ensuremath{K_{enc}}\xspace}
\newcommand{\hide}{\texttt{Hide}\xspace}
\newcommand{\unhide}{\texttt{Unhide}\xspace}
\newcommand{\setup}{\textsf{Setup}\xspace}
\newcommand{\keygen}{\textsf{KeyGen}\xspace}
\newcommand{\enc}{\textsf{Enc}\xspace}
\newcommand{\dec}{\textsf{Dec}\xspace}
\newcommand{\qdec}{\textsf{qDec}}
\newcommand{\encap}{\textsf{Encap}\xspace}
\newcommand{\decap}{\textsf{Decap}\xspace}
\newcommand{\qdecap}{\textsf{qDecap}}
\newcommand{\kem}{\textsf{KEM.}}
\newcommand{\ies}{\textsf{IES.}}
\newcommand{\msbe}{\textsf{MSBE.}}
\newcommand{\hdr}{\textsf{HdrPURB.}}
\newcommand{\psetup}{\textsf{MsPURB.Setup}\xspace}
\newcommand{\pkeygen}{\textsf{MsPURB.KeyGen}\xspace}
\newcommand{\penc}{\textsf{MsPURB.Enc}\xspace}
\newcommand{\pdec}{\textsf{MsPURB.Dec}\xspace}
\newcommand{\headpurb}{\textsc{HdrPURB}\xspace}
\newcommand{\layout}{\textsc{Layout}\xspace}
\newtheoremstyle{block}
	{0.5em}
	{}
	{}
	{}
	{\scshape}
	{.}
	{\newline}
	{\underline{\thmname{#1}\thmnote{ #3}}}
\theoremstyle{block}
\newtheorem*{syntax}{Syntax}
\newtheorem*{game}{Game}
\newtheorem*{insta}{Instantiation}
\newtheorem*{algo}{Algorithms}
\newtheorem*{indist}{}
\newcommand{\gamestart}{\ensuremath{G_0}\xspace}
\newcommand{\gamememkeys}{\ensuremath{G_1}\xspace}
\newcommand{\gameh}[1]{\ensuremath{H_{#1}}\xspace}
\newcommand{\gamesubkeys}{\ensuremath{G_2}\xspace}
\newcommand{\gamememmsg}{\ensuremath{G_3}\xspace}
\newcommand{\gamesubenc}{\ensuremath{G_4}\xspace}
\newcommand{\gamerepbot}{\ensuremath{G_5}\xspace}
\newcommand{\gamesubmac}{\ensuremath{G_6}\xspace}
\newcommand{\gamesubpay}{\ensuremath{G_7}\xspace}
\newcommand{\gameinstart}{\ensuremath{G_{0}}\xspace}
\newcommand{\gamerandkey}{\ensuremath{G_{1}}\xspace}
\newcommand{\gameinend}{\ensuremath{G_{2}}\xspace}
\DeclarePairedDelimiter{\ceil}{\lceil}{\rceil}
\DeclarePairedDelimiter{\floor}{\lfloor}{\rfloor}
\newcommand{\hangone}{\hangindent=1em \hangafter=1 \noindent}
\newcommand{\hangtwo}{\hangindent=1em \hangafter=2 \noindent}
\newcommand{\para}[1]{\vspace*{0.4em} \noindent \textbf{\mbox{#1}}}
\newcommand{\summary}[1]{\noindent{\leavevmode\color{blue}[#1]}}
\newcommand{\personComment}[2]{{\noindent\color{red}[\underline{#1:} #2]}}
\newcommand{\kirill}[1]{\personComment{KN}{#1}}
\newif\ifAnnotations
  \renewcommand{\summary}[1]{}
\newcommand{\padname}{\textsc{Padmé}\xspace}
\newcommand{\padme}{\textsc{Padmé}\xspace}
\newcommand{\npot}{\textsc{NextP2}\xspace}
\newcommand{\purb}{\textsc{PURB}\xspace}
\newcommand{\purbs}{\textsc{PURB}s\xspace}
\newcommand{\mspurb}{\textsc{MsPURB}\xspace}
\newcommand{\purbsurl}{\href{https://github.com/dedis/purb}{https://github.com/dedis/purb}}
\begin{document}
	
	\author[ ]{Kirill Nikitin{$^{*\dagger}$}}
	
	\author[ ]{Ludovic Barman{$^{*\dagger}$}}
	
	\author[ ]{Wouter Lueks{$^{\dagger}$}}
	
	\author[ \hspace{-0.6ex}]{Matthew Underwood}
	
	\author[ ]{Jean-Pierre Hubaux{$^{\dagger}$}}
	
	\author[ ]{Bryan Ford{$^{\dagger}$}}
	
	\affil[ ]{$^{\dagger}$\'{E}cole polytechnique f\'{e}d\'{e}rale de Lausanne, 
		Switzerland\vspace{2pt}}
	\affil[ ]{{\normalsize \texttt{firstname.lastname@epfl.ch}}} 

\title{\vspace{-1em}Reducing Metadata Leakage from Encrypted Files and 
	Communication with PURBs}

\date{}

\maketitle

\blfootnote{\llap{*}Share first authorship.}

\begin{abstract}
{
Most encrypted data formats
leak metadata via their plaintext headers,
such as format version, encryption schemes used,
number of recipients who can decrypt the~data,
and even the recipients' identities.
This leakage can pose security and privacy risks to users,
\eg by revealing the full membership of a group of collaborators
from a single encrypted e-mail,
or by enabling an eavesdropper to fingerprint the~precise 
encryption software version and configuration the~sender used.
\\
We propose that future encrypted data formats
improve security and privacy hygiene by producing
{\em Padded Uniform Random Blobs} or \purbs:
ciphertexts indistinguishable from random bit strings
to anyone without a decryption key.
A \purb's content leaks {\em nothing at all},
even the application that created it,
and is padded such that even its length
leaks as little as possible.
\\
Encoding and decoding ciphertexts with {\em no} cleartext markers
presents efficiency challenges, however.
We present cryptographically agile encodings enabling
legitimate recipients to decrypt a \purb efficiently,
even when encrypted for any number of recipients' public keys
and/or passwords,
and when these public keys are from different cryptographic suites.
\purbs employ \padme, a~novel padding scheme that limits
information leakage via ciphertexts of maximum length~$M$
to a practical optimum of $O(\log \log M)$ bits,
comparable to padding to a power of two,
but with lower overhead
of at most $12\%$ and decreasing with larger payloads.
}
\com{
Most encryption schemes, whether designed for communication or file 
storage, leak various metadata information in their plaintext headers,
either justifying it with efficiency reasons or leaving it out of the scope.
This leakage poses serious security and privacy risks, as it can 
be leveraged by an adversary to infer information about the communication
content or patterns, prove the use of a certain software, or confirm the 
existence of this communication, which can be incriminating in itself.
This paper presents an encoding and padding schemes for producing 
\purbs, \emph{Padded Uniform Random Blobs}. 
A \purb is an encrypted blob 
 -- corresponding to an application-level unit of data, be it a file or network 
message -- that is indistinguishable from a uniform string of random bits to 
any observer without the appropriate keys.
Despite not having plaintext metadata, 
a \purb can be efficiently decoded by the legitimate recipient(s), 
and a single \purb can support multiple cryptographic schemes and keys at 
the same time.
The padding scheme \padname operates by restricting the size of the 
mantissa of the padded blob length. \padname has modest overhead, 
max $+12\%$ and decreasing with a file size, and only leaks 
$O(\log_2(\log_2L))$ bits of information about the plaintext length.
}

\end{abstract}

\section{Introduction}
\label{sec:intro}

Traditional encryption schemes and protocols aim to 
protect only their data payload, leaving related metadata exposed.
Formats such as PGP~\cite{zimmermann95pgp} reveal
in cleartext headers
the public keys of the intended recipients,
the algorithm used for encryption,
and the actual length of the payload. 
Secure-communication protocols similarly leak information
during key and algorithm agreement.
The TLS handshake~\cite{rfc8446}, for example,
leaks in cleartext the protocol version, chosen cipher suite, and 
the public keys of the parties.
This metadata exposure is traditionally assumed not to be security-sensitive,
but important for the recipient's decryption efficiency.

Research has consistently shown, however,
that attackers can exploit metadata
to infer sensitive information about communication content.
In particular, an attacker may be able to
fingerprint users~\cite{pang07finger80211,valverde15bad} and 
the applications they use use~\cite{zhang11inferring}.
Using traffic analysis~\cite{danezis07introducing}, an attacker may be 
able to infer websites a user visited~\cite{danezis07introducing, 
panchenko11website, dyer12peek, wang13improved,wang16realistically} 
or videos a user watched~\cite{reed2016leaky, 
schuster2017beauty, reed2017identifying}.
On VoIP, metadata can be used to infer the 
geo-location~\cite{leblondskype}, the spoken 
language~\cite{wright2007language}, or the voice activity of 
users~\cite{chang2008inferring}.
Side-channel leaks from data compression~\cite{kelsey02compression} 
facilitate several attacks on SSL~\cite{rizzo13crime, gluck13breach, 
beery13time}. The lack of proper padding might enable an active attacker 
to learn the length of the user's password from 
TLS~\cite{vranken15bicycle} or QUIC~\cite{ringroad} traffic.
In social networks, metadata can be used to draw conclusions about users' 
actions~\cite{greschbach12devil}, whereas telephone metadata has been 
shown to be sufficient for user re-identification and for determining home 
locations~\cite{mayer16evaluating}.
Furthermore, by observing the format of packets, oppressive regimes 
can infer which technology is used and use this information for 
the purposes of incrimination or censorship.
Most TCP packets that Tor sends, for example,
are 586 bytes due to its standard cell 
size~\cite{herrmann2009website}.

As a step towards countering these privacy threats,
we propose that encrypted data formats should produce 
\emph{Padded Uniform Random Blobs} or \purbs:
ciphertexts designed to protect {\em all} encryption metadata.
A \purb 
encrypts application content and metadata into a single blob 
that is indistinguishable from a random string,
and is padded to minimize information leakage via its length
while minimizing space overhead.
Unlike traditional formats, a \purb does not leak 
the encryption schemes used, who or how many recipients can decrypt it,
or what application or software version created it.
While simple in concept,
because \purbs by definition contain {\em no} cleartext structure or markers,
encoding and decoding them efficiently presents practical challenges.

This paper's first key contribution is {\em Multi-Suite PURB} or \mspurb,
a cryptographically agile \purb encoding scheme that supports
any number of recipients, who can use either shared passwords or 
public-private key pairs utilizing multiple cryptographic suites. 
The main technical challenge is providing \emph{efficient} 
decryption to recipients without leaving any cleartext markers.
If efficiency was of no concern,
the sender could simply discard all metadata and expect the recipient
to parse and trial-decrypt the payload
using every possible format version, structure, and cipher suite.
Real-world adoption requires both decryption efficiency
and cryptographic agility, however.
\mspurb combines a variable-length header
containing encrypted metadata with a 
symmetrically-encrypted payload. 
The header's structure enables efficient decoding by legitimate 
recipients via a small number of trial decryptions.
\mspurb facilitates
the seamless addition and removal of supported cipher suites,
while leaking no information to third parties without a decryption key.
We construct our scheme starting with the standard construction of the 
Integrated Encryption Scheme (IES)~\cite{abdalla01oracle} and use the ideas of 
multi-recipient public-key 
encryption~\cite{kurosawa2002multi,bellare07multi} as a part of the 
multi-recipient development.

To reduce information leakage from data lengths,
this paper's second main contribution is
\padname, a padding scheme that
groups encrypted \purbs into indistinguishability sets
whose visible lengths
are representable as limited-precision floating-point numbers.
Like obvious alternatives such as padding to the next power of two,
\padme reduces maximum information leakage to $O(\log \log M)$ bits,
where $M$ is the maximum length of encrypted blob
a user or application produces.
\padme greatly reduces constant-factor overhead
with respect to obvious alternatives, however,
enlarging files by at most +$12\%$,
and less as file size increases.

In our evaluation, creating a~\mspurb ciphertext takes 
$235$\,ms for $100$ recipients on consumer-grade hardware
using $10$ different cipher suites,
and takes only $8$\,ms for the common single-recipient single-suite 
scenario. Our implementation is in pure Go without assembly 
optimizations that might speed up public-key operations.
Because the \mspurb design limits the number of costly public-key
operations, however,
decoding performance is comparable to PGP,
and is almost independent of the number of 
recipients (up to 10,000).

Analysis of real-world data sets show that
many objects are trivially identifiable by their unique sizes without padding,
or even after padding to a fixed block size
(\eg that of a block cipher or a Tor cell).
We show that \padname can significantly reduce the number of objects
uniquely identifiable by their sizes: from 83\%
to 3\% for 56k Ubuntu packages, from 87\% to 3\% for 191k 
Youtube videos, from 45\% to 8\% for 848k hard-drive user files, and from 
68\% to 6\% for 2.8k websites from the Alexa top 1M list.
This much stronger leakage protection
incurs an average space overhead of only 3\%.

\smallskip
\noindent In summary, our main contributions are as follows:
\begin{compactitem}
\item We introduce \mspurb, a novel encrypted data format
	that reveals no metadata information to observers without
	decryption keys, while efficiently supporting multiple recipients and 
	cipher suites.
\item We introduce \padname, a padding scheme that
	asymptotically minimizes information leakage from data lengths
	while also limiting size overheads.
\item We implement these encoding and padding schemes, 
	evaluating the former's performance against PGP
	and the latter's efficiency on real-world data.
\end{compactitem}

\section{Motivation and Background}
\label{sec:bg}

We first offer example scenarios in which \purbs may be useful,
and summarize the Integrated Encryption Scheme that we later use as 
a design starting point.

\subsection{Motivation and Applications}

\com{
Nowadays, various applications use encryption to provide data 
confidentiality to the users. Most of these applications leave some or all 
encryption metadata in cleartext, as it is often not considered to be 
of high risk to security or privacy. Nonetheless, some metadata types have 
already been shown as 
privacy-sensitive~\cite{danezis07introducing,panchenko11website,dyer12peek,
 wang13improved,wang16realistically, zhang11inferring, leblondskype, 
wright2007language, chang2008inferring, mayer16evaluating, 
greschbach12devil, schuster2017beauty}.  
We argue that a number of other metadata types, 
especially related to encryption, can also be potentially used to infer 
communication content or to decide on the attack vector.
Revealing what application has created the encrypted message can 
already be sensitive, as an attacker might censor or collect the traffic of this 
application specifically. The exposed encryption scheme or the 
application version might suggest to the~attacker to exploit 
implementation or cryptographic weaknesses, \eg RC4-encrypted traffic in 
TLS~\cite{rfc5246} has been discovered vulnerable to the ciphertext-only 
attack~\cite{alfardan13rc4}.
Finally, revealing the identities and/or the number of message recipients 
can enable an attacker to infer the group membership, \eg of a minority or 
activists group, or even enable rubber-hose decryption attacks. }

Our goal is to define a generic method applicable to most of 
the common data-encryption scenarios such that the techniques are 
flexible to the application type, to the cryptographic 
algorithms used, and to the number of participants involved. 
We also seek to enhance plausible 
deniability such that a user can deny that a \purb is created by a given 
application or that the user owns the key to decrypt it.
We envision several immediate applications that could benefit from using 
\purbs.

\para{E-mail Protection.}
E-mail systems traditionally use PGP or S/MIME for encryption.
Their packet formats~\cite{rfc4880}, however,
exposes format version, encryption 
methods, number and public-key identities of the recipients, and
public-key algorithms used. In addition, the payload is
padded only to the block size of a symmetric-key algorithm used,
which does not provide ``size privacy'',
as we show in \S\ref{sec:padme-eval}.
Using \purbs for encrypted e-mail could minimize this metadata leakage.
Furthermore, as e-mail 
traffic is normally sparse, the moderate overhead \purbs incur can easily be
accommodated.

\para{Initiation of Cryptographic Protocols.}
In most cryptographic protocols, initial cipher suite negotiation, 
handshaking, and key exchange are normally performed unencrypted. In 
TLS\,1.2~\cite{rfc5246}, an eavesdropper who monitors a connection from 
the start can learn many details such as cryptographic schemes used.
The unencrypted Server Name Indication (SNI)
enables an eavesdropper to determine which 
specific web site a client is connected to
among the sites hosted by the same server.
The eavesdropper can also fingerprint the 
client~\cite{ristic09http} or distinguish censorship-circumvention tools 
that try to mimic TLS traffic~\cite{houmansadr2013parrot,frolov19use}.
TLS\,1.3~\cite{rfc8446} takes a few protective measures:
\eg less unencrypted metadata during the handshake,
and an experimental 
extension for encrypted SNI~\cite{ritter14protecting,rfc8446}. 
These measures are only partial, however,
and leave other metadata, such as protocol version number, cipher suites,
and public keys, still visible.
\purbs could facilitate fully-encrypted handshaking from the start,
provided a client already knows at least one public key and 
cipher suite the server supports.
Clients might cache this information from prior connections,
or obtain it out-of-band while finding the server,
\eg via DNS-based authentication~\cite{rfc6698}.

\para{Encrypted Disk Volumes.}
VeraCrypt~\cite{veracrypt}
uses a block cipher to turn a disk partition into an 
\emph{encrypted volume} where the partition's free space is filled with 
random bits.
For plausible deniability and coercion protection,
VeraCrypt supports so-called \emph{hidden volumes}:
an encrypted volume whose content and metadata is indistinguishable
from the free space of a primary encrypted volume
hosting the hidden volume.
This protection is limited, however, because
a primary volume can host only a single hidden volume.
A potential coercer might therefore assume \emph{by default} 
that the coercee has a hidden volume,
and interpret a claim of non-possession of the decryption keys
as a refusal to provide them.
\purbs might enhance coercion protection by enabling an encrypted volume 
to contain any number of hidden volumes,
facilitating a stronger ``$N+1$'' defense.
Even if a coercee reveals up to $N$ ``decoy'' volumes,
the coercer cannot know whether there are any more.

\subsection{Integrated Encryption Scheme}
\label{sec:ies}

The Integrated Encryption Scheme (IES)~\cite{abdalla01oracle} is a hybrid 
encryption scheme that enables the encryption of arbitrary message strings
(unlike ElGamal, which requires the message to be a group element),
and offers flexibility in underlying primitives.
To send an encrypted message, a sender first generates an ephemeral 
Diffie-Hellman key pair and uses the public key of the recipient to derive a 
shared secret. The choice of the Diffie-Hellman group is flexible, \eg 
multiplicative groups of integers or elliptic curves.
The sender then relies on a cryptographic hash 
function to derive the shared keys used to encrypt the message with a 
symmetric-key cipher and to compute a MAC using the encrypt-then-MAC 
approach.
The resulting ciphertext is structured as shown in Figure~\ref{fig:ies}.

\begin{figure}[ht!]
	\centering
	\begin{bytefield}[boxformatting={\centering\small}, bitwidth=1em]{16}
		\bitbox{3}{$\text{pk}_s$} & 
		\bitbox{8}{$\text{enc}(M)$} & 
		\bitbox{5}{$\sigma_{\text{mac}}$}
	\end{bytefield}
	\caption{Ciphertext output of the Integrated Encryption Scheme where 
	$\text{pk}_s$ is an ephemeral public key of the sender, and 
	$\sigma_{\text{mac}}$ 
	and $\text{enc}(M)$ are generated using the DH-derived keys.}
	\label{fig:ies}
	\vspace{-0.4cm}
\end{figure}

\section{Hiding Encryption Metadata}
\label{sec:encoding}

This section addresses the challenges of encoding and decoding
\emph{Padded Uniform Random Blobs} or \purbs
in a flexible, efficient, and cryptographically agile way.
We first cover notation, system and threat models, followed by
a sequence of strawman approaches that address different challenges on the 
path towards the full \mspurb scheme. We start with a scheme 
where ciphertexts are encrypted with a shared secret and addressed to 
a single recipient. We then improve it to support public-key operations with 
a single cipher suite, and finally to multiple recipients 
and multiple cipher suites.

\subsection{Preliminaries}


Let $\lambda$ be a standard security parameter.
We use \$ to 
indicate randomness, $\getrand$ to denote random sampling, 
$\parallel$ to denote string concatenation and 
|value| to denote the bit-length of ``value''. We write PPT as
an abbreviation for probabilistic polynomial-time.
Let $\Pi = (\E, \D)$ be 
an \ccatwo-secure authenticated-encryption (AE) 
scheme~\cite{bellare08authenticated} where $\E_{K}(m)$ and $\D_K(c)$ are 
encryption and decryption algorithms, respectively, given a 
message $m$, a ciphertext $c$, and a key $K$. 
Let $\mac = (\M, \V)$ be strongly unforgeable Message Authentication 
Code (MAC) generation and verification algorithms.
An authentication tag generated by \mac must be indistinguishable from a 
random bit string.

Let $\G$ be a cyclic finite group of prime order $p$ generated by the
group element $g$ where the gap-CDH problem
is hard to solve (\eg an elliptic 
curve or a multiplicative group of integers modulo a large prime).
Let $\hide:\ \G(1^\lambda) \to \{0, 1\}^\lambda$
be a mapping that encodes a group element of $\mathbb{G}$ to a binary 
string that is indistinguishable from a uniform random bit string
(\eg Elligator~\cite{bernstein13elligator}, Elligator 
Squared~\cite{tibouchi14squared, aranha14binary}). Let $\texttt{Unhide:}\ 
\{0, 1\}^\lambda \to \mathbb{G}(1^\lambda)$ be the counterpart to 
\texttt{Hide} that decodes a binary string into a group element of 
$\mathbb{G}$.

Let $\hasha:\G \to \{0, 1\}^{2\lambda}$ and $\hashb:\{0, 1\}^{*} 
\to \{0, 1\}^{2\lambda}$ be two distinct cryptographic hash functions.
Let $\texttt{PBKDF} : \{salt, password\} \to \{0, 1\}^{2\lambda}$ be a secure 
password-based key-derivation function~\cite{percival09stronger, 
biryukov15argon2, krawczyk2010cryptographic}, a ``slow'' hash function 
that converts a salt and a password into a bit string that can 
be used as a key for symmetric encryption.

\subsubsection{System Model}

Let \emph{data} be an application-level unit of data (\eg
a file or network message). 
A \emph{sender} wants to send an encrypted version of data to one 
or more \emph{recipients}.
We consider two main approaches for secure data exchanges: 

(1) Via \emph{pre-shared secrets}, where the sender shares with the 
recipients
long-term one-to-one passphrases $\hat{S}_1,..., \hat{S}_r$ that the
participants can use in a password-hashing scheme to derive ephemeral 
secrets $S_1,..., S_r$.

(2) Via \emph{public-key cryptography}, where sender and recipients derive
ephemeral secrets $Z_i=\hasha(X^{y_i}) = \hasha({Y_i}^x)$ using a hash 
function $\hasha$.  Here $(x, X=g^x)$ denotes the sender's one-time (private, 
public) key pair and $(y_i, Y_i = g^{y_i})$ is the key pair of recipient $i\in 1,...,r$.

In both scenarios, the sender uses ephemeral secrets 
$S_1,...,S_r$ or $Z_1,...,Z_r$ to encrypt (parts of) the \purb 
header using an authenticated encryption (AE) scheme.

We refer to a tuple $\suite = \langle \G, p, g, \hide(\cdot), \Pi, \hasha, 
\hashb \rangle$ used in the \purb 
generation as a \emph{cipher suite}. 
This can be considered similar to the notion of 
a cipher suite in TLS~\cite{rfc5246}.
Replacing any component of a suite (\eg the group) results in a different 
cipher suite.

\subsubsection{Threat Model and Security Goals}

We will consider two different types of computationally bounded adversaries:

\begin{compactenum}
	\item An \emph{outsider} adversary who does not hold a private key or a 
	password valid for decryption; 
	\item An \emph{insider} adversary who is a ``curious'' and active 
	legitimate recipient with a valid decryption key.
\end{compactenum}
Both adversaries are adaptive.

\vspace{0.1cm}
\noindent Naturally, the latter adversary has more power, \eg she can 
recover the plaintext payload. Hence, we consider different security 
goals given the adversary type:
\begin{compactenum}
	\item We seek \ccatwo security against the outsider adversary,
	\ie the encoded 
	content and \emph{all metadata} must be indistinguishable from random 
	bits under an adaptive chosen-ciphertext attack;
	\item We seek recipient privacy~\cite{barth06privacy}
	against the insider 
	adversary under a chosen-plaintext attack, \ie a recipient 
	must not be able to determine the identities of the ciphertext's 
	other recipients.
\end{compactenum}
Recipient privacy is a generalization of the key indistinguishability 
notion~\cite{bellare01key} where an adversary is unable to 
determine whether a given public key has been used for a given encryption.

\subsubsection{System Goals}

We wish to achieve two system goals beyond security:

\begin{compactitem}
     \item \purbs must provide cryptographic agility. They should 
     accommodate either one or multiple recipients, allow
     encryption for each recipient using a shared password or a public key, 
     and support different cipher suites.
     Adding new cipher suites must be seamless and must not 
     affect or break backward compatibility with other cipher suites.
	\item \purbs' encoding and decoding must be ``reasonably'' efficient.
	In particular, the number of expensive public-key operations should be 
	minimized, and padding must not impose excessive space overhead.
\end{compactitem}

\subsection{Encryption to a Single Passphrase}
\label{sec:singlepass}

We begin with
a simple strawman \purb encoding format
allowing a sender to encrypt $\text{data}$ 
using a single long-term passphrase $\hat{S}$ shared 
with a single recipient (\eg out-of-band via a secure channel).
The sender and recipient use an agreed-upon
cipher suite defining the scheme's components. 
The sender first generates a fresh symmetric session key $K$ and computes the 
PURB payload as $\E_K(\text{data})$.
The sender then generates a random salt and derives the ephemeral 
secret $S = \texttt{PBKDF}(\text{salt}, \hat{S})$. 
The sender next creates an~\emph{entry point} ($EP$) containing the session 
key~$K$, the position of the payload and potentially other metadata.
The sender then encrypts the EP using $S$.
Finally, the sender concatenates the three segments to form the \purb
as shown in Figure~\ref{fig:onepass}.

\begin{figure}[ht!]
	\centering
	\begin{bytefield}[boxformatting={\centering\small}, bitwidth=1em]{16}
		\bitbox{4}{salt} & \bitbox{6}{$\E_{S}(K \parallel 
        \text{meta})$} & \bitbox{6}{$\E_{K}$($\text{data}$)} \\
		\bitbox[t]{4}{} & \bitbox[t]{6}{entry point} & 
		\bitbox[t]{6}{payload}
	\end{bytefield}
	\caption{A \purb addressed to a single recipient and encrypted 
	with a passphrase-derived ephemeral secret $S$.}
	\label{fig:onepass}
\end{figure}

\subsection{Single Public Key, Single Suite}
\label{sec:singlepk}

We often prefer to use public-key cryptography,
instead of pre-shared secrets,
to establish secure communication or encrypt data at rest. 
Typically the sender or initiator indicates in the file's cleartext metadata 
which public key this file is encrypted for (\eg in PGP), 
or else parties exchange public-key certificates in cleartext during
communication setup (\eg in TLS). 
Both approaches generally leak the receiver's identity.
We address this use case with a second strawman \purb encoding format 
that builds on the last by enabling the 
decryption of an entry point~$EP$ using a private key. 

To expand our scheme to the public-key scenario, we adopt the 
idea of a hybrid asymmetric-symmetric scheme from the IES (see 
\S\ref{sec:ies}).
Let $(y, Y)$ denote the recipient's key pair. The sender generates an 
ephemeral key pair $(x, X)$, computes the ephemeral secret 
$Z=\hasha(Y^x)$, then proceeds as before, 
except it encrypts $K$ 
and associated metadata with $Z$ instead of $S$.
The sender replaces the salt in the 
\purb with her encoded ephemeral public key $\hide(X)$, where 
$\hide(\cdot)$ maps a group element to a uniform random bit string. The 
resulting \purb structure is shown in Figure~\ref{fig:onepk}.

\begin{figure}[ht!]
	\centering
	\begin{bytefield}[boxformatting={\centering\small}, bitwidth=1em]{19}
		\bitbox{5}{$\hide(X)$} & 
        \bitbox{7}{$\E_{Z}(K \parallel \text{meta})$} & 
		\bitbox{7}{$\E_{K}$($\text{data}$)} \\
		\bitbox[t]{5}{encoded pk} & \bitbox[t]{7}{entry point} & 
		\bitbox[t]{7}{payload}
	\end{bytefield}
	\caption{A \purb addressed to a single recipient that uses a  
		public key $Y$, where $X$ is the public key of the sender 
		and $Z=\hasha(Y^x)$ is the ephemeral 
		secret.}
	\label{fig:onepk}
\end{figure}

\subsection{Multiple Public Keys, Single Suite}
\label{sec:mulkeys}

We often wish to encrypt a message to several recipients,
\eg in multicast communication or mobile group chat.
We hence add support for encrypting one message under 
multiple public keys that are of the same suite.

As the first step, we adopt the idea of multi-recipient public-key 
encryption~\cite{kurosawa2002multi, bellare07multi} where the sender 
generates a single 
key pair and uses it to derive an ephemeral secret with each of the 
intended recipients. The sender creates one entry point per recipient. These 
entry points contain the same session key and metadata but are encrypted 
with different ephemeral secrets. 

As a~\purb's purpose is to prevent metadata leakage, 
including the number of recipients, a~\purb cannot reveal how many entry 
points exist in the header. Yet a legitimate recipient needs to have a way to 
enumerate possible candidates for her entry point. Hence, the primary 
challenge is to find a space-efficient layout of entry points---with no 
cleartext markers---such that the recipients are able to find their segments 
efficiently.

\para{Linear Table.} The most space-efficient approach is to place 
entry points sequentially. 
In fact, OpenPGP suggests a similar approach for achieving 
better privacy~\cite[Section~5.1]{rfc4880}. However, in 
this case, decryption is inefficient: the recipients have to attempt sequentially to decrypt each 
potential entry point, before finding their own or reaching the end of the 
\purb.

\para{Fixed Hash Tables.} A more computationally-efficient 
approach is to use a hash table of a fixed size. 
The sender creates a hash table and places each encrypted 
entry point there, identifying the corresponding position by hashing an 
ephemeral secret. 
Once all the entry points are placed, the remaining slots 
are filled with random bit strings, hence a third-party is unable to 
deduce the number of recipients.
The upper bound, corresponding to the size of the hash table,
is public information.
This approach, however, yields significant space overhead:
in the common case of a single 
recipient, all the unpopulated slots are filled with random bits but still 
transmitted.
This approach also has the downside of imposing
an artificial limit on the number of recipients.

\para{Expanding Hash Tables.} We therefore include not 
one but 
a sequence of hash tables whose sizes are consecutive
powers of two. Immediately following the encoded public 
key, the sender encodes a hash table of length one, followed (if needed) 
by a hash table of length two, one of length four, etc.,
until all the entry points 
are placed. Unpopulated slots are filled with random bits.
To decrypt a \purb, a recipient decodes the public key 
$X$, derives the ephemeral secret, computes the 
hash index in the first table (which is always zero), and tries to decrypt the 
corresponding entry point. On failure, the recipient moves to 
the second hash table, seeks the correct position and tries again, and so 
on.

\para{Definitions.}
We now formalize this scheme.
Let $r$ be the number of recipients and 
$({y_1}, {Y_1}), \ldots,({y_r}, {Y_r})$ be their corresponding key pairs.
The sender generates a fresh key pair $(x, X)$ and computes one
ephemeral secret $k_i= \hasha({Y_i}^x) $ per recipient.
The sender uses a second hash function $\hashb$ to derive independent 
encryption keys as $Z_i = \hashb(\textnormal{``key''} \parallel k_i)$ and 
position 
keys as $P_i = \hashb(\textnormal{``pos''} \parallel k_i)$.
Then the sender encrypts the data and creates $r$~entry 
points $\E_{Z_1}{(K, \text{meta})},..., \E_{Z_r}{(K, \text{meta})}$. The position 
of an entry in a hash table $j$ is $(P_i \mod 2^{j})$.
The sender iteratively tries 
to place an entry point in HT0 (hash table 0), then in HT1, and so on, until 
placement succeeds (\ie no collision occurs). 
If placement fails in the last existing hash table HT$j$, 
the sender appends another hash table HT$(j+1)$ of size $2^{j+1}$ and 
places the entry point there. 
An example of a \purb encrypted for five recipients is illustrated in 
Figure~\ref{fig:mulpk}.

\begin{figure}[ht!]
	\centering
	\begin{bytefield}[boxformatting={\centering\small}, bitwidth=.95em]{25}
		
		\bitbox[]{5}{encoded pk} & 
		\bitbox[]{5}{HT0} 
		& \bitbox[]{5}{HT1} & 
		\bitbox[]{5}{HT2} &
		\bitbox[]{5}{payload} \\
		
		\bitbox{5}{$\hide(X)$} & 
		\bitbox{5}{$\E_{Z_1}(K)$} 
		& \bitbox{5}{$\E_{Z_3}(K)$} & 
		\bitbox{5}{$\E_{Z_4}(K)$} &
		\bitbox{5}{$\E_{K}$($\text{data}$)} \\
		
		\bitbox[rt]{10}{} & \bitbox{5}{$\E_{Z_2}(K)$} & 
		\bitbox{5}{\color{gray} random} & \bitbox[l]{5}{} \\
		
		\bitbox[r]{15}{} & \bitbox[lr]{5}{$\E_{Z_5}(K)$} & 
		\bitbox[l]{5}{} \\
		
		\bitbox[r]{15}{} & \bitbox[lrtb]{5}{\color{gray} random} & \bitbox[l]{5}{} 
	\end{bytefield}
	\caption{A \purb with hash tables of increasing sizes (HT0, HT1, HT2). 
	Five and two slots of the hash tables are filled with entry points and
		random bit strings respectively.
		The metadata ``meta'' in the entry points is 
		omitted from the figure. Hash-table entries are put one after another in 
		the byte representation of a \purb.}
	\label{fig:mulpk}
\end{figure}

To decode, a recipient reads the public key; derives the ephemeral 
secret $k_i$, the encryption key $Z_i$ and the position 
key  $P_i$; and iteratively tries matching positions in hash tables until 
the decryption of the entry point succeeds. 
Although the recipient does not initially know
the number of hash tables in a \purb, the 
recipient needs to do only a single expensive public-key operation, and the 
rest are inexpensive symmetric-key decryption trials. 
In the worst case of a small message encrypted to many recipients,
or a non-recipient searching for a nonexistent entry point,
the total number of trial decryptions required
is logarithmic in the \purb's size.

In the common case of a single recipient, only a single hash table of 
size $1$ exists, and the header is compact. With $r$ recipients, the 
worst-case compactness is having $r$ hash tables (if each insertion leads 
to a collision), which happens with exponentially decreasing probability.
The expected number of trial decryptions is $\log_2 r$.

\subsection{Multiple Public Keys and Suites}
\label{sec:mulsuites}

In the real world, not all data recipients' keys
might use the same cipher suite. 
For example, users might prefer different key lengths or might use 
public-key algorithms in different groups.
Further, we must be able to introduce new cipher suites gradually,
often requiring larger and differently-structured keys and ciphertexts,
while preserving interoperability and compatibility with old cipher suites.
We therefore build on the above strawman schemes to produce
{\em Multi-Suite PURB} or \mspurb,
which offers cryptographic agility by supporting the encryption of data for 
multiple different cipher suites.

When a \purb is multi-suite encrypted, the recipients need a way to learn 
whether a given suite has been used and where the encoded 
public key of this suite is located in the \purb.
There are two obvious approaches to enabling recipients
to locate encoded public keys for multiple cipher suites:
to pack the public keys linearly at the beginning of a \purb,
or to define a fixed byte position for each cipher suite.
Both approaches incur undesirable overhead. In the former case, the recipients 
have to check all possible byte ranges, performing an expensive 
public-key operation for each. The latter approach results in 
significant space overhead and lack of agility, as unused fixed positions 
must be filled with random bits, and adding new cipher suites 
requires either assigning progressively larger fixed positions
or compatibility-breaking position changes to existing suites.

\para{Set of Standard Positions.} To address this challenge, we introduce a 
\emph{set} of standard byte positions per suite. 
These sets are public and 
standardized for all \purbs. The set refers to positions where the suite's 
public key could be in the \purb. For 
instance, let us consider a suite 
\texttt{PURB\_X25519\_AES128GCM\_SHA256}.
We can define---arbitrarily 
for now---the set of 
positions as $\{0, 64, 128, 1024\}$.
As the length of the encoded public key is fully defined by the suite 
($32$~bytes here, as Curve25519 is used), the recipients will iteratively 
try to decode a public key at $[0{:}32)$, then $[64{:}96)$, etc. 

If the sender wants to encode a \purb for two suites A and B, she 
needs to find one position in each set such that the public keys do not 
overlap. For instance, if $\text{set}_A = \{0, 128, 256\}$ and $\text{set}_B = 
\{0, 32, 64, 128\}$, and the public keys' lengths are $64$ and $32$, 
respectively, one possible choice would be to put the public key for suite 
A in $[0{:}64)$, and the public key for suite $B$ in $[64{:}96)$.
All suites typically have position $0$ in their set, so that in the common 
case of a \purb encoded for only one suite, the encoded public key 
is at the beginning of the \purb for maximum space efficiency.
Figure~\ref{fig:multirec-multisuites} illustrates an example encoding.
With well-designed sets,
in which each new cipher suite is assigned at least one position
not overlapping with those assigned to prior suites,
the sender can encode a \purb for any subset of the suites.
We address efficiency hereunder, and
provide a concrete example with real suites in 
Appendix~\ref{appendix:allowed-positions}.

\begin{figure}[ht!]
	\centering
	\begin{bytefield}[boxformatting={\centering\small}, bitwidth=1em]{24}
		
		\bitbox[]{6}{encoded $\text{pk}_A$} & 
		\bitbox[]{3}{HT0} 
		& \bitbox[]{5}{HT1} & 
		\bitbox[]{5}{HT2} &
		\bitbox[]{5}{payload} \\
		
		\bitbox{7}{$\hide(X_A)$} & 
		\bitbox{2}{{\color{gray} rnd}} 
		& \bitbox{5}{$\hide(X_B)$} & 
		\bitbox{5}{$\E_{Z_2}(K)$} &
		\bitbox{5}{$\E_{K}$($\text{data}$)} \\
		
		\bitbox[rt]{9}{} & \bitbox{5}{$\E_{Z_1}(K)$} & 
		\bitbox{5}{\color{gray} random} & \bitbox[l]{5}{} \\
		
		\bitbox[r]{14}{} & \bitbox[lr]{5}{$\E_{Z_3}(K)$} & 
		\bitbox[l]{5}{} \\
		
		\bitbox[r]{14}{} & \bitbox[lrtb]{5}{\color{gray} random} & \bitbox[l]{5}{} 
	\end{bytefield}
	\caption{Example of a \purb encoded for three public keys in two suites 
		(suite $A$ and $B$). The sender generates one ephemeral key pair 
		per 
		suite ($X_A$ and $X_B$). In this example, $X_A$ is placed at the first 
		allowed position, 
		and $X_B$ moves to the second allowed position 
		(since the first position is taken by suite A). Those positions are public 
		and fixed for each suite. HT0 cannot be used for storing an 
		entry point, as $X_A$ partially occupies it; HT0 is 
		considered ``full'' and the entry point is placed in subsequent hash 
		tables - here HT1.}
	\label{fig:multirec-multisuites}
\end{figure}

\para{Overlapping Layers.}
One challenge is that suites might indicate different lengths for both 
their public keys and entry points.
An encoder can easily accommodate this requirement by 
processing each suite used in a \purb as an independent logical layer.
Conceptually, each layer is composed of the public key and the entry-point 
hash tables for the recipients that use a given suite,
and all suites' layers overlap.
To place the~layers, an encoder first initializes a byte layout for the \purb.
Then, she reserves in the byte layout the~positions for the public keys of 
each suite used. Finally, she fills the hash tables of each suite with 
corresponding entry points. She identifies whether a given hash-table slot 
can be filled by checking the byte layout; the bytes might already be 
occupied by an entry point of the same or a different suite or one of the 
public keys. 
The hash tables for each suite start immediately after 
the suite public key's first possible position. Thus, upon reception of a 
\purb, a decoder knows exactly where to start decryption trials.
The payload is placed right after the 
last encoded public key or hash table, and its start position is recorded in 
the meta in each entry point.

\para{Decoding Efficiency.} 
We have not yet achieved our decoding efficiency goal, however:
the recipient must perform several expensive public-key operations
for each cipher suite,
one for each potential position until the correct position is found.
We reduce this overhead to a single public-key operation per suite by 
removing the recipient's need to know in which of the suite positions 
the public key was actually placed.
To accomplish this, a sender XORs bytes at 
all the suite positions and places the result into one of them.
The sender first constructs the whole PURB as before, then she substitutes 
the bytes of the already-written encoded public key with the 
XOR of bytes at all the defined suite positions (if they do not exceed the 
\purb length), which could even correspond to encrypted payload.
To decode a \purb, a recipient starts by reading and XORing the values at 
{\em all} the positions defined for a suite.
This results in an encoded public key, if 
that suite was used in this \purb. 

\para{Encryption Flexibility.}
Although multiple cipher suites can be used in a \purb,
so far these suites 
must agree on one payload encryption scheme, as a payload appears 
only once. 
To lift this constraint, we decouple encryption schemes for entry points and 
payloads.
An~entry-point encryption scheme is a part of a cipher suite,
whereas a payload encryption scheme is indicated separately
in the metadata ``meta'' in each entry point.

\subsection{Non-malleability}
\label{sec:mallea}

Our encoding scheme \mspurb so far ensures integrity only of the payload 
and the entry point a decoder uses.
If the entry points of other recipients or random-byte fillings are 
malformed, a decoder will not detect this.
If an attacker obtains access to a decoding oracle, he can 
randomly flip bits in an intercepted \purb, query the oracle on decoding 
validity, and learn the structure of the \purb
including the exact length of the payload.
An example of exploiting malleability is the Efail 
attacks~\cite{poddebniak18efail}, which tamper with PGP- or 
S/MIME-encrypted e-mails to achieve exfiltration of the plaintext.

To protect \purbs from undetected modification,
we add integrity protection to \mspurb using a MAC 
algorithm. A sender derives independent encryption 
$\enckey = \hashb(\text{``enc''} \parallel K)$ and MAC $\mackey = 
\hashb(\text{``mac''} \parallel K)$ keys from the encapsulated key $K$, and 
uses \mackey to compute an authentication tag over a full \purb as the final 
encoding step.
The sender records the utilized MAC algorithm in the meta in the entry 
points, along with the payload encryption scheme that now does not need 
to be authenticated.
The sender places the tag at the very end of the \purb,
which covers the entire \purb including encoded public keys,
entry point hash tables, payload ciphertext, and any padding required.

Because the final authentication tag covers the entire \purb,
the sender must calculate it after all other \purb content is finalized,
including the XOR-encoding of all the suites' public key positions.
Filling in the tag would present a problem, however,
if the tag's position happened to overlap with one of the 
public key positions of some cipher suite,
because filling in the tag would corrupt
the suite's XOR-encoded public key.
To handle this situation,
the sender is responsible for ensuring that the authentication 
tag does not fall into any of the possible public key positions
for the cipher suites in use.

To encode a \purb, a sender prepares entry points, lays out the 
header, encrypts the payload, adds padding (see \S\ref{sec:pad}), and 
computes the~\purb's total length.
If any of the byte positions of the authentication tag to be appended
overlap with public key positions,
the sender increases the padding to next bracket, until the public-key 
positions and the tag are disjoint.
The sender proceeds with XOR-encoding all suites' public keys,
and computing and appending the tag.
Upon receipt of a \purb, a decoder computes the potential public keys, 
finds and decrypts her entry point, learns the decryption scheme 
and the MAC algorithm with the size of its tag.
She then verifies the \purb's integrity and decrypts the payload.

\subsection{Complete Algorithms}
\label{sec:purb-complete}

We summarize the encoding scheme by giving detailed 
algorithms.
We begin by defining helper \headpurb algorithms that encode and decode
a \purb header's data for a single cipher suite.
We then use these algorithms in defining the final \mspurb encoding 
scheme.

Recall the notion of a cipher suite $\suite = \langle \G, p, g, \hide(\cdot), 
\Pi, \hasha, \hashb \rangle$,
where $\G$ is a cyclic group of order $p$ generated by $g$;
\hide is a mapping: $\G \to \{0, 1\}^\lambda$;
$\Pi = (\E, \D)$ is an authenticated-encryption scheme;
and $\hasha:\G \to \{0, 1\}^{2\lambda}$, $\hashb:\{0, 1\}^{*} 
\to \{0, 1\}^{2\lambda}$ are two distinct cryptographic hash functions.
Let $sk$ and $pk$ be a private key and a public key, respectively, for 
$\langle \G, p, g \rangle$ defined in a cipher suite.
We then define the full \headpurb and \mspurb algorithms as follows:

\begin{algo}[Hdr\purb]
	\label{algo:hdrpurb}
	\hangtwo
	$\hdr\encap(R, \suite) \to (\tau, k_1, \ldots, k_r)$: Given a set 
	of public keys $R=\{pk_1=Y_1,\ldots, pk_r = Y_r\}$ of a suite~$\suite$:
	\begin{compactenum}[\hspace{1em} (1)]
		\item Pick a fresh $x \in \mathbb{Z}_{p}$ and compute $X = 
      g^{x}$ where $p, g$ are defined in \suite.
		\item Derive $k_1 =  \hasha(Y_1^{x}),\ldots, k_r = \hasha(Y_r^{x})$.
		\item Map $X$ to a uniform string $\tau_X = \hide(X)$.
		\item Output an encoded public key $\tau = \tau_X$ and 
		$k_1, \ldots, k_r$.
	\end{compactenum}
	
	\hangone
	$\hdr\decap(sk(\suite), \tau) \to k$:  Given a private 
	key $sk = y$ of a suite $\suite$ and an encoded public key $\tau$:
	\begin{compactenum}[\hspace{1em} (1)]
		\item Retrieve $X = \unhide(\tau)$.
		\item Compute and output $k = \hasha(X^{y})$.
	\end{compactenum}
\end{algo}

\begin{algo}[\mspurb]
	\label{algo:purbs}
	\hangtwo
	$\psetup(\onelambda) \to \suite$: Initialize a cipher suite 
	$\suite = \langle \G, p, g, \hide(\cdot), \Pi, \hasha, \hashb \rangle$.
	
	\hangone
	$\pkeygen(\suite) \to (sk, pk)$: Given a suite $\suite = \langle \G, p, g, 
	\ldots \rangle$, pick $x \in \mathbb{Z}_{p}$ and 
	compute $X = g^{x}$. Output $(sk = x, pk = X)$.
	
	\hangone
	$\penc(R, m) \to c$: 
	Given a set of public keys of an indicated suite 
	$R = \{pk_1(\suite_1), \ldots, pk_r(\suite_r)\}$ and a message $m$:
	\begin{compactenum}[\hspace{1em} (1)]
		\item Pick an appropriate symmetric-key encryption 
      scheme $(\enc, \dec)$ with key length $\lambda_K$, a MAC 
		algorithm $\mac = (\M, \V)$, and a hash function $\hashpayload:\{0, 
		1\}^{*} \to \{0, 1\}^{\lambda_K}$ such that the key length $\lambda_K$
    matches the security level of the most conservative suite.
		\item Group $R$ into $R_1, \ldots, R_n$, \st all public keys in a 
		group $R_i$ share the same suite $\suite_i$. Let $r_i = |R_i|.$
		\item For each $R_i$:
		\begin{compactenum}[(a)]
			\item Run $(\tau_i, k_1, \ldots, k_{r_i}) = \hdr\encap(R_i, \suite_i)$;
			\item Compute entry-point keys
			$\keys_i = (Z_1 = \hashb(\textnormal{``key''} \parallel k_1),\ldots, 
			Z_{r_i} = \hashb(\textnormal{``key''} \parallel k_{r_i}))$ 
			and positions $\aux_i = (P_1 = \hashb(\textnormal{``pos''} \parallel 
			k_1),\ldots, P_{r_i} = \hashb(\textnormal{``pos''} \parallel k_{r_i}))$.
		\end{compactenum}
		\item Pick $K \getrand \{0, 1\}^{\lambda_K}$.
		\item Record $(\enc, \dec)$, \mac and \hashpayload in \meta.
		\item Compute a payload key $\enckey = 
		\hashpayload(\text{``enc''} \parallel K)$ and a MAC key $\mackey = 
		\hashpayload(\text{``mac''} \parallel K)$.
		\item Obtain $\ctxtpayload = \enc_{\enckey}(m)$.
		\item Run $c' \gets \hyperref[algo:layout]{\layout}(\tau_1, \ldots, \tau_n, 
		\keys_1,\allowbreak \ldots,\allowbreak \keys_n,\allowbreak \aux_1, 
		\allowbreak \ldots, \aux_n,
		\suite_1, \ldots, \suite_n, K, \meta, \ctxtpayload)$
		(see Algorithm~\ref{algo:layout} on page~\pageref{algo:layout}).
		\item Derive an authentication tag $\sigma = \M_{\mackey}(c')$ 
		and output $c = c' \parallel \sigma$.
	\end{compactenum}
	
	\hangone
	$\pdec(sk(\suite), c) \to m/\bot$: Given a private key $sk$ of a suite 
	\suite and a ciphertext $c$:
	\begin{compactenum}[\hspace{1em} (1)]
		\item Look up the possible positions of a public key 
		defined by $\suite$ and XOR bytes at all the positions to obtain the 
		encoded public key $\tau$.
		\item Run $k \gets \hdr\decap(sk, \tau)$.
		\item Derive $Z = \hashb(\textnormal{``key''} \parallel k)$ and $P = 
		\hashb(\textnormal{``pos''} \parallel k)$.
		\item Parse $c$ as growing hash tables and, using the secret 
		$Z$ as the key, trial-decrypt the entries defined by $P$ to obtain $K 
		\parallel \meta$.
		If no decryption is successful, return $\bot$.
		\item Look up the hash function \hashpayload, a $\mac = (\M, \V)$ 
		algorithm and the length of \mac output 
		tag $\sigma$ from \meta. Parse $c$ as $\langle c' \parallel \sigma 
		\rangle$. Derive 
		$\mackey = \hashpayload(\text{``mac''} \parallel K)$ and run 
		$\V_{\mackey}(c', \sigma)$. On failure, return $\bot$.
		\item Derive $\enckey = \hashpayload(\text{``enc''} \parallel K)$, read 
		the start and the end of the payload from $\meta$ (it is written by 
		\layout) 
		to parse $c'$ as $\langle hdr \parallel \ctxtpayload \parallel 
		\text{padding} \rangle$, and return $\dec_{\enckey}(\ctxtpayload)$ 
		where $\dec$ is the payload decryption algorithm specified in 
		$\meta$.
	\end{compactenum}
\end{algo}


\begin{theorem}
	If for each cipher suite
	$\suite = \langle \G, p, g,\allowbreak \hide(\cdot),\allowbreak \Pi,\allowbreak \hasha, 
	\hashb \rangle$
	used in a PURB we have that:
	the gap-CDH problem is hard relative to $\G$, \textnormal{\hide} maps 
	group elements in $\G$ to uniform random strings, 
	$\Pi$ is \textnormal{\ccatwo}-secure,
	and $\hasha$, $\hashb$ and $\hashpayload$ are modeled as a random 
	oracle;
	and moreover that \mac is strongly unforgeable with its MACs being 
	indistinguishable from random, and the~scheme for 
	payload encryption $(\enc, \dec)$ is \textnormal{\cpa}-secure,
	then \mspurb is \textnormal{\ccatwo}-secure against an~outsider 
	adversary.
	\label{theorem:outsider}
\end{theorem}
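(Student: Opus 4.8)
The plan is a sequence of game hops in the random-oracle model. Game $G_0$ is the real \ccatwo indistinguishability experiment: adaptively, and with access to a decryption oracle $\pdec(sk_i,\cdot)$ for each honest recipient, the adversary requests a single challenge ciphertext $c^{*}$ for a chosen message $m$ and recipient set $R$, then continues making decryption queries (never on $c^{*}$); in $G_0$, $c^{*}$ is a genuine \penc output, and the target is a game in which $c^{*}$ is a uniformly random string of the same length. Since that length is a deterministic, public function of $R$, $|m|$ and the \padname bracket, it is identical in both worlds, so only the \emph{contents} of $c^{*}$ need be shown indistinguishable from random. Each cipher suite used in $c^{*}$ is processed in turn, its $\hasha$, $\hashb$, $\hashpayload$ modelled as independent random oracles whose query tables the reductions maintain; a reduction that does not embed a hard instance simply knows all secret keys and runs $\pdec$ honestly and consistently with those tables.

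\emph{Hop~1 (decoupling the Diffie--Hellman secrets --- the crux).} For every suite used in $c^{*}$, replace each shared value $k_i=\hasha(Y_i^{x})$ derived from the challenge's ephemeral key $X=g^{x}$ by a fresh uniform $\tilde k_i\in\{0,1\}^{2\lambda}$. This game is identical to $G_0$ until the adversary queries $\hasha$ on some $Y_i^{x}$, and I would bound that event by a reduction to gap-CDH with $(X,Y_i)$ as instance, along the lines of the DHIES analysis~\cite{abdalla01oracle}. The DDH oracle is precisely what lets the reduction survive the decryption oracle: it recognises the winning query (an $\hasha$-query $W$ with $\mathrm{DDH}(g,X,Y_i,W)=1$ equals $g^{xy_i}$), and it answers $\pdec(sk_i,\cdot)$ without $y_i$ --- from the encoded ephemeral key inside a queried ciphertext the reduction recovers some $X'$, looks in the $\hasha$-table for a $W$ with $\mathrm{DDH}(g,X',Y_i,W)=1$ and proceeds consistently, and if none exists then every candidate entry point decrypts to $\bot$ except with negligible probability, so $\bot$ is the correct answer. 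Guessing which recipient (and suite) is hit first costs a factor linear in the number of recipients. I expect this hop to be the main obstacle: one must argue that the simulated decryption oracle stays perfectly consistent with the programmed random oracles, and that replaying the oracle's answers cannot smuggle out information about $c^{*}$.

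\emph{Hops~2--3 (random subkeys and encoded public keys).} Once the $\tilde k_i$ are uniform and, by Hop~1, never queried to $\hasha$, the entry-point keys $Z_i=\hashb(\textnormal{``key''}\parallel\tilde k_i)$ and position keys $P_i=\hashb(\textnormal{``pos''}\parallel\tilde k_i)$ are random-oracle outputs on points the adversary cannot have queried, hence replaceable by independent uniform strings (and likewise for all suites); the induced change to \layout is invisible because the unpopulated hash-table slots were random already. Next, for each used suite $X=g^{x}$ with $x$ uniform, so by the hypothesis that \hide sends group elements of $\G$ to uniform strings I may replace $\hide(X)$ by a fresh uniform $\lambda$-bit string; because \layout merely writes this string XOR-masked over the suite's standard positions and the XOR-recovery in \pdec is a bijection, the affected bytes of $c^{*}$ become uniform and independent of everything else.

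\emph{Hops~4--5 (neutralising the decryption oracle; randomising entry points, payload, and tag).} After Hops~1--3 the payload key $K$ appears in $c^{*}$ only inside the entry-point ciphertexts $\E_{Z_i}(K\parallel\meta)$ under fresh independent keys, so it is hidden from the adversary; this lets me insert a game in which $\pdec$ returns $\bot$ on every query $c\ne c^{*}$ that would make an honest recipient recover $K$ --- sound because re-deriving one of the $Z_i$ is now infeasible, because producing a fresh entry point encrypting $K$ would require knowing $K$, and because the only remaining corner (re-using the challenge's outer bytes while altering the authenticated part) is exactly a strong-unforgeability forgery for \mac under $\mackey=\hashpayload(\textnormal{``mac''}\parallel K)$. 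The decryption oracle is then independent of the $Z_i$ and of $K$, so: a hybrid over all recipients and suites invoking \ccatwo-security of $\Pi$ replaces every challenge entry point by a uniform string; then $\enckey=\hashpayload(\textnormal{``enc''}\parallel K)$ and $\mackey$ are uniform unused keys, so \cpa-security of the payload scheme replaces $\ctxtpayload=\enc_{\enckey}(m)$ by a uniform string and the ``MACs indistinguishable from random'' property replaces $\sigma=\M_{\mackey}(c')$ by a uniform string (the tag's bytes are disjoint from all public-key positions by construction, and Hop~4 ensures no verification or decryption under $\mackey$ or $\enckey$ is ever needed --- this is where plain \cpa-security of the payload suffices, the outer MAC supplying the CCA strength). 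As the \padname padding and unpopulated slots were random throughout, $c^{*}$ is now a uniform random string of the correct length, and summing the negligible advantages accumulated over the hops bounds the adversary's \ccatwo-advantage, proving the statement.
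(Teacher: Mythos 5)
Your proposal is correct and follows essentially the same game-hopping route as the paper's own proof in Appendix~\ref{proof:outsider}: randomize the DH-derived keys, then the entry points via the \ccatwo security of $\Pi$, neutralize the decryption oracle via strong unforgeability of \mac, and finally randomize the tag and payload, observing that $\hide(X)$, the unpopulated hash-table slots and the padding are already uniform. The only differences are presentational: the paper factors your Hop~1 through an \indcca-secure KEM abstraction (citing the DHIES analysis for the gap-CDH reduction you inline), and it performs the entry-point replacement before, rather than after, the $\bot$-answering hop.
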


\begin{proof}
	See Appendix~\ref{proof:outsider}.
\end{proof}

Theorem~\ref{theorem:outsider} also implies that an outsider adversary 
cannot break recipient privacy under an \ccatwo attack, as long as the two 
possible sets of recipients $N_0, N_1$ induce the same distribution on the 
length of a \purb.

\begin{theorem}
	If for each cipher suite
	$\suite = \langle \G, p, g,\allowbreak \hide(\cdot),\allowbreak 
	\Pi,\allowbreak \hasha, \hashb \rangle$,
	used in a PURB we have that:
	the gap-CDH problem is hard relative to $\G$, \textnormal{\hide} maps 
	group elements in $\G$ to uniform random strings, 
	$\Pi$ is \textnormal{\ccatwo}-secure,
	$\hasha$ and $\hashb$ are modeled as a random oracle, and
	the order in which cipher suites are used for encoding is fixed; 
	then \mspurb is recipient-private against an~\textnormal{\cpa} insider 
	adversary.
	\label{theorem:insider}
\end{theorem}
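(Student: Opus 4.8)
The plan is to prove Theorem~\ref{theorem:insider} by a sequence of games, reusing most of the machinery from the proof of Theorem~\ref{theorem:outsider}; the essential difference is that the insider adversary \A is herself a legitimate recipient, so we cannot — and need not — hide the session key $K$ or the payload from her: we only have to hide the \emph{other} recipients' entry points and the positions those entry points occupy. First I fix the recipient-privacy game. \A gets the public parameters and a list of honestly generated public keys, may corrupt a subset of them (learning the corresponding secret keys), and outputs a message $m$ together with two recipient sets $N_0,N_1$ subject to the natural restrictions that (i) \A owns a valid key in $N_0\cap N_1$ (she is a legitimate recipient), (ii) $N_0$ and $N_1$ contain exactly the same corrupted keys, and (iii) for every cipher suite $N_0$ and $N_1$ contain the same number of recipients, so that they induce the same \purb-length distribution — this last point is where the ``fixed ordering of cipher suites'' hypothesis is used, since it makes \layout a deterministic function of the per-suite recipient counts and the position hashes. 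The challenger picks $b\getrand\{0,1\}$, returns $\penc(N_b,m)$, and \A, with her corrupted keys and oracle access to $\hasha,\hashb,\hashpayload$ but \emph{no} decryption oracle, outputs a guess.

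Call a recipient \emph{honest} if \A does not hold its secret key. The first block of hops is a hybrid $H_0,\dots,H_R$ over the $R$ honest recipients occurring in $N_b$: $H_0$ is the real game, and $H_t$ replaces the shared secret $k_j=\hasha(X_{\suite}^{y_j})$ of the $t$-th honest recipient (in its suite $\suite$, with ephemeral public key $X_{\suite}$) by a fresh uniform value. The step $H_{t-1}\to H_t$ reduces to gap-CDH: on instance $(g^a,g^b)$ with a DDH oracle, the reduction sets $X_{\suite}:=g^a$ through \hide, sets the target recipient's key to $g^b$, simulates everything else honestly (it knows all other honest secret keys), and forms both the target recipient's entry point and \A's own entry point in suite $\suite$ using fresh random placeholders in place of the unknown shared secrets; on each query $W$ to $\hasha$ it uses the DDH oracle to test whether $W=g^{ab}$ (then output $W$ and halt) and whether $W=X_{\suite}^{y^*}$ for \A's corrupted key $Y^*$ (then program $\hasha(W)$ to the placeholder, keeping \A's own decryption consistent). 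Since \A's view differs between $H_{t-1}$ and $H_t$ only when she queries $\hasha(g^{ab})$, the whole hybrid is bounded by $R$ times the gap-CDH advantage (summed over the relevant groups). After $H_R$, every honest $k_j$ is uniform; one statistically negligible hop then replaces each honest $Z_j=\hashb(\textnormal{``key''}\parallel k_j)$ and $P_j=\hashb(\textnormal{``pos''}\parallel k_j)$ by fresh uniform values (\A's view depends on $k_j$ only through $(Z_j,P_j)$, so $k_j$ stays essentially uniform and \A queries $\hashb$ at the relevant inputs with probability $O(q/2^{2\lambda})$), and a final hybrid over honest recipients replaces each honest entry point $\E_{Z_j}(K\parallel\meta)$ by a uniform random string of equal length, at a cost bounded by the ind\$-cpa security of $\Pi$ (implied by its \ccatwo security), since each $Z_j$ is now a fresh key used for a single encryption.

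In the last game the challenge ciphertext — and hence \A's whole view — is independent of $b$, so her advantage there is exactly zero. The only $b$-dependent ingredients of $\penc(N_b,m)$ are the honest recipients' contributions, which by the previous hops are just fresh uniform strings at fresh uniform hash-table positions, with all remaining slots filled by uniform random bytes; by restriction (iii) the \layout output — hence the number of hash-table slots, the payload offset recorded in \meta, the \padme padding length, and the authentication-tag offset — has the same distribution under $N_0$ and $N_1$; the encoded ephemeral public keys (uniform strings, using \hide-uniformity), \A's own entry points, the key $K$, the payload $\enc_{\enckey}(m)$, and the tag $\M_{\mackey}(c')$ all have $b$-independent distributions; and by the earlier hops the challenger never queries $\hasha$ or $\hashb$ at any point whose answer would couple \A's oracle view to the honest keys. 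Collecting the negligible bounds from all the hops yields the theorem.

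The main obstacle is the gap-CDH hybrid step. Embedding the CDH instance into the adversary's \emph{own} cipher suite forces the reduction to simulate \A's legitimate entry point without knowing the ephemeral exponent $x_{\suite}=a$; this is only possible because gap-CDH supplies a DDH oracle, which the reduction must use simultaneously for two distinct purposes — recognising the winning query $g^{ab}$ and lazily programming $\hasha$ at $X_{\suite}^{y^*}$ — in a way where the two do not interfere, and the hybrid bookkeeping over suites and honest recipients has to be arranged so that every other recipient's (and \A's) material can still be produced consistently. A secondary subtlety is arguing that restriction (iii) genuinely forces a common \purb-length distribution, which is precisely why the theorem hypothesizes a fixed encoding order of cipher suites.
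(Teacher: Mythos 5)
Your proof is correct, but it takes a genuinely different and heavier route than the paper's. The paper first reduces (by a hybrid over the $d$ recipients in which $N_0$ and $N_1$ differ) to the case where $R_0$ and $R_1$ differ in a single public key $pk_i$ vs.\ $pk_j$ of the same suite, and then uses just two hops: replace $k_i^\star=\hasha(Y_i^{x})$ by a uniform string, then replace that uniform string by $k_j^\star=\hasha(Y_j^{x})$. Each hop is a single reduction to the \indcca (in fact \cpa suffices, since there is no decryption oracle) security of the IES KEM, so the $b=0$ ciphertext is morphed \emph{directly} into the $b=1$ ciphertext; the positions and entry-point keys track the swapped $k^\star$ automatically because both are derived from it via \hashb. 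Consequently the paper never needs to randomize entry-point ciphertexts, never invokes the security of $\Pi$ in this theorem, and obtains the tight bound $2d\cdot\advkemcca[\B](\onelambda)$. You instead push both worlds to a common ``all honest material is uniform'' hybrid: you randomize \emph{every} honest recipient's shared secret, then its $(Z,P)$ pair, then its entry-point ciphertext. This works and proves the same asymptotic statement, but it costs an extra hybrid over $\Pi$'s ind\$ security, a looser bound scaling with the total number of honest recipients rather than with $|N_0\,\triangle\,N_1|$, and an inlined gap-CDH/random-oracle argument that the paper packages once and for all as the KEM's \indcca security. What your route buys is a slightly more general corruption model (honest recipients may also appear in $N_0\cap N_1$) and an explicit, self-contained final game in which the view is manifestly independent of $b$. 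Two small remarks: since the challenger generates all key pairs, your reduction knows the corrupted secret key $y^*$ and can compute $\hasha(X_{\suite}^{y^*})=\hasha((g^a)^{y^*})$ directly, so the placeholder-and-program machinery for \A's own entry point is unnecessary; and both proofs rely in the same way on the equal per-suite cardinality condition together with the fixed suite ordering to equalize the layout distribution, so that part of your argument matches the paper's.
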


\begin{proof}
	See Appendix~\ref{proof:insider}.
\end{proof}

\subsection{Practical Considerations}
\label{sec:practical}

Cryptographic agility (\ie changing the encryption scheme) for the 
payload is provided by the metadata embedded in the entry points. For 
entry points themselves, we recall that the recipient uses trial-decryption 
and iteratively tests suites from a known, public, ordered list.
To add a 
new suite, it suffices to add it to this list. With this technique, a \purb does 
not need version numbers. There is, however, a trade-off between the 
number of supported suites and the maximum decryption time.
It is important that a sender follows the fixed order of the 
cipher suites during encoding because a varying order might result in a 
different header length, given the same set of recipients and 
sender's ephemeral keys, which could be used by an insider 
adversary.

If a nonce-based authenticated-encryption scheme is used for entry points, 
a sender needs to include a distinct \emph{random} 
nonce as a part of entry-point ciphertext (the nonce of each entry point 
must be unique per \purb).
Some schemes, \eg AES-GCM~\cite{bellare16multiAE}, have been shown 
to retain their security when the same nonce is reused with different keys.
When such a scheme is used, there can be a single \emph{global} nonce 
to reuse by each entry point. However, generalizing this approach of a 
global nonce to any scheme requires further analysis.

\para{Hardening Recipient Privacy.}
The given instantiation of \mspurb provides recipient privacy only under a 
\emph{chosen-plaintext} attack.
If information about decryption success is leaked, an insider adversary 
could learn identities of other recipients of a \purb by 
altering the header, recomputing the MAC, and querying candidates.
A possible approach to achieving \ccatwo recipient privacy is  
to sign a complete \purb using a strongly existentially unforgeable signature 
scheme and to store the verification key in each entry point, as similarly 
done in the broadcast-encryption scheme by Barth et 
al.~\cite{barth06privacy}.
This approach, however,
requires adaptation to the multi-suite settings, and it will 
result in a significant increase of the header size and decrease in efficiency.
We leave this question for future work.

\para{Limitations.}
The \mspurb scheme above is not secure against quantum 
computers, as it relies on discrete logarithm hardness.
It is theoretically possible to substitute IES-based key 
encapsulation with a quantum-resistant variant to achieve quantum \ccatwo 
security. The requirements for
substitution are \ccatwo security and compactness (it must be possible to 
securely reuse sender's public key to derive shared secrets with multiple 
recipients).
Furthermore, as \mspurb is non-interactive, they do not offer forward 
secrecy.

Simply by looking at the sizes (of the header for a malicious insider, or the 
total size for a malicious outsider), an adversary can infer a bound on the 
total number of recipients. 
We partially address this with padding in 
\S\ref{sec:pad}. However, no reasonable padding scheme can 
perfectly hide this information. If this is a problem in practice, we suggest 
adding dummy recipients.

Protecting concrete implementations against timing 
attacks is a highly challenging task.
The two following properties are required for basic hardening.
First, the implementations of \purbs should always attempt to  decrypt all 
potential entry points using all the recipient's suites.
Second, decryption errors of any source as well as inability to recover the 
payload should be processed in constant time and always return $\bot$.

\section{Limiting Leakage via Length}
\label{sec:pad}

The encoding scheme presented above in \S\ref{sec:encoding} produces 
blobs of data that are indistinguishable from random bit-strings of the same 
length, thus leaking no information to the adversary directly via their content.
The length itself, however,
might indirectly reveal information about the content.
Such leakage is
already used extensively in traffic-analysis attacks, \eg website 
fingerprinting~\cite{panchenko11website,dyer12peek,wang13improved,wang16realistically},
video identification~\cite{reed2016leaky, schuster2017beauty, 
reed2017identifying}, and VoIP traffic fingerprinting~\cite{wright2007language, 
chang2008inferring}. Although solutions involving application- or network-level 
padding are numerous, they are typically designed for a specific 
problem domain, and the more basic problem of length-leaking ciphertexts 
remains. In any practical solution, some leakage is unavoidable.
We show, however,
that typical approaches such as padding to the size of a block cipher
are fundamentally 
insufficient for efficiently hiding the plaintext length effectively,
especially for plaintexts that may vary in size by orders of magnitude.

We introduce \padname, a novel padding scheme designed for,
though not restricted to, encoding \purbs.
\padme reduces length leakage for a wide range of 
encrypted data types,
ensuring asymptotically lower leakage of $O(\log\log M)$,
rather than $O(\log M)$ 
for common stream- and block-cipher-encrypted data. \padname's 
space overhead is moderate,
always less than $12\%$ and decreasing with file size.
The intuition behind \padname is to pad objects to lengths representable
as limited-precision floating-point numbers.
A \padme length is constrained in particular
to have no more significant bits (\ie information)
in its mantissa than in its exponent.
This constraint  limits information leakage
to at most double that of conservatively padding to the next power of two,
while reducing overhead through logarithmically-increasing precision
for larger objects.


\com{\color{blue}
Finally, our contribution is both a class of padding schemes with overhead 
decreasing with file size, and one specific 
instance, \padname, which has $O(\log \log L)$ leakage and around 
$12\%$ 
overhead. Should this prove inadequate (\eg insufficient for a specific dataset 
of sensitive objects), it is trivial to select other instances of the class 
with different leakage/overhead trade-off, while maintaining the 
overhead properties of the class.}

Many defenses already exist for specific scenarios, \eg against
website fingerprinting~\cite{dyer12peek, wang17walkie}. \padname 
does not attempt to compete with tailored solutions in their domains.
Instead, \padme aims for a substantial increase
in application-independent length leakage protection
as a generic measure of security/privacy hygiene.

\subsection{Design Criterion}

We design \padname again using intermediate strawman approaches for clarity.
To compare these straightforward alternatives with our proposal,
we define a game where an adversary guesses the plaintext behind 
a padded encrypted blob. This game is inspired by related work
such as defending against a \emph{perfect attacker}~\cite{wang17walkie}.

\para{Padding Game.} 
Let $P$ denote a collection of plaintext objects of maximum length $M$:
\eg~data, documents, or application data units.
An honest user chooses a plaintext $p \in P$, then pads and encodes it into a \purb $c$.
The adversary knows almost everything: all possible plaintexts $P$, the \purb $c$ and the parameters used to generate it,
such as schemes and number of recipients.
The adversary lacks only the private inputs and decryption keys for $c$.
The adversary's goal is to guess the plaintext $p$ based on the observed \purb $c$ of length $|c|$.

\para{Design Goals.}
Our goal in designing the padding function is to manage both
space overhead from padding
and maximum information leaked to the adversary.

\subsection{Definitions}

\para{Overhead.}
Let $c$ be a padded ciphertext resulting from \purb-encoding plaintext $p$.
For simplicity we focus here purely on overhead incurred by padding,
by assuming an unrealistic, ``perfectly-efficient''
\purb encoding that (unlike \mspurb) incurs no space overhead
for encryption metadata.
We define the {\em additive overhead} of $|c|$ over $|p|$
to be $|c|-|p|$, the number of extra bytes added by padding.
The {\em multiplicative overhead} of padding is $\frac{|c|-|p|}{|p|}$,
the relative fraction by which $|c|$ expands $|p|$.

\para{Leakage.}
Let $P$ be a finite space of plaintexts of maximum length $M$.
Let $f:\N \to \N$ be a padding function that yields the padded size $|c|$
given a plaintext length $|p|$, for $p\in P$.
The image of $f$ is a set $R$ of padded lengths that $f$ can produce
from plaintexts $p \in P$.
\\
We quantify the leakage of padding function $f$
in terms of the number of elements in $R$.
More precisely, we define the leakage as the number of bits (amount of information entropy)
required to distinguish a unique element of $R$, which is $\ceil{\log_2 |R|}$. Intuitively, a function that pads everything to a constant size larger than all plaintexts (\eg $f(p) = 1$~Tb)
leaks no information to the adversary, because $|R|=1$ (and observing $|c|=1$~Tb leaks no information about the plaintext), 
whereas more fine-grained padding functions leak more bits.
\com{	not quite sure what this is saying and it sounds like a claim
	that probably doesn't need to be made, so let's not make it. -baf
This definition makes no assumptions about the distribution of $P$,
as we desire a generic padding function.
}

\subsection{Strawman Padding Approaches}

We first explore two strawman designs,
based on different padding functions $f$.
A padding function that offers any useful protection cannot be one-to-one,
otherwise the adversary could trivially invert it and recover $|p|$.
We also exclude randomized padding schemes for simplicity,
and because in practice adversaries can typically
cancel out and defeat random padding factors statistically
over many observations.
Therefore,
only padding functions that group many plaintext lengths
into fewer padded ciphertexts are of interest in our analysis.

\para{Strawman 1: Fixed-Size Blocks.}
We first consider a padding function $ f(L) = b \cdot \ceil{L/b} $,
where $b$ is a block size in bytes.
This is how objects often get ``padded'' by default in practice,
\eg in block ciphers or Tor cells.
In this case, the 
\purb's size is a multiple of $b$,
the maximum additive overhead incurred is $b-1$ bytes,
and the leakage is $\ceil{\log_2~M/b} = O(\log M)$,
where $M$ is the maximum plaintext size.

In practice, 
when plaintext sizes differ by orders of magnitude, there is no 
good value for $b$ that serves all plaintexts well. For 
instance, consider 
$b=1$\,MB. Padding small files and network messages would 
incur a large overhead: \eg padding Tor's $512$\,B cells
to $1$\,MB would incur overheads of $2000\times$. 
In contrast, padding a $700$\,MB movie with at most $1$\,MB of chaff 
would add only a little confusion to the adversary,
as this movie may still be readily distinguishable from others by length.
To reduce information leakage asymptotically
over a vast range of cleartext sizes,
therefore, padding must depend on plaintext size.

\para{Strawman 2: Padding to Powers of 2.}
The next step is to pad to varying-size blocks,
which is the basis for our actual scheme.
The intuition is that for small plaintexts, 
the blocks are small too, yielding modest overhead, whereas for larger 
files, blocks are larger and group more plaintext lengths together,
improving leakage asymptotically.
A simple approach is to pad plaintexts into buckets $b_i$
of size varying as a power of some base, \eg two,
so $b_i = 2^i$.
The padding function is thus $f(L) = 2^{\ceil{\log_2 L}}$.
We call this strawman \npot.

Because \npot pads plaintexts of maximum length $M$
into at most $\ceil{\log_2 M}$ buckets,
the image $R$ of $f$ contains only $O(\log M)$ elements.
This represents only $O(\log \log M)$ bits of entropy or information leakage,
a major asymptotic improvement over fixed-size blocks.

The maximum overhead is substantial, however,
almost +$100\%$: \eg a 17~GB Blu-Ray movie would be padded into 32~GB.

Using powers of another base $x>2$,
we reduce leakage further at a cost of more overhead:
\eg padding to the nearest 
power of $3$ incurs overhead up to +$200\%$,
with less leakage but still $O(\log\log M)$. 
We could reduce overhead by using a fractional base $1 < x < 2$,
but fractional exponents are cumbersome in practical padding functions
we would prefer to be simple and operate only on integers.
Although this second strawman succeeds in achieving
asymptotically lower leakage than padding to fixed-size blocks,
it is less attractive in practice
due to high overhead when $x \ge 2$
and due to computation complexity when $1 < x < 2$.

\subsection{\padname}
\label{subsec:padme}

We now describe our padding scheme \padname, which limits information 
leakage about the length of the plaintext for wide range of encrypted data 
sizes. Similarly to 
the previous strawman, \padname also asymptotically leaks $O(\log\log M)$ 
bits of information, but its overhead is much lower (at most $12\%$ and 
decreasing with $L$).


\para{Intuition.} In \npot, 
any permissible padded length $L$ has the form $L = 2^n$.
We can therefore represent $L$ as a binary floating-point number with
a $\floor*{\log_2 n}+1$-bit exponent and a mantissa of zero,
\ie no fractional bits.

In \padname, we similarly represent a permissible padded length
as a binary floating-point number, but we allow 
a non-zero mantissa at most as long as the exponent (see Figure 
\ref{table-padme-vs-power2}). This approach doubles the number of 
bits used to represent an allowed padded length --
hence doubling absolute leakage via length --
but allows for more fine-grained buckets, reducing overhead. 
\padname asymptotically leaks the same number of bits as \npot,
differing only by a constant factor of $2$,
but reduces space overhead
by almost $10\times$ (from +$100\%$ to +$12\%$). 
\com{	I still don't understand or believe this claim. -baf
More importantly, the buckets sizes now grow logarithmically with respect 
to $L$, instead of growing exponentially as in \npot. 
Thus, the overhead in percentage decreases with $L$.
}
More importantly, the multiplicative expansion overhead decreases with $L$
(see Figure \ref{fig:plot-padme-vs-pow2-percentage}).

\begin{figure}[tb]
	\centering
	\begin{bytefield}[boxformatting={\centering\small}, bitwidth=1em]{24}
		\bitbox{12}{$\floor*{\log_2 n}+1$-bit exponent} & \bitbox[l]{12}{0-bit mantissa} \\
	\end{bytefield}
	\vspace{-1cm}
	\caption*{In the strawman \npot, the allowed length $L = 2^n$ can 
		be represented as a binary floating-point number with a
		$\floor*{\log(n)+1}$ bits of exponent and
		no mantissa.}
	\vspace{0.5cm}
	
	\begin{bytefield}[boxformatting={\centering\small}, bitwidth=1em]{24}
		\bitbox{12}{$\floor*{\log_2 n}+1$-bit exponent} & \bitbox{12}{$\floor*{\log_2 n}+1$-bit 
			mantissa} \\
	\end{bytefield}
	\vspace{-1cm}
	\caption{\padname represents lengths as floating-point numbers,
		allowing the mantissa to be of at most 
		$\floor*{\log_2 n}+1$ bits.}
	\label{table-padme-vs-power2}
	\vspace{1cm}
\end{figure}

\com{
\para{Intuition.} Unlike the previous strawman design, where the bucket sizes 
$b_i$ were growing exponentially, in \padname the bucket sizes grow more slowly 
-- they are logarithmically-sized with respect to the plaintext size -- keeping 
the overhead low. This design yields the same asymptotic leakage but with a 
much lower overhead in practice. Figure \ref{fig:plot-padme-vs-pow2-percentage} 
shows the different bucket design between \padname and padding to the next 
power of two. \padname achieves this bucket design by fixing a number of 
low-order bits to $0$ on the padded length, effectively reducing the number of 
allowed lengths, and grouping ciphertexts in buckets. 
}

\begin{figure}[t]
	\vspace{-0.5cm}
	\centering
	\includegraphics[width=0.75\linewidth]{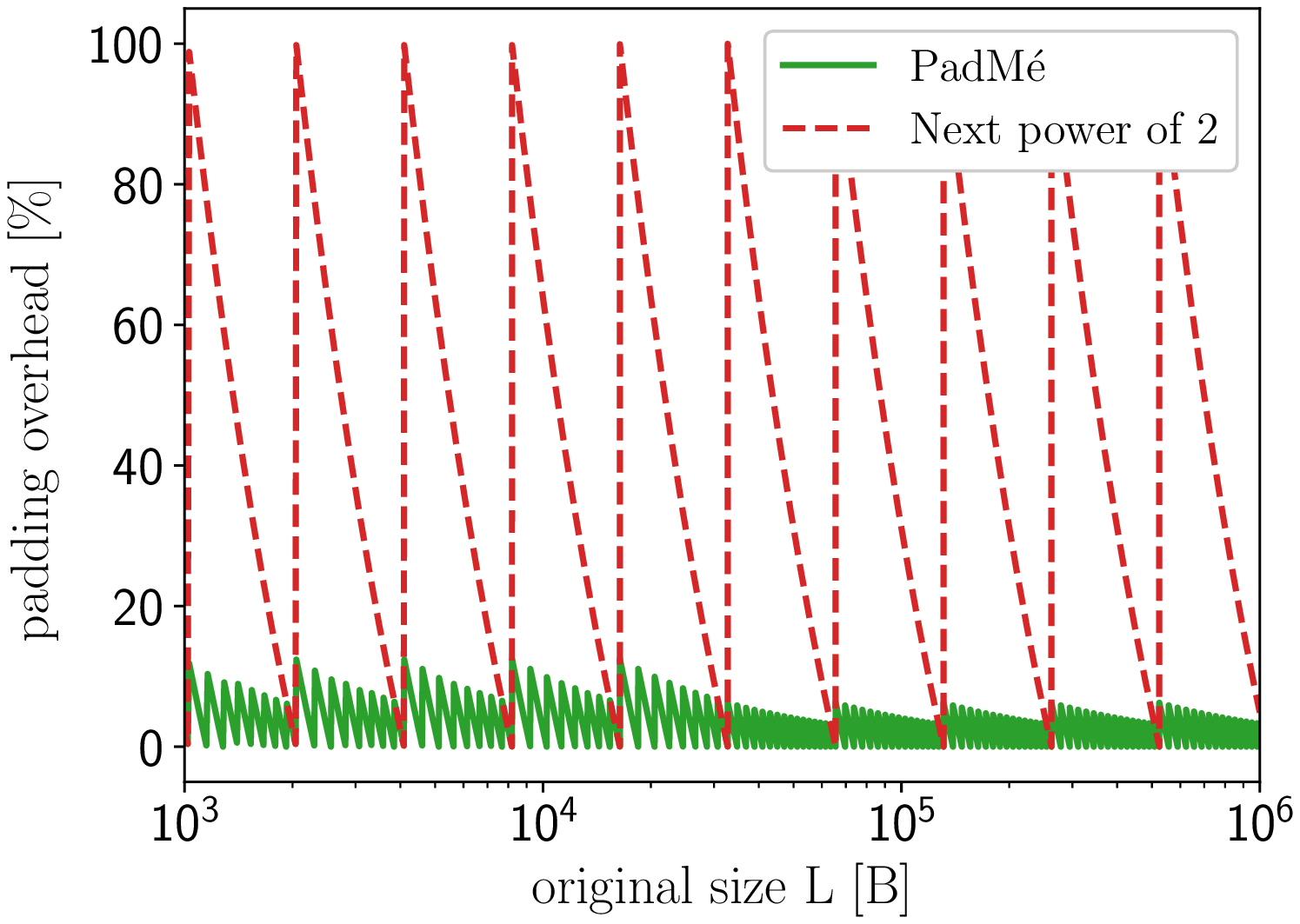}
	\vspace{-0.2cm}
	\caption{Maximum multiplicative expansion overhead
		with respect to the plaintext size $L$. The 
		naïve approach to pad to the next power of two has a constant 
		maximum overhead of $100\%$,
		whereas \padname's maximum overhead
		decreases with $L$, following $\frac{1}{2\log_2 L}$.}
	\label{fig:plot-padme-vs-pow2-percentage}
\end{figure}

\para{Algorithm.} To compute the padded size $L' = f(L)$,
ensuring that its 
floating-point representation fits in at most $2\times\floor*{\log_2 n}+1$ bits, 
we require the last $E-S$ bits of $L'$ to be $0$.
$E = \floor*{\log_2 L}$ is the 
value of the exponent, and $S = \floor*{\log_2 E}+1$ is the size of 
the exponent's binary 
representation. The reason for the substraction will become clear later. For 
now, we demonstrate how $E$ and $S$ are computed in 
Table~\ref{table-ieee}.
\begin{table}[tb]
	\centering
	
	\caption{The IEEE floating-point representations of $8$, $9$ and $10$. 
		The value $8$ has $1$ bit of mantissa (the initial 1 is omitted), and $2$ 
		bits of exponents; $9$ has a $3$-bits mantissa and a $2$-bit exponent, 
		while the 
		value $10$ as $2$ bits of mantissa and exponents. \padme enforces the 
		mantissa to be no longer than the exponent, hence $9$ gets rounded up 
		to the next permitted length $10$.}
	\label{table-ieee}
	
	\begin{tabular}{lllllrll}
		L & L & E & S & & \multicolumn{3}{l}{IEEE representation} \\ 
		\hline
		8 & {\normalfont 0b}1000 & 3 & 2 & & {\normalfont 0b}1.0 &* 
		2\textasciicircum {\normalfont 0b}11 & \\ 
		9 & {\normalfont 0b}1001 & 3 & 2 & & {\normalfont 0b}1.001 &* 
		2\textasciicircum {\normalfont 0b}11 & \\ 
		10 & {\normalfont 0b}1010 & 3 & 2 & & {\normalfont 0b}1.01 &* 
		2\textasciicircum {\normalfont 0b}11 & \\ 
	\end{tabular}
\end{table} 

Recall that \padname requires the mantissa's bit length to be no 
longer than that of the exponent. In Table \ref{table-ieee}, for the 
value $L=9$ the mantissa is longer than the exponent: it is 
``too precise'' and therefore not a permitted padded length. 
The value $10$ is permitted, however, so a $9$ byte-long ciphertext is
padded to $10$ bytes.

To understand why \padname requires the low $E-S$ bits to be $0$, notice that
forcing all the last $E$ bits to $0$ is equivalent to padding to
a power of two. In comparison, \padname allows $S$ extra bits to represent 
the padded size, with $S$ defined as the bit length of the exponent.

Algorithm~\ref{algo:padme} specifies the \padme function precisely.
\com{	This appears to be no longer applicable,
	and  redundant since LAYOUT now generates random padding. -baf
Once the padded size $L'$ is computed, a 
\purb plaintext of 
length $L$ is simply 
padded with $L'-L~0$'s (to be 
precise, we suggest following the compact padding scheme ISO/IEC 
7816-4:2005\footnote{\href{https://www.iso.org/standard/36134.html} 
{https://www.iso.org/standard/36134.html}}).
}

\newcommand\mycommfont[1]{\footnotesize\ttfamily\textcolor{blue}{#1}}
\SetCommentSty{mycommfont}

\begin{algorithm}
	\SetKwRepeat{Do}{do}{while}
	\DontPrintSemicolon
	\SetAlgoLined
	\KwData{length of content $L$}
	\KwResult{length of padded content $L'$}
	$E \gets \floor*{\log_2 L}$ \tcp*{$L$'s floating-point exponent}
	$S \gets \floor*{\log_2 E}+1$ \tcp*{\# of bits to represent $E$}
	$z \gets E - S$ \tcp*{\# of low bits to set to 0}
	$ m \gets (1 \ll z) - 1$ \tcp*{mask of $z$ 1's in LSB}
	 \tcp*{round up using mask $m$ to clear last $z$ bits}
	$L' \gets (L + m)~ \&~ {\raise.17ex\hbox{$\scriptstyle\sim$}}m$
	\caption{\padname}
	\label{algo:padme}
\end{algorithm}

\para{Leakage and Overhead.}
By design,
if the maximum plaintext size is $M$,
\padme's leakage is $O(\log\log M)$ bits,
the length of the binary representation of the largest plaintext.
As we fix $E-S$ bits to $0$ and round up, the 
maximum overhead is $2^{E-S}-1$.
We can estimate the maximum multiplicative overhead as follows:
\begin{equation}\label{eq3}
\begin{aligned}[b]
\text{max overhead} & = \frac{2^{E-S}-1}{L}
 < \frac{2^{E-S}}{L} \\
& \approx \frac{2^{\floor*{\log_2 L}-\floor*{\log_2\log_2 L}-1}}{L} \\
& \approx \frac{1}{2\cdot2^{\log_2\log_2 L}} \\
& = \frac{1}{2\log_2 L}
\end{aligned}
\end{equation}


Thus, \padname's maximum multiplicative overhead decreases with respect to the 
file size $L$. The maximum overhead is +$11.\overline{11}\%$, when padding a $9$-byte file into $10$ bytes. For bigger files, the overhead is smaller.

\para{On Optimality.} There is no clear sweet spot on the 
leakage-to-overhead 
curve. We could easily force the last $\frac{1}{2}(E-S)$ bits to be 
$0$ 
instead of the last $E-S$ bits, for example, to 
reduce overhead and increase leakage. Still, what matters in 
practice is the relationship between $L$ and the overhead.
We show in \S\ref{sec:padme-eval} how this choice performs
with various real-world datasets.

\color{black}

\com{

\para{Design of the padding class.} The core idea behind 
\padname is to pad objects into logarithmically-sized groups, in a way such 
that the overhead depends on $L$. This can be achieved easily by zero'ing 
the last bits of the ciphertext's length (and rounding up). Consider an 
unpadded ciphertext, and its 
length $L$. This ciphertext leaks $\log(L)$ bits, which is also the number of 
bits necessary to represent $L$. Now, our padding scheme 
enforces the last $f(L)$ bits of the binary representation of $L$ to be $0$. 
That is, the padded length $L'$ is set to the smallest integer such that $L' 
\ge L$, and such that the last $f(L)$ bits of $L'$ are $0$.

Since those last $f(L)$ bits are fixed, they carry no information, and the 
information leakage becomes $\log(L)-f(L)$ bits; put in another way, an 
attacker observing a padded ciphertext of length $L'$ only learns 
$\log(L)-f(L)$ bits of information, instead of $\log(L)$ in the unpadded 
case.

Now, if we define $f(L)=\log(L)-f'(L)$; the leakage asymptotically 
becomes

\begin{equation}\label{eq1}
\begin{split}
\text{leakage} & = \log(L)-f(L) \\
& = \log(L)-\ceil{\log(L)-f'(L)} \\
& \approx f'(L)~ \text{[bits]}
\end{split}
\end{equation}

for our choice of a function $f'(L)$. Since we use $f(L)$ to decide on a 
number of bits to zero'ed out, the output needs to be integer; hence the 
leakage is only asymptotically equal to $f'(L)$ in Equation \ref{eq1}.

A more synthetic description is that we restrict the \emph{mantissa} of $B$ 
to be of at most $f'(L)$ bits.


Since the last $f(L)$ bits are zero'ed out, the bucket sizes follow $2^{f(L)}$, 
and asymptotically the maximum overhead in percentage of $L$ is:
\begin{equation}\label{eq2}
\begin{split}
\text{max overhead} & = \frac{B-L}{L}\\
& = \frac{L+2^{f(L)}-L}{L} \\
& = \frac{2^{\ceil{\log_2(L)-f'(L)}}}{L} \\
& \approx \frac{L\cdot 2^{f'(L)}}{L} \\
& \approx \frac{1}{2^{f'(L)}}~\text[\%] \\
\end{split}
\end{equation}

\para{Choosing $f'$.} For \padname, we select $$f'(L) = \log_2\log_2(L)$$

\noindent which yields the following leakage/overhead trade-off:

\begin{equation*}
\setlength{\arraycolsep}{0pt}
\renewcommand{\arraystretch}{1.2}
\left\{\begin{array}{l @{\quad} l @{\quad} r}
\text{Leakage:} & O(\log_2\log_2(L)) & 
\text{[bits]} \\
\text{Overhead:} & \frac{1}{\log_2(L)} & \text{[\%]}
\end{array}\right.
\end{equation*}

We note that there is 
no sweet spot on the leakage/overhead curve, and one can pick any $f'$ 
with a different trade-off; especially, if the set of plaintexts $\{p\}$ is 
defined, a better $f'$ certainly exists. In our general case, the distribution of 
plaintexts $\{p\}$ is unknown. Still, the relationship that matters in practice 
is between $L$ and the overhead. Selecting $f'(L)=\log_2\log_2(L)$ gives an 
overhead of $1/\log_2(L)$. Hence, the overhead in 
percentage slowly 
decreases with $L$ (see 
Figure \ref{fig:plot-logL-over-L}). We show in \S\ref{sec:padme-eval} how 
this 
choice performs with various datasets.	

}
\section{Evaluation}
\label{sec:eval}

Our evaluation is two-fold. First, we show the performance and overhead of 
the \purb encoding and decoding.  Second, using several datasets, we 
show how \padname facilitates hiding information about data length.

\subsection{Implementation}

We implemented a prototype of the~\purb
encoding and padding schemes in Go. The implementation follows 
the algorithms in \S\ref{sec:purb-complete}, and it consists 
of $2$\,kLOC. 
Our implementation relies on the open-source Kyber 
library\footnote{\href{https://github.com/dedis/kyber}{https://github.com/dedis/kyber}}
for cryptographic operations.
The code is designed to be easy to integrate with existing applications.
The code is still proof-of-concept, however, and has 
not yet gone through rigorous analysis and hardening,
in particular against timing 
attacks.

\para{Reproducibility.} All the datasets, the source code for \purbs and 
\padname, as well as scripts for reproducing all experiments, are available 
in the main repository\footnote{\label{footnote:purbsurl}\purbsurl}.

\subsection{Performance of the \purb Encoding}

The main question we answer in the evaluation of the encoding 
scheme is whether it has a reasonable cost, in terms 
of both time and space overhead, and whether it scales gracefully with an 
increasing number of recipients and/or cipher suites.
First, we measure the average CPU time required to encode and decode a \purb. Then, we compare the decoding 
performance with the performance of plain and anonymized OpenPGP 
schemes described below. Finally, we show how the compactness of the 
header changes with multiple recipients and suites, as a percentage of 
useful bits in the header.

\para{Anonymized PGP.} In standard PGP, the identity---more precisely, 
the public key ID---of the recipient is embedded in the header of the 
encrypted blob. This plaintext marker speeds up decryption, but enables a 
third party to enumerate all data recipients. In the so-called 
anonymized or ``hidden'' version of PGP~\cite[Section~5.1]{rfc4880}, this 
key ID is 
substituted with zeros. In this 
case, the recipient sequentially tries the encrypted entries of the header 
with her keys. We use the hidden PGP variant as a comparison 
for \purbs, which also does not indicate key IDs in the header but uses a 
more efficient structure. The hidden PGP variant still leaks the 
cipher suites used, the total length, and other plaintext markers (version 
number, etc.).

\subsubsection{Methodology}

We ran the encoding experiments on a consumer-grade laptop, 
with a quad-core 2.2 GHz Intel Core i7 processor and 16\,GB of RAM, 
using Go 1.12.5. 
To compare with an OpenPGP implementation, we use and modify 
Keybase's 
fork\footnote{\href{https://github.com/keybase/go-crypto}{https://github.com/keybase/go-crypto}}
 of the default Golang crypto 
library\footnote{\href{https://github.com/golang/crypto}{https://github.com/golang/crypto}},
 as the fork adds support for the ECDH scheme on Curve25519.

We further modify Keybase's implementation to add the support for the 
anonymized OpenPGP scheme. All the encoding 
experiments use a \purb suite based on the Curve25519 elliptic-curve group, 
{AES128\nobreakdash-GCM} for 
entry point encryption and SHA256 for hashing. We also apply the 
global nonce optimization, as discussed in \S\ref{sec:practical}.
For experiments needing more than one suite,
we use copies the above suite to ensure 
homogeneity across timing experiments.
The payload size in each experiment is $1$\,KB.
For each data point, we generate a new set of keys, one per recipient.
We measure each data point 20 times, using fresh randomness each time, 
and depict the median value and the standard deviation.

\subsubsection{Results}
\label{sec:evalperform}

\begin{figure*}[h]
	\centering
	\begin{subfigure}[t]{.49\linewidth}
		\includegraphics[width=\textwidth]{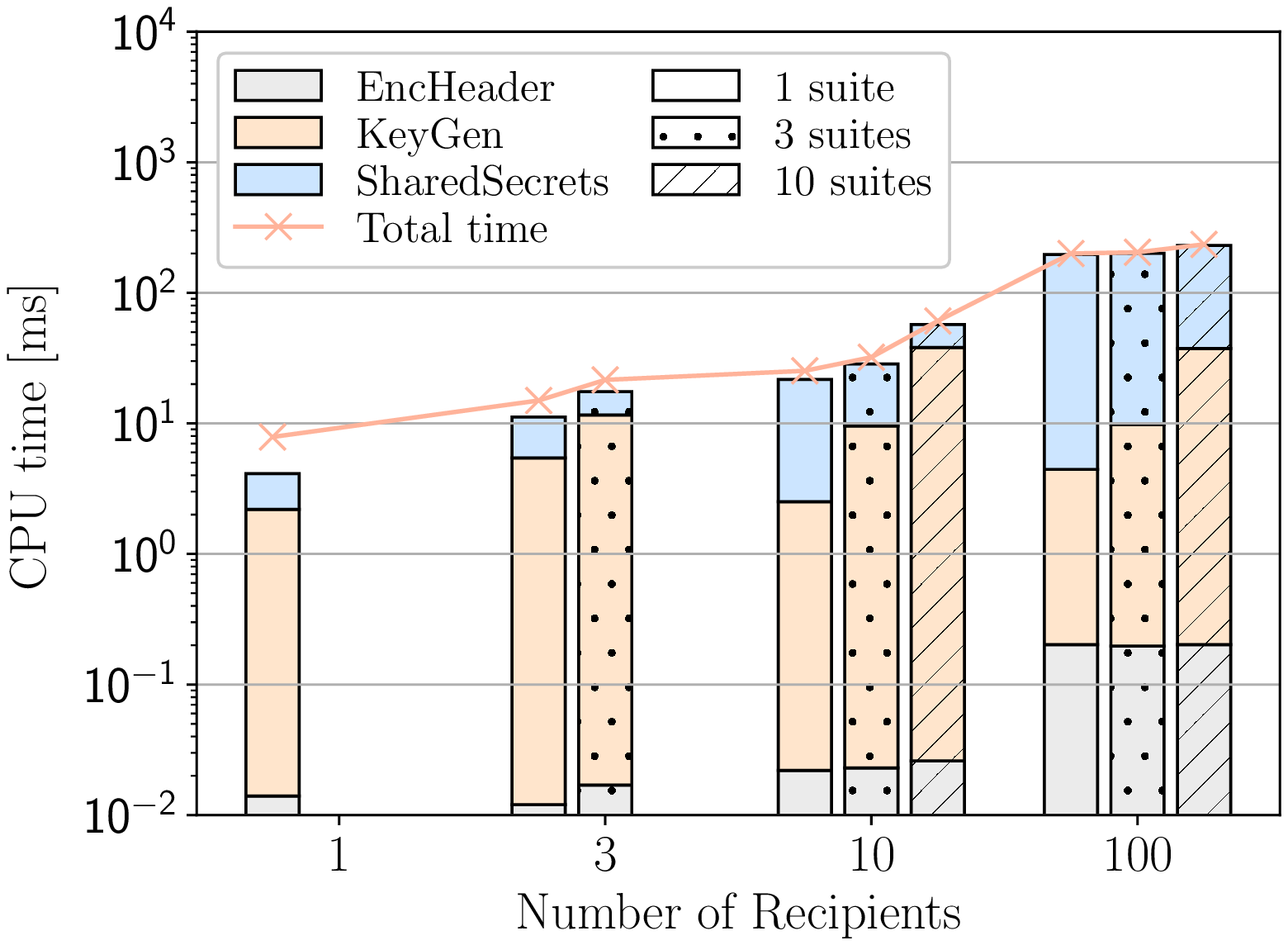}
		\caption{The CPU cost of encoding a \purb given the 
		number of recipients and of cipher suites. EncHeader: 
		encryption of entry points; KeyGen: generation and 
		hiding of public keys; SharedSecrets: computation of shared secrets.}
		\label{fig:enctime}
	\end{subfigure}\hfill
	\begin{subfigure}[t]{.49\linewidth}
		\includegraphics[width=\textwidth]{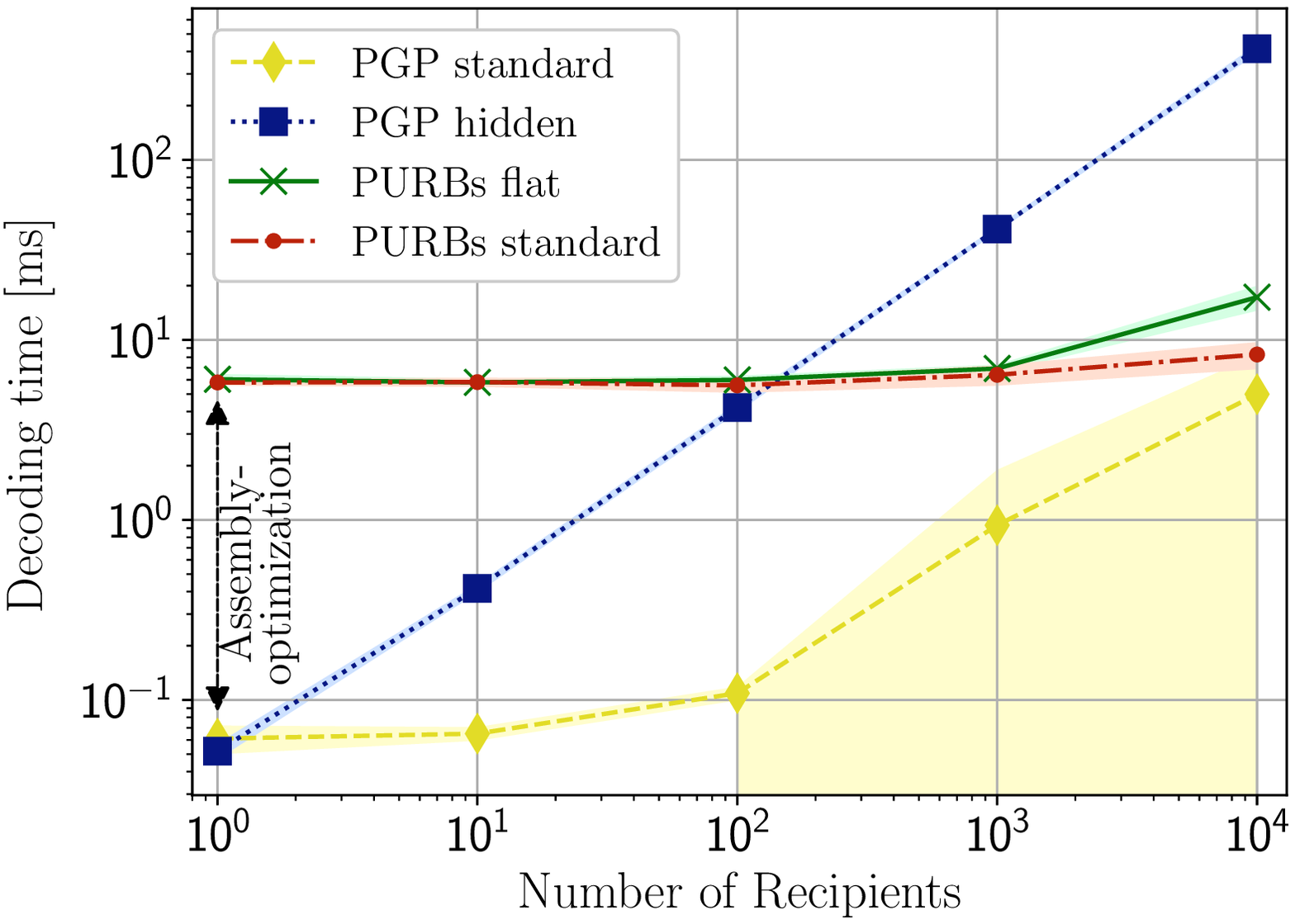}
		\caption{The worst-case CPU cost of decoding for PGP, PGP 
			with hidden recipients, PURBs without hash tables (flat), and 
			standard PURBs.}
		\label{fig:decode}
	\end{subfigure}%
	\captionsetup{justification=centering}
	\caption{Performance of the \purbs encoding.}
\end{figure*}

\begin{figure}[h]
	\includegraphics[width=\linewidth]{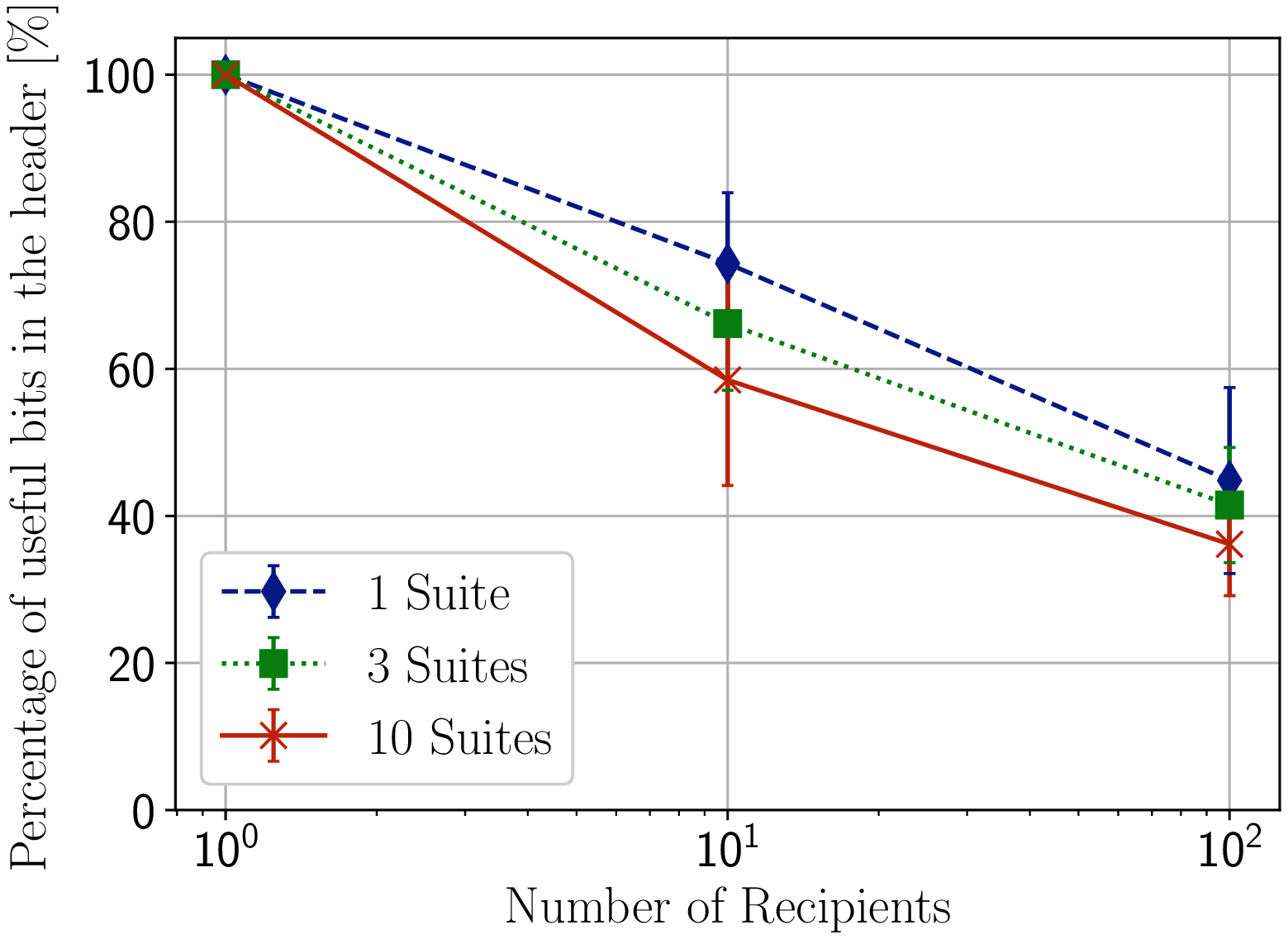}
	\caption{Compactness of the PURB header (\% of 
	non-random bits).}
	\label{fig:compactness}
\end{figure}

\para{Encoding Performance.} In this experiment, we first evaluate how the 
time required to encode a \purb changes with a growing number of 
recipients and cipher suites, and second, how the main computational 
components contribute to this duration. 
We divide the total encoding time into three components. The first is 
authenticated encryption of entry points.
The second is the generation and Elligator encoding of sender's public 
keys, one per suite. 
A public key is derived by multiplying a base point with a freshly generated 
private key (scalar). If the resultant public key is not encodable, which 
happens in half of the cases, a new key is generated. 
Point multiplication dominates this component, constituting 
$\approx 90\%$ of the total time. 
The third is the derivation of a shared secret with each recipient, 
essentially a single point-multiplication per recipient.
Other significant components of the total encoding duration are payload 
encryption, MAC computation and layout composition.
We consider cases using one, three or ten cipher suites. When more 
than one cipher suite is used, the recipients are equally divided among
them.

Figure~\ref{fig:enctime} shows that in the case of a single recipient, the 
generation of a public key and the computation of a shared secret 
dominate the total time and both take $\approx 2$\,ms. 
As expected, computing shared secrets starts dominating the total time 
when the number of recipients grows, whereas the duration of the 
public-key generation only depends on a number of cipher suites used. 
The encoding is arguably efficient for most cases of communication, as 
even with hundred recipients and ten suites, the time for creating a \purb 
is $235$\,ms. 

\para{Decoding Performance.} We measure the worst-case CPU time 
required to decipher a standard PGP message, a PGP message with 
hidden recipients, a \emph{flat} \purb that has a flat layout of entry points 
without hash tables, and a standard \purb. We use the Curve25519 suite
in all the PGP and \purb schemes.

Figure~\ref{fig:decode} shows the results. The~OpenPGP library uses 
the assembly-optimized Go elliptic library for point multiplication, hence 
the multiplication takes $\approx 0.05$--$0.1$\,ms there, while it 
takes $\approx 2$--$3$\,ms in Kyber. 
This results in a~significant difference in 
absolute values for small numbers of recipients. But our primary interest 
is the dynamics of total duration. The time increase for 
anonymous PGP is linear because, in the worst case, a decoder has to 
derive as many shared secrets as there are recipients.
\purbs in contrast exhibit almost constant time,
requiring only a single multiplication 
regardless of the number of recipients. A~decoder still has to 
perform multiple entry-point trial decryptions, but one such operation would 
account for only $\approx 0.3\%$ of the total time in the 
single-recipient, single-suite scenario. The advantage of 
using hash tables, and hence logarithmically less symmetric-key 
operations, is illustrated by the difference between \purbs standard
and \purbs \textit{flat}, which is noticeable after $100$ recipients and will 
become more pronounced if point multiplication is optimized.

\para{Header Compactness.} 
Compared with placing the header elements linearly,
our expanding hash table design is less 
compact, but enables more efficient decoding.
Figure~\ref{fig:decode} shows
an example of this trade-off,
PGP hidden versus \purbs standard.

In Figure~\ref{fig:compactness}, we show the compactness, or the 
percentage of the \purb header that is filled with actual data, with respect 
to the number of recipients and cipher suites. Not surprisingly, an 
increasing number of recipients and/or suites increases the collisions and 
reduces compactness: $45\%$ for $100$ recipients and $1$ suite, $36\%$ 
for $100$ recipients and $10$ suites. 
In the most common case of 
having one recipient in one suite, however,
the header is perfectly compact. Finally, 
there is a trade-off between compactness and efficient decryption.
We can easily increase compactness
by resolving entry point hash table collisions linearly,
instead of directly moving to the next hash table. The 
downside is that the recipient has more entry points to try.

\subsection{Performance of \padname Padding}
\label{sec:padme-eval}

In evaluating a padding scheme, one important metric is  
overhead incurred in terms of bits added to the plaintexts. By design, 
\padname's overhead is bounded by $\frac{1}{2\cdot\log_2 L}$\com{  (see 
Figure~\ref{fig:plot-logL-over-L}) }. As discussed in 
\S\ref{subsec:padme}, 
\padname does not escape the typical overhead-to-leakage trade-off, 
hence \padname's novelty does not lie in this tradeoff.
Rather, the novelty lies in the practical relation 
between $L$ and the overhead. \padname's overhead is moderate,
at most +$12\%$ and much less for large \purbs.

A more interesting question is how effectively, given an arbitrary collection 
of plaintexts $P$, \padname hides which plaintext is 
padded. \padname was designed to work with an arbritrary collection of 
plaintexts $P$. It remains to be seen how \padname performs when 
applied to a specific set of plaintexts $P$, \ie with a distribution coming 
from the real world, and to establish how well it groups files into sets of 
identical length. In the next section, we experiment with four datasets 
made of various objects: a collection of Ubuntu packages, a set of 
YouTube videos, a set of user files, and a set of Alexa Top $1$M websites.

\subsubsection{Datasets and Methodology}

\begin{table}
\centering
\caption{Datasets used in the evaluation of anonymity provided by \padname.}
\vspace{-0.3cm}
\begin{tabular}{ll}
	Dataset & \# of objects \\ 
	\hline 
	\normalfont Ubuntu packages & \normalfont 56,517 \\ 
	\normalfont YouTube videos & \normalfont 191,250 \\ 
	\normalfont File collections & \normalfont 3,027,460 \\ 
	\normalfont Alexa top 1M Websites & \normalfont 2,627  \\ 
\end{tabular} 
\label{table:datasets}
\end{table}

The Ubuntu dataset contains $56{,}517$ unique packages, parsed from 
the 
official repository of a live Ubuntu $16.04$ instance. As packages can 
be referenced in multiple repositories, we filtered the list by name and 
architecture. The reason for padding Ubuntu software updates is that the 
knowledge of updates enables a local eavesdropper to build a list of 
packages and their versions that are installed on a machine. If some of 
the~packages are outdated and have known vulnerabilities, an adversary 
might 
use it as an attack vector. A percentage of software updates 
still occurs over un-encrypted connections, which is still an issue; 
but encrypted connections to software-update repositories also expose 
which distribution and the kind of update being done (security / 
restricted\footnote{Contains proprietary software and drivers.} / 
multiverse\footnote{Contains software restricted by copyright.} / etc). We 
hope that this unnecessary leakage will disappear in the near future.

The YouTube dataset contains $191{,}250$ unique videos, obtained by 
iteratively querying the YouTube API. One semantic video is generally 
represented by $2-5$ .webm files, which corresponds to various video 
qualities. Hence, each object in the dataset is a unique (video, quality) pair. 
We use this dataset as if the videos were downloaded in bulk rather than 
streamed; that is, we pad the video as a single file.  The argument for 
padding YouTube videos as whole files is that, as shown by related work  
\cite{reed2016leaky, schuster2017beauty, reed2017identifying}, 
variable-bitrate encoding combined with streaming leak which video is 
being watched. If YouTube wanted to protect the privacy of its users, it 
could re-encode everything to constant-bitrate encoding and still stream it, 
but then the total length of the stream would still leak information. 
Alternatively, it could adopt a model similar to that of the iTunes store, where 
videos have variable bit-rate but are bulk-downloaded; but
again, the total downloaded length would leak information, requiring some 
padding. Hence, we explore how unique the YouTube videos are by length
with and without padding.

The files dataset was constituted by collecting the~file sizes in 
the home directories (`\verb|~user/|')
of $10$ co-workers and contains $3{,}027{,}460$ of 
both personal files 
and configuration files. These files were collected 
on machines running Fedora, Arch, and Mac~OS~X. The argument 
for analyzing the uniqueness of those files is not to encrypt each file 
individually -- there is no point in hiding the metadata of a file if the 
file's location exposes everything about it, e.g. `\verb|~user/.ssh|' --
but rather to quantify the privacy gain when padding those objects.

Finally, the Alexa dataset is made of $2{,}627$\com{sizes corresponding 
to whole websites taken in order} websites from the Alexa Top 1M list. The 
size of each website is the sum of all the resources loaded by the 
webpage, which has been recorded by piloting a `chrome-headless' 
instance with a script, mimicking real browsing. 
One reason for padding whole websites -- as opposed to padding 
individual resources -- is that related work in website fingerprinting showed 
the importance of the total downloaded size~\cite{dyer12peek}. The 
effectiveness of \padname when padding individual resources, or for 
instance bursts~\cite{wang17walkie}, is left as interesting future 
work.\com{; we focus 
here on showing the uniqueness of the websites with respect to their total 
size.}

\subsubsection{Evaluation of \padname}

The distribution of the objects sizes for all the datasets is shown in 
Figure~\ref{fig:fig3-datasets-cdf-size}. Intuitively, it is harder for an 
efficient padding scheme to build groups of same-sized files when there 
are large objects in the dataset. Therefore, we expect the last $5\%$ to 
$10\%$ of the four datasets to remain somewhat unique, even after padding.

\begin{figure}[t]
	\vspace{-0.5cm}
	\centering
	\includegraphics[width=0.9\linewidth]{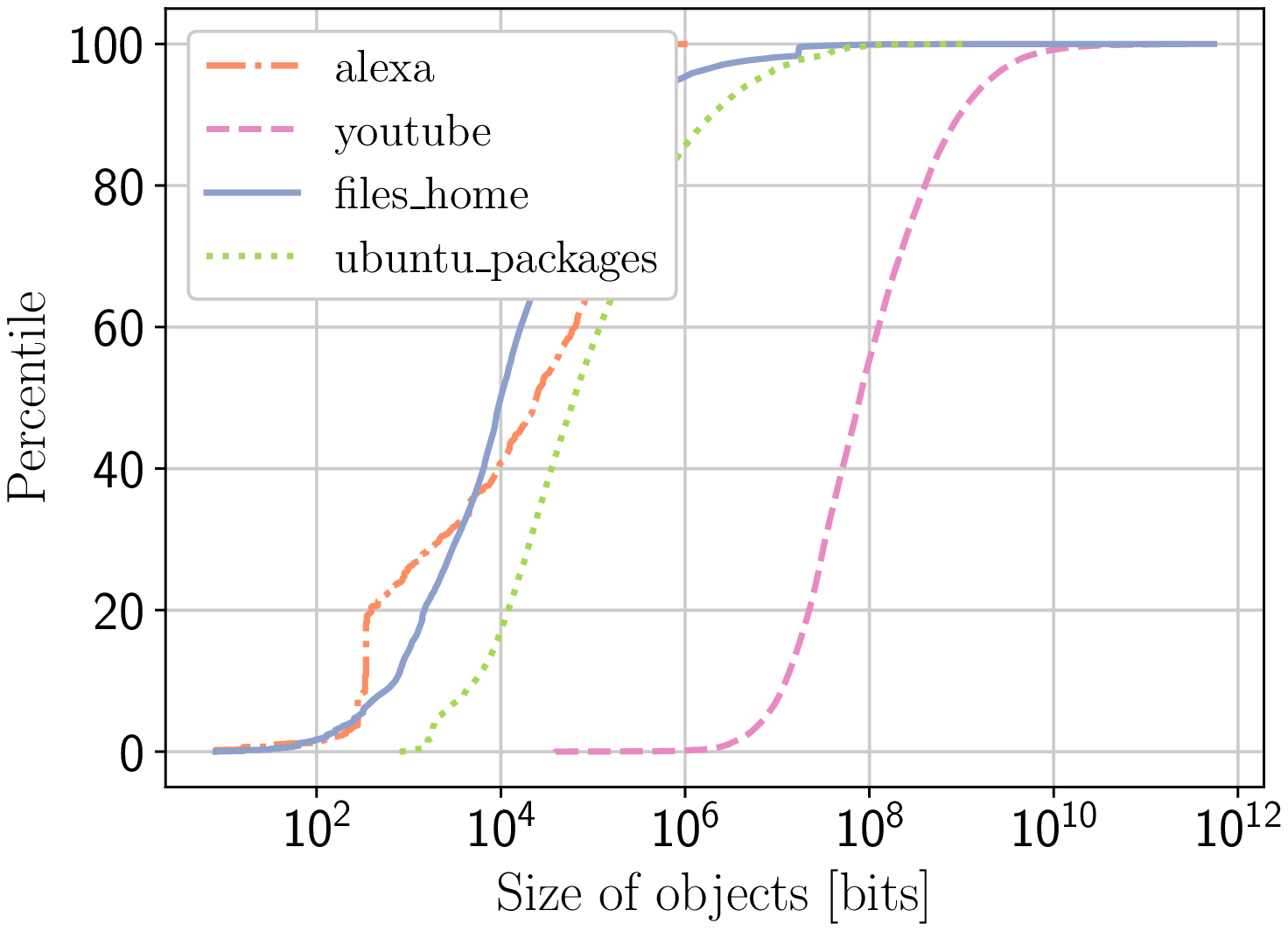}
	\vspace{-0.5cm}\caption{Distribution of the sizes of the objects in each dataset.}
	\label{fig:fig3-datasets-cdf-size}
\end{figure}

For each dataset, we analyze the 
anonymity set size of each object. To compute this metric, we group 
objects by their size, and report the distribution of the sizes of these 
groups. A large number of small groups indicate that many objects 
are easily identifiable. For each dataset, we 
compare three different approaches: the \npot strawman, \padname,
and padding to a fixed block size of 512B, like a Tor cell.
The anonymity metrics are 
shown in Figure~\ref{fig:padme-datasets}, and the respective overheads are 
shown in Table~\ref{table:padme-datasets-overhead}.

For all these 
datasets, despite containing very different objects, a large percentage of objects have a unique size: $87\%$ in the case of YouTube video (Figure~\ref{fig:padme-youtube}), $45\%$ in the case of files (Figure~\ref{fig:padme-files}), $83\%$ in the case of Ubuntu packages (Figure~\ref{fig:padme-ubuntu}),
and $	68\%$ in the case of Websites Figure~\ref{fig:padme-alexa}). 
These characteristics persist in traditional block-cipher encryption
(blue dashed curves) where objects are padded only to a block size. Even 
after being padded to $512$ bytes, the size of a Tor cell, most object sizes 
remain as unique as in the unpadded case. We observe similar results 
when padding to $256$ bits, the typical block size for AES (not plotted).

\npot (red dotted curves) provides the best anonymity: in the YouTube and 
Ubuntu datasets (Figures~\ref{fig:padme-youtube} and \ref{fig:padme-ubuntu}), there is no single object that remains unique with 
respect to its size; all belong to groups of at least $10$ objects.
We cannot generalize this statement, of course,
as shown by the other two datasets 
(Figures~\ref{fig:padme-files} and \ref{fig:padme-alexa}). 
In general, we see a massive improvement with respect to the unpadded 
case. Recall that this padding scheme is impractically costly,
adding +$100\%$ to the size in the worst case and +$50\%$ in mean. In Table~\ref{table:padme-datasets-overhead}, we see that the mean overhead is of +$45\%$.

Finally, we see the anonymity provided by \padname (green solid 
curves). By design, \padname has an 
acceptable maximum overhead (maximum +$12\%$ and decreasing). 
In three of the four datasets, there is a constant difference 
between our expensive reference point \npot 
and \padname; despite having a decreasing overhead with respect to 
$L$, unlike \npot. This means that although larger files have 
proportionally less protection (\ie less padding in percentage) with 
\padname, this is not critical, as these files are more rare and are harder to 
protect efficiently, even with a na\"ive and costly approach. 
When we observe the percentage of uniquely identifiable objects (objects that 
trivially reveal their plaintext given our perfect adversary), we see a 
significant drop by using \padname: from $83$\% to $3$\% for the Ubuntu
dataset, from $87\%$ to $3\%$ for the Youtube dataset, from $45\%$ to 
$8\%$ for the files dataset and from $68$\% to $6$\% for the Alexa 
dataset. In Table~\ref{table:padme-datasets-overhead}, we see that the 
mean overhead of \padname is around $3\%$, more than an order of 
magnitude smaller than \npot. We also see how using a fixed block size 
can yield high overhead in percentage, in addition to insufficient protection.

\com{
{\color{blue}
Given those graphs, our previous choice of $f'(x)=\log_2\log_2(L)$ made in 
\S\ref{subsec:padme} should become clearer: it is a middle point 
between the undesirable unpadded situation, and the too costly 
next-power-of-two approach. Unsurprisingly, \padname's maximum 
overhead is roughly one order of magnitude lower than Next power of two 
(+$10\%$ vs +$100\%$), and the anonymity provided is roughly one order 
of magnitude lower too, as visible in the first three graphs of Figure 
\ref{fig:padme-datasets}. As mentioned before, \padname does not escape 
the typical leakage/overhead trade-off. We can consider \padname as a 
class of padding schemes, defined by $f'(x)$. We simply argue that our 
choice of $f'(x)$ has acceptable overhead in all cases, unlike na\"ive 
approaches, and provides a basic level of protection to a broad class of 
objects. Should future research show that this level of protection is 
insufficient, or on the contrary that the overhead is too large, it suffices to 
select $\hat{f}=2\cdot f(x)$ or $\hat{f}=f'(x)/2$ to obtain a scheme whose 
overhead still behaves nicely with respect to the file size, but with a 
different overhead/leakage trade-off.
} \kirill{I am not sure whether the paragraph above is needed. Maybe we 
could keep it but shorten} \\
}

\com{

\begin{figure}[t]
	\vspace{-0.5cm}
	\centering
	\includegraphics[width=0.9\linewidth]{fig5-2-alexa-sizes-vs-anon-percentage}
	\vspace{-0.5cm}\caption{Analysis of the uniqueness of objects when 
		unpadded, padded with \padname or with ``next power of 2''.}
	\label{fig:fig5-2-alexa-sizes-vs-anon-percentage}
\end{figure}

\para{Anonymity w.r.t file size.} We also analyze the degree of anonymity 
provided by the various approaches for specific files sizes. For this, we first 
group objects by sizes (\ie all objects between $10^1$ and $10^2$ bits), and 
analyze their uniqueness within their size group (see 
Figure \ref{fig:fig5-2-alexa-sizes-vs-anon-percentage}). We observe that 
\padname performance is unequal: between $10^1$ and $10^2$, objects remains as 
unique as in the unpadded case; this also happens for the $10^5$ to $10^6$ 
range. We see a clear improvement in the middle of the size range, between 
$10^2$ and $10^5$ bits, where almost all objects become indistinguishable with 
at least one other object.
}

\begin{figure}
	\centering
	\begin{subfigure}[t]{.4\textwidth}
		\vspace{-0.2cm}\caption{Dataset `YouTube':}
		\vspace{-0.2cm}\includegraphics[width=\textwidth]{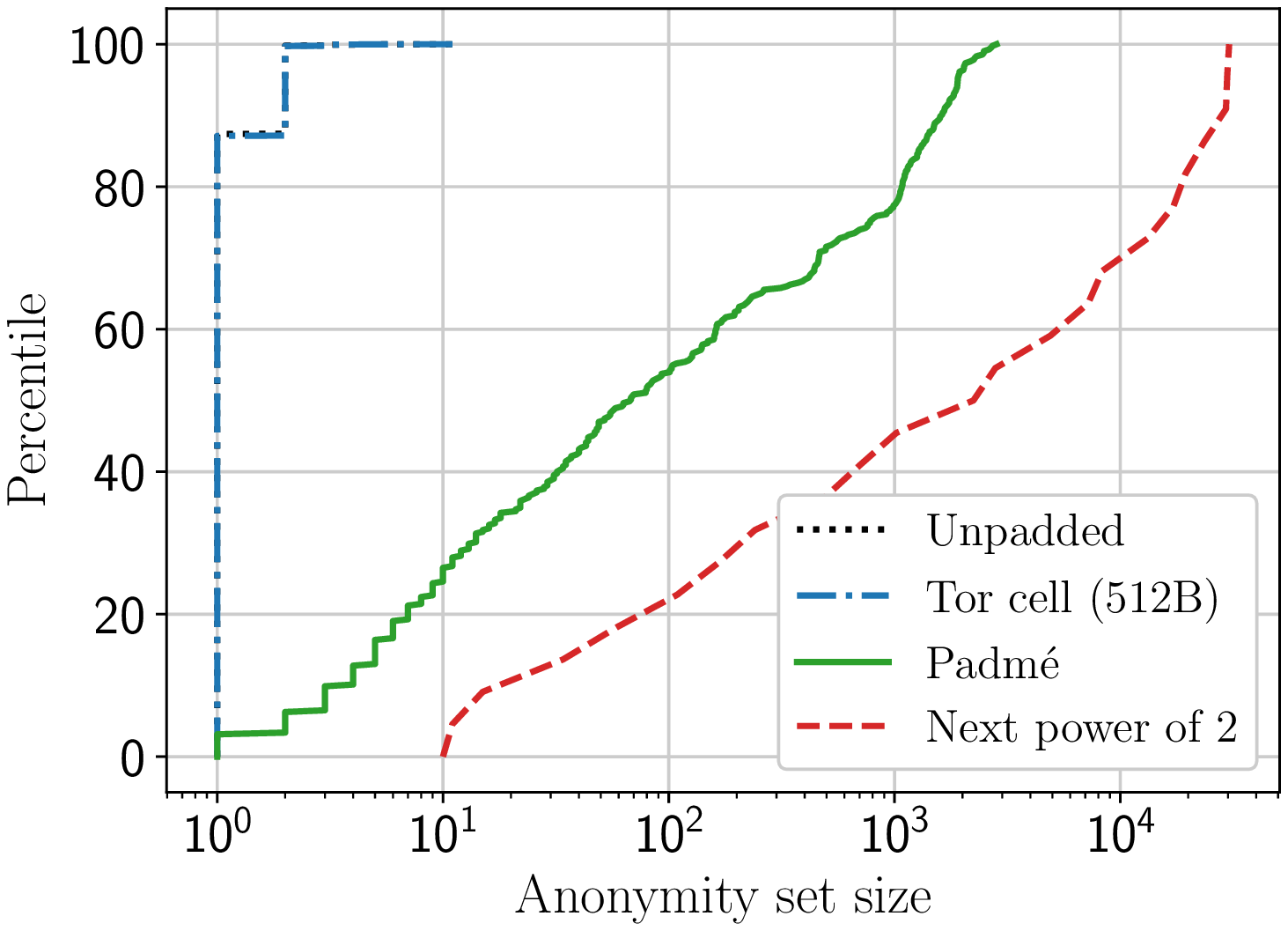}
		\label{fig:padme-youtube}
	\end{subfigure}\hfill
	\begin{subfigure}[t]{.4\textwidth}
		\vspace{-0.6cm}\caption{Dataset `files':}
		\vspace{-0.2cm}\includegraphics[width=\textwidth]{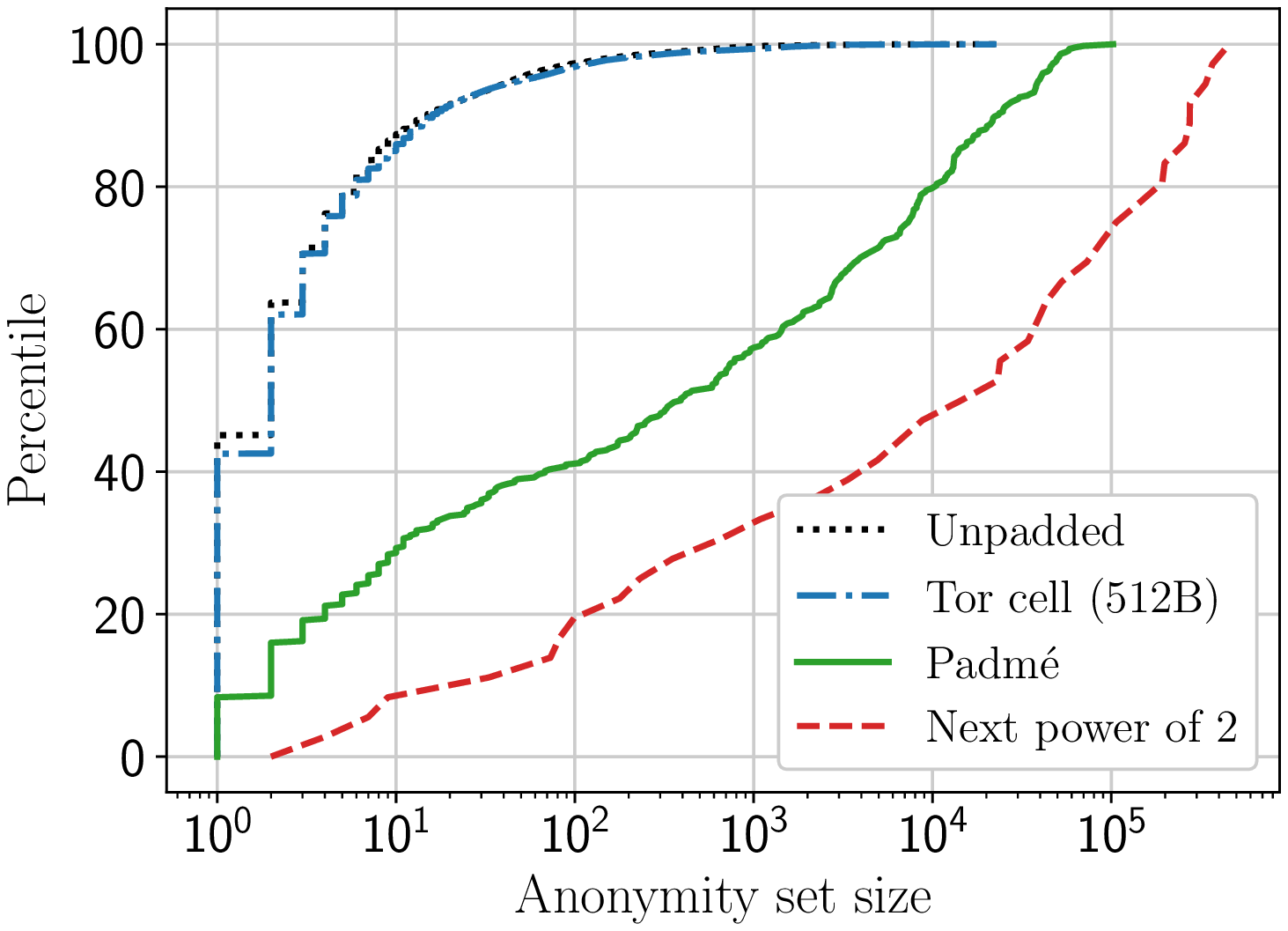}
		\label{fig:padme-files}
	\end{subfigure}\hfill
	\begin{subfigure}[t]{.4\textwidth}
		\vspace{-0.6cm}\caption{Dataset `Ubuntu':}
		\vspace{-0.2cm}\includegraphics[width=\textwidth]{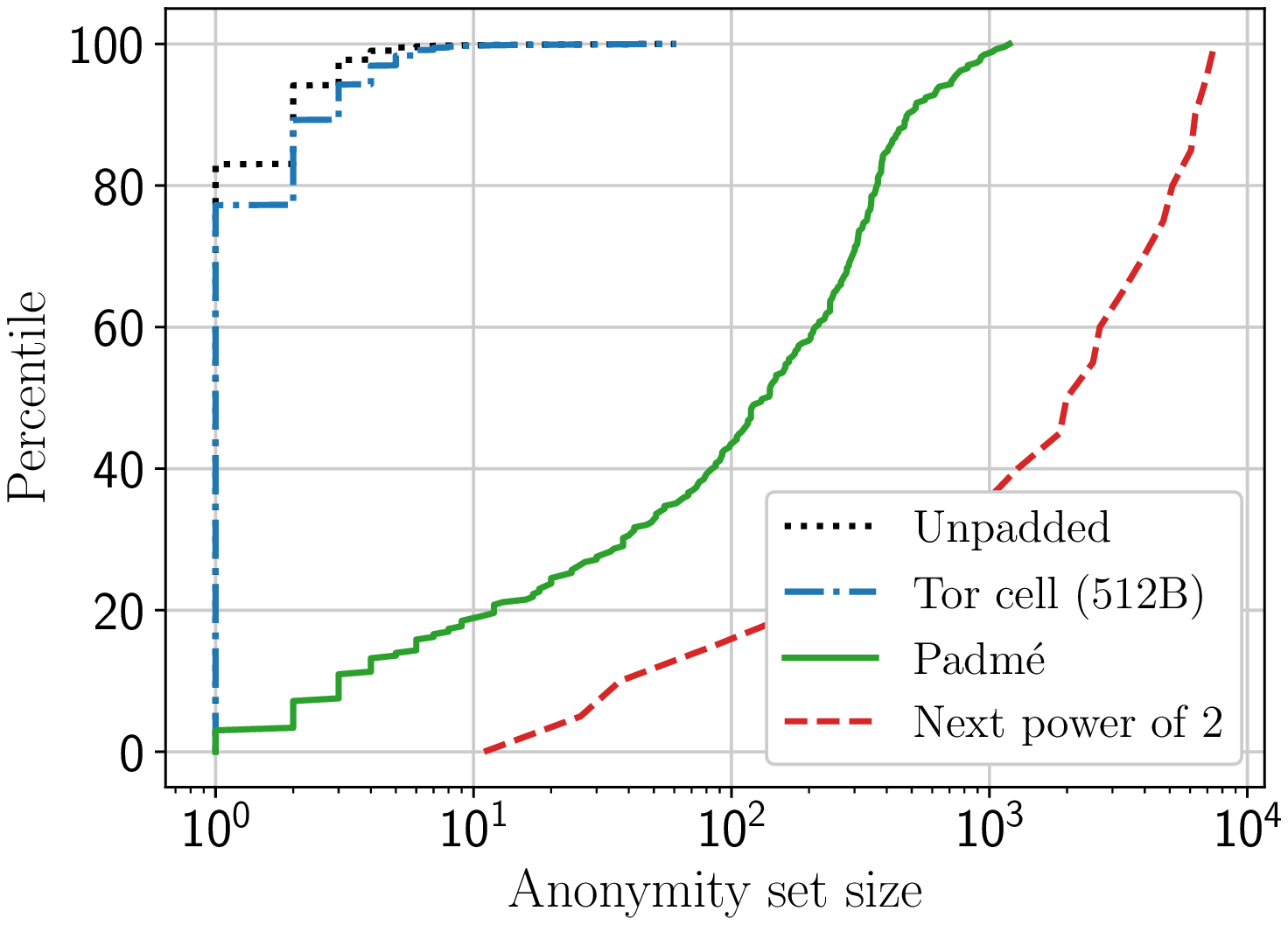}
		\label{fig:padme-ubuntu}
	\end{subfigure}
	\begin{subfigure}[t]{.4\textwidth}
		\vspace{-0.6cm}\caption{Dataset `Alexa':}
		\vspace{-0.2cm}\includegraphics[width=\textwidth]{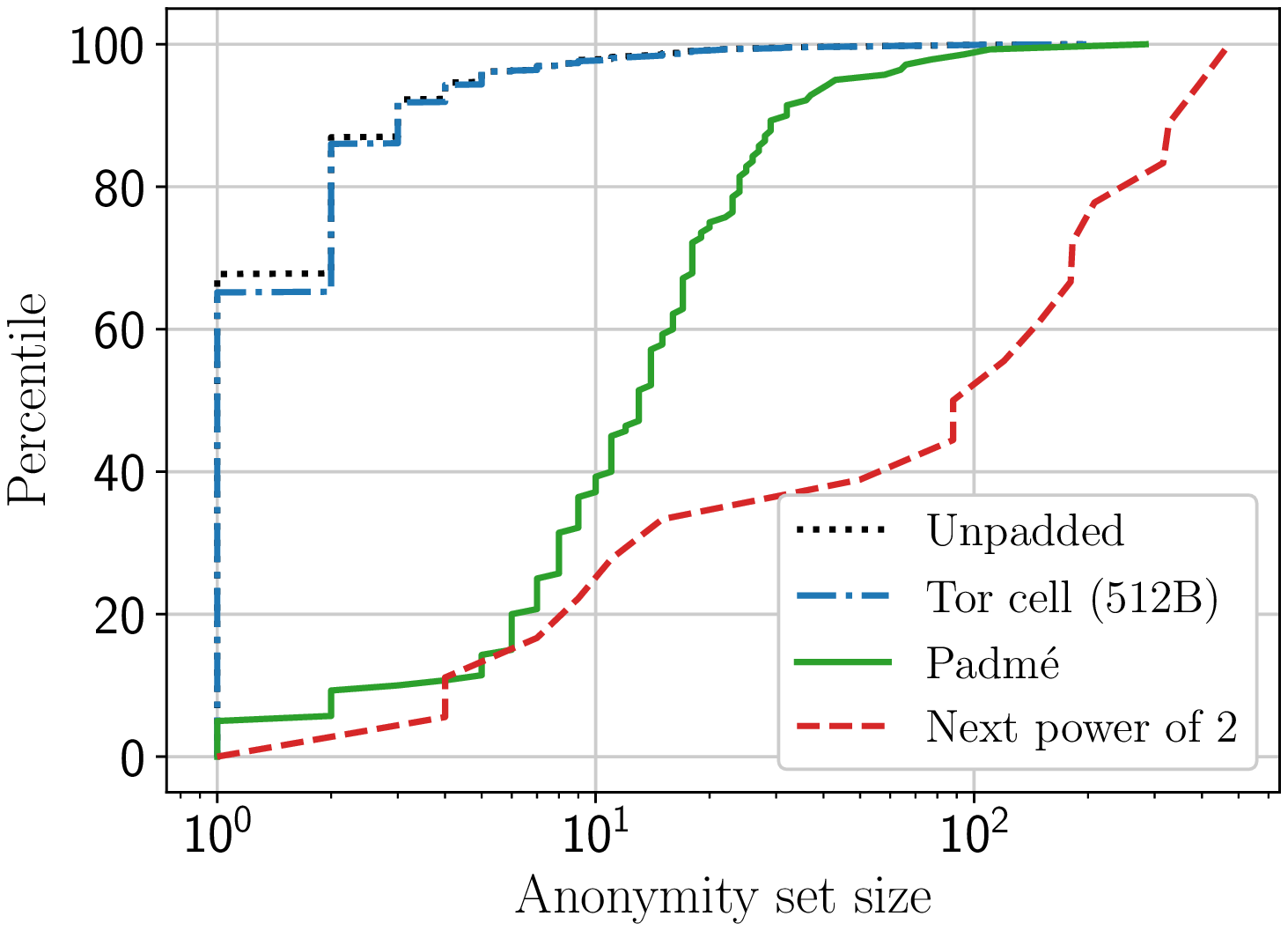}
		\label{fig:padme-alexa}
	\end{subfigure}%
	\vspace{-0.6cm}\caption{Analysis of the anonymity provided by various 
	padding approaches: \npot, \padname, padding with a 
	constant block size and no padding. We 
	measure for each object with how many other objects it becomes 
	indistinguishable after being padded, and plot the distribution. 
	\npot provides better anonymity, at the cost 
	of a drastically higher overhead (at most +$100\%$ instead of +$12\%$). Overheads are shown in Table~\ref{table:padme-datasets-overhead}.}
	\label{fig:padme-datasets}
\end{figure}

\begin{table}
	\caption{Analysis of the overhead, in percentage, of various padding 
	approaches. In the first column, we use $b=512B$ as block size.}
	\vspace{-0.1cm}
	\centering
	\begin{tabular}{llll}
		Dataset & Fixed block size & Next power of 2 & Padmé \\
		\hline
		\normalfont YouTube           & \normalfont 0.01             & \normalfont 44.12           & \normalfont 2.23  \\
		\normalfont files & \normalfont 40.15            & \normalfont 44.18           & \normalfont 3.64  \\
		\normalfont Ubuntu  &\normalfont  14.09            & \normalfont 43.21           &\normalfont  3.12 \\
		\normalfont Alexa               & \normalfont 36.71            & \normalfont 47.12           & \normalfont 3.07  \\
	\end{tabular}
	\label{table:padme-datasets-overhead}
\end{table}

\section{Related Work}
\label{sec:related}

The closest related work \purbs build on
is Broadcast Encryption~\cite{barth06privacy, boneh05collusion, delerablee07identity, fazio12outsider, gentry09adaptive},
which formalizes the security notion behind
a ciphertext for multiple recipients.
In particular, the most relevant notion in (Private) Broadcast Encryption is Recipient Privacy~\cite{barth06privacy}, in which an adversary cannot tell whether a public key is a valid recipient for a given ciphertext. \purbs goes further by enabling multiple simultaneous suites, while achieving indistinguishably from random bits in the \ccatwo model. \purbs also addresses size leakage.

\emph{Traffic morphing}~\cite{wright09trafficmorphing} is a method for 
hiding the traffic of a specific application by masking it as 
traffic of another application and imitating the corresponding 
packet distribution. The tools built upon this method can be 
standalone~\cite{wang2012censorspoofer} or use the concept of Tor 
pluggable 
transport~\cite{moghaddam12skypemorth,weinberg2012stegotorus,winter2013scramblesuit}
that is applied to preventing Tor traffic from being identified and 
censored~\cite{tor16pluggable}. There are two fundamental differences with 
\purbs. First, \purbs focus on a single unit of data; we do not 
yet explore the question of the time distribution of multiple \purbs. 
Second, traffic-morphing systems, in most cases, try to mimic a 
specific transport and sometimes are designed to only hide the traffic of 
one given tool, whereas \purbs are universal and arguably adaptable to any 
underlying application.
Moreover, it has been argued that most traffic-morphing tools 
do not achieve unobservability in real-world settings due to 
discrepancies between their implementations and the systems that they try 
to imitate, because of the uncovered behavior of side protocols, error 
handling, responses to probing, etc.~\cite{houmansadr2013parrot, 
wang2015seeing, frolov19use}.
We believe that for a wide class of applications, using pseudo-random 
uniform blobs, either alone or in combination with other lower-level tools, is 
a potential solution in a different direction.


Traffic analysis aims at inferring the contents of encrypted communication 
by analyzing metadata. The most well-studied application of it is website 
fingerprinting~\cite{panchenko11website,dyer12peek,wang13improved,wang16realistically},
 but it has also been applied to video identification~\cite{reed2016leaky, 
schuster2017beauty, reed2017identifying} and VoIP 
traffic~\cite{wright2007language, chang2008inferring}. In website 
fingerprinting over Tor, research has repeatedly showed that the total 
website size is the feature that helps an adversary the 
most~\cite{cherubin2017website, overdorf2017unique, dyer12peek}. In 
particular, Dyer et al.~\cite{dyer12peek} show the necessity of padding 
the whole website, as opposed to individual packets, to prevent an 
adversary from identifying a website by its observed total size.
They also systematized the existing padding approaches. Wang et 
al.~\cite{wang17walkie} propose deterministic 
and randomized padding strategies tailored for padding Tor traffic against a 
perfect attacker, which inspired our \S\ref{sec:pad}.

Finally, Sphinx~\cite{danezis2009sphinx} is an encrypted packet format for 
mix networks with the goal of minimizing the information revealed to the 
adversary. Sphinx shares similarities with \purbs in its binary format (\eg the 
presence of a group element followed by a ciphertext). 
Unlike \purbs, however, it supports only one cipher suite, and one 
direct recipient (but several nested ones, due to the nature of mix 
networks). To the best of our knowledge, \purbs is the first solution that 
hides all metadata while providing cryptographic agility.





\com{ 
Walkie-talkie is another proposed defense to reduce the accuracy of WF attacks from Wang and Goldberg.
It works by making browsers communicate in half-duplex mode, which 
means they only send packets or receive packets at one 
time~\cite{wang17walkie}.
So the client sends a sequence of packets, then the server replies with its sequence of packets, this is called a packet burst.
They propose adding padding to the number of packets in each outgoing or incoming sequence of packets to cause collisions.

They propose a deterministic padding scheme which pads packet sequences to the 
smallest possible integer from a rounding set.
They also propose a random padding scheme that adds padding from a range based 
on the mean packet length of all packet sequences that are at that position in 
the burst. 
In addition to padding packet sequences they also suggest padding the entire 
session by randomly adding fake pairs of packet sequences.
Walkie-Talkie: An Effective and Efficient Defense against
Website Fingerprinting
http://cacr.uwaterloo.ca/techreports/2015/cacr2015-08.pdf\\
tech report\\

Most WF attacks are not designed for the open world, 
instead they are designed to work in a closed world, 
which means only the monitored websites can be visited. 
Newer attacks are designed with an open world model, 
which means any website can be visited. 
Even when attacks are open world, most make assumptions that make them 
unrealistic to use real world situations.
These issues are addressed in a tech report from Wang and Goldberg 
on the real world viability of WF attacks on Tor~\cite{wang16realistically}.

Previous intro: not sure where it should be, possibly not needed at all. Is there a need to go into greater detail about the attacks/motivation?
\section{Introduction}
Encrypted data formats, for storing files and communication, leak important information.
They usually only encrypt the ``payload'' of the message, while leaving the headers unencrypted.
These headers contain information that can help attackers. 
For example here is file encrypted with a passphrase by gpg (The GNU implementation of PGP).
\begin{figure}[ht]
	\caption{Illustration of what information is leaked from a simple pgp file.}
	\includegraphics[scale=.36]{pgphex}
	\centering
\end{figure}
\\
Currently, gpg leaks the type of file, what encryption algorithm was used, and how the key was generated. 
If there are vulnerabilities in any of these the attacker now knows he can efficiently use them. 
This is especially an issue for archives as the file is more likely to have been encrypted with an algorithm that is now insecure from an attack or improved processing power.
Knowing what application a file is for can be a privacy concern.

Most encrypted file formats also leak the length of the encrypted message or ``payload''. 
This leakage leads to several types of attacks, some of which can recover the encrypted message. 
The length also makes it easy identify files and what websites a user is visiting which is a major problem for a user's anonymity.

One attack that uses the leaked length of a message to recover a message is the CRIME (Compression Ratio Info-leak Made Easy) attack.\cite{ritter12}  
It uses how the length of the ciphertext changes based on what the attacker adds to a plaintext before it is encrypted.
If something already in the plaintext is added when the plaintext is compressed it will be smaller then expected. 
This allows an attacker to recover the session cookie.

Another example is the TIME (Timing Info-leak Made Easy) attack which is a variation of the crime attack.
It uses the time it takes to receive a reply instead of the HTTP header compression. 
It makes use of HTTP response compression which, unlike header compression, is widely used.\cite{beery13}

Another variation on the CRIME attack is the BREACH attack. 
It works by attacking HTTP response compression like the TIME attack.
The breach attack requires viewing the victim's encrypted traffic, and the ability to force the victim to send HTTP requests.\cite{gluck13}

Another problem with leaking the length of messages is website fingerprinting.
Website fingerprinting uses information leaked through side-channels, including the length and response times, to identify what websites a user visited. 
Website fingerprinting attacks work through anonymity networks like Tor, 
SSH tunneling, and VPNs~\cite{wang14effective,dusi08preliminary}.

Encrypted file formats often leak the number of recipients, and possibly their identities, by having a list of public-keys the file is encrypted for. 

This paper proposes and provides an implementation of Padded Uniform Random Blobs(PURBs), which aim to make all encrypted files and messages indistinguishable from random bits.
Different applications would generate their files as PURBs which will be indistinguishable from a PURB generated by another application within the same length bucket.
PURBs have no unencrypted metadata, and their length is padded to an allowed length, defined by our \padname padding scheme. 

PURBs hide the length of a message by padding the message or file in a way that reduces the bits it leaks to $O(loglog(l))$.
Each PURB looks random and gives no way to determine what application created it. 
All PURBs of a similar length should be cryptographically indistinguishable, 
regardless of what application created them.
PURBs provide greater anonymity and privacy protection as each one could be from any application.
Finally PURBs must not add an unacceptable amount of overhead.\\
}

\section{Conclusion}

Conventional encrypted data formats leak information,
via both unencrypted metadata and ciphertext length,
that may be used by attackers to infer sensitive information
via techniques such as traffic analysis and website fingerprinting.
We have argued that this metadata leakage is not necessary,
and as evidence have 
presented \purbs, a generic approach for designing encrypted data formats
that do not leak anything at all, except for the padded length of the
ciphertexts, to anyone without the decryption keys. We have shown that despite
having no cleartext header, \purbs can be efficiently encoded and decoded,
and can simultaneously support 
multiple public keys and cipher suites.
Finally, we have introduced \padname, a padding scheme that reduces the length 
leakage of ciphertexts and has a modest overhead decreasing with file size.
\padme performs significantly better than classic 
padding schemes with fixed block size in terms of anonymity, and its 
overhead is asymptotically lower than using
exponentially increasing padding.


\section*{Acknowledgments}

We are thankful to our anonymous reviewers and our meticulous proof 
shepherd Markulf Kohlweiss for their constructive and thorough feedback 
that has helped us to improve this paper.
We also thank Enis Ceyhun Alp, Cristina Basescu, Kelong Cong, Philipp 
Jovanovic, Apostolos Pyrgelis and Henry Corrigan-Gibbs for their helpful 
comments and suggestions, and Holly B. Cogliati for text editing.
This project was supported in part by grant \#2017-201 of the Strategic 
Focal Area ``Personalized Health and Related Technologies (PHRT)'' of the 
ETH Domain and by grants from the AXA Research Fund, Handshake, and 
the Swiss Data Science Center.

\bibliographystyle{plain}
\bibliography{main}

\appendix
\label{sec:appendix}

\section{Layout}
\label{appendix:algorithms}

Algorithm~\ref{algo:layout} presents
the \layout algorithm a sender uses in step (8) 
of \penc.
\layout arranges a \purb's components in a 
continuous byte array.

\para{Notation.}
We denote by $a[i:j] \gets b$, the operation of copying the bits 
of b at the positions $a[i], a[i+1], \cdots a[j-1]$. When written like this, 
$b$ always has correct length of $j-i$ bits, and we assume $i<j$. If, before 
an operation $a[i:j]\gets b$, $|a|<j$, we first grow $a$ to length $j$. We 
sometimes write $a[i:] \gets b$ instead of $a[i:|b|] \gets b$.
We use a ``reservation array'', which is an 
array with a method array.isFree(start,end) that returns True if and only if 
none of the bits $\text{array}[i], \text{array}[i+1],\cdots\text{array}[j-1]$ were 
previously assigned a value, and False otherwise.

\begin{algorithm*}
	\begin{multicols}{2}
	
	\LinesNumbered
	\SetAlgoLined
	
	\fontsize{8pt}{10pt}\selectfont
	
	\SetKwInOut{Input}{Input}
	\SetKwInOut{Output}{Output}
	
	\tcp{$\tau_i$ is an encoded public key of a suite $\suite_i$}
	\tcp{$\keys_i = \langle Z_1, \ldots, Z_r \rangle$ are entry-point keys}
	\tcp{$\aux_i = \langle P_1, \ldots, P_r \rangle$ are entry-point positions}
	\tcp{SuiteAllowedPositions are public values}
	\Input{$\langle \tau_1, \ldots, \tau_n \rangle$, $\langle \keys_1, 
	\ldots, \keys_n \rangle$, $\langle \aux_1, \ldots, \aux_n \rangle$, 
	$\langle \suite_1, \ldots, \suite_n \rangle$, $K$, $\meta$, 
	$\ctxtpayload$, SuiteAllowedPositions}
	\Output{byte[]}
	\BlankLine

	\tcp{determine public-key positions for each suite}
	layout = []\tcp*[l]{public-key and entry-point assignments}
	pubkey\_pos = []\tcp*[l]{chosen primary position per suite}
	pubkey\_fixed = []\tcp*[l]{all positions fixed so far}
	\ForEach{$\tau_i$ \normalfont{in} $\langle \tau_1, \ldots, \tau_n \rangle$}{
		\tcp{decide suite's primary public key position}
		\For{\normalfont{pos} $\in$ SuiteAllowedPositions($\suite_i$)}{
			\If{\normalfont{pubkey\_fixed.isFree}(pos.start, pos.end)}{
				pubkey\_pos.append($\langle \tau_i, \mathrm{pos}\rangle$)\;
				layout[pos.start:pos.end] $\gets$ $\tau_i$\;
				\textbf{break}\;
			}
		}
		\tcp{later suites cannot modify these positions}
		\tcp{without disrupting this suite's XOR}
		\For{\normalfont{pos} $\in$ SuiteAllowedPositions($\suite_i$)}{
			pubkey\_fixed[pos.start:pos.end] $\gets$ `\texttt{F}'\;
		}
	}
	\BlankLine
	
	\tcp{reserve entry-point positions in hash tables}
	entrypoints = []\;
	\ForEach{\normalfont{$\aux_i$ in $\langle \aux_1, \ldots, \aux_n \rangle$}}{
		\While{\normalfont{$\aux_i$ \textbf{not} empty}}{
			$P \gets \aux_i.\text{pop()}$\;
			ht\_len = 1\tcp*[l]{length of current hash table}
			ht\_pos = 0\tcp*[l]{position of this hash table}
			\While{True}{
				
				index = $P \mod \mathrm{ht\_len}$\tcp*[l]{selected entry}
				start = ht\_pos + index * entrypoint\_len\;
				end = start + entrypoint\_len\;
				
				\If{\normalfont{layout.isFree}(start, end)}{
					layout[start:end] $\getrand  \{0, 1\}^{\text{end-start}}$\;
					entrypoints.append($\langle$\text{start, end}, $\suite_i \rangle$)\;
					\textbf{break}\;
				}
				\tcp{if not free, double table size}
				ht\_pos += ht\_len * entrypoint\_len\;
				ht\_len *=~2\;
			}
		}
	}
	\BlankLine
	\vfill\null
	\columnbreak
	
	\tcp{fill empty space in the layout with random bits}
	\ForEach{\normalfont{start, end < layout.end} 
	}{\If{\normalfont{layout.isFree}(start, 
	end)}{layout[start:end] $\getrand \{0, 1\}^{\text{end-start}}$ }}
	\BlankLine
	
	\tcp{place the payload just past the header layout}
	\meta.payload\_start = |layout|\;
	\meta.payload\_end = |layout| + |$\ctxtpayload$|\;
	\BlankLine

	\tcp{fill entry-point reservations with ciphertexts}
	\ForEach{\normalfont{$\keys_i$ in $\langle \keys_1, \ldots, \keys_n 
	\rangle$}}{
		\While{\normalfont{$\keys_i$ \textbf{not} empty}}{
			$Z = \keys_i.\text{pop()}$\;
			$\langle \mathrm{start}, \mathrm{end}, \suite \rangle \gets$ entrypoints.pop()\;
			\tcp{Encrypt an entry point}
			$e \gets \E_Z(K \parallel \meta)$\;
			layout[start:end] $\gets e$\;
		}
	}
	
	\tcp{compute the padding and append it to layout}
	purb\_len $\gets$ \padname(|layout| + |$\ctxtpayload$| + mac\_len)\;
	mac\_pos $\gets$ purb\_len - mac\_len\;
	\While{\normalfont{\textbf{not} pubkey\_fixed.isFree(mac\_pos, purb\_len)}}{
		\tcp{MAC mustn't overlap public-key positions:}
		\tcp{if so, we pad to the next \padname size}
		purb\_len $\gets$ \padname(purb\_len + 1)\;
		mac\_pos $\gets$ purb\_len - mac\_len\;
	}

	padding\_len $\gets$ mac\_pos - \meta.payload\_end\;
	padding $\getrand \{0, 1\}^\text{padding\_len}$\tcp*[l]{random padding}
	layout.append($\ctxtpayload \parallel$ padding)\;
	\BlankLine

	\tcp{XOR suites' public key positions into primary}
	\For{\normalfont{($\tau_i$, pos)} $\in$ \normalfont{pubkey\_pos}}{
		buffer = $\tau_i$\;
		\For{\normalfont{altpos} $\in$ 
		\normalfont{SuiteAllowedPositions($\suite_i$)}}{
			buffer = buffer $\oplus$ layout[altpos.start~:~altpos.end]\;
		}
		layout[pos.start:pos.end] $\gets$ buffer\;
		\tcp{now $\bigoplus$ SuiteAllowedPositions($\suite_i$) = $\tau_i$}
	}
	\BlankLine
	
	\Return layout

	\caption{\layout}
	\label{algo:layout}
\end{multicols}
\end{algorithm*}

\section{Positions for Public Keys}
\label{appendix:allowed-positions}

This section provides an example of possible sets of allowed 
public key positions for the suites in the~\purb encoding. We emphasize that finding 
an optimal set of positions was not the focus of this work. The intention is 
merely to show that such sets exist and to offer a concrete example (which is used 
for the compactness experiment, Figure~\ref{fig:compactness}).

\para{Example.} We use the required and recommended suites in the latest 
draft 
of TLS 1.3~\cite{rfc8446} as an example of suites a \purb could 
theoretically support. 
The suites and groups are shown in Table~\ref{table:tls-suites}.

The \purb concept of ``suite'' combines both ``suite'' and 
``group'' in TLS. For instance, a PURB suite could be 
PURB\_AES\_128\_GCM\-\_SHA\_256\-\_SECP256R1. We show possible 
\purb suites in Table~\ref{table:purb-suites}. For the sake of 
simplicity, we introduce aliases in the table, and will further refer to those 
suites as suite A-F. In Table~\ref{table:possible-positions1}, we show a 
possible assignment. For instance, if only suites A and C are used, the 
public key for A would be placed in $[0,64]$, while value in $[96,160]$ is 
changed so that the XOR of $[0,64]$ and $[96,160]$ equals the key for B.
Note that a sender must respect the suite order A-F during encoding.
We provide a simple python script to design such sets in the code 
repository.

\begin{table}
	\caption{Suites and groups described in the latest draft of TLS 1.3.}
	\centering
	\begin{tabular}{ll}
		\textbf{Symmetric/Hash Algorithms}   &             \\
		\hline 
		{\normalfont TLS\_AES\_128\_GCM\_SHA256}      & {\normalfont 
		Required}    \\
		{\normalfont TLS\_AES\_256\_GCM\_SHA384}      & {\normalfont 
		Recomm} \\
		{\normalfont TLS\_CHACHA20\_POLY1305\_SHA256} & {\normalfont 
		Recomm} \\
		{\normalfont TLS\_AES\_128\_CCM\_SHA256}      & {\normalfont 
		Optional}    \\
		{\normalfont TLS\_AES\_128\_CCM\_8\_SHA256}   & {\normalfont 
		Optional}    \\
	\hline 
		\textbf{Key Exchange Groups}    &             \\
		\hline 
		{\normalfont secp256r1}                       & {\normalfont 
		Required}    \\
		{\normalfont x25519}                          & {\normalfont 
		Recomm} \\
		{\normalfont secp384r1}                       & {\normalfont 
		Optional}    \\
		{\normalfont secp521r1}                       & {\normalfont 
		Optional}    \\
		{\normalfont x448}                            & {\normalfont 
		Optional}    \\
		{\normalfont ffdhe2048}                       & {\normalfont 
		Optional}    \\
		{\normalfont ffdhe3072}                       & {\normalfont 
		Optional}    \\
		{\normalfont ffdhe4096}                       & {\normalfont 
		Optional}    \\
		{\normalfont ffdhe6144}                       & {\normalfont 
		Optional}    \\
		{\normalfont ffdhe8192}                       & {\normalfont 
		Optional}    \\
	\end{tabular}
	\label{table:tls-suites}
\end{table}

\begin{table}
	\caption{Example of Allowed Positions per suite. Here, the algorithm 
		simply finds any mapping so that each suite can coexist in a PURB. 
		The 
		receiver must XOR the values at all possible positions of a suite to 
		obtain an~encoded public key..}
	\centering
	\begin{tabular}{ll}
		\textbf{Suite}   &  \text{Possible positions}      \\
		\hline 
		{\normalfont A}      & {\normalfont $\{0\}$ }   \\
		{\normalfont B}      & {\normalfont $\{0, 64\}$ }   \\
		{\normalfont C}      & {\normalfont $\{0, 96\}$ }   \\
		{\normalfont D}      & {\normalfont $\{0, 32, 64, 160\}$ }   \\
		{\normalfont E}      & {\normalfont $\{0, 64, 128, 192\}$ }   \\
		{\normalfont F}      & {\normalfont $\{0, 32, 64, 96, 128, 256\}$ }   \\
	\end{tabular}
	\label{table:possible-positions1}
\end{table}

\begin{table*}[h]
	\caption{PURB Suites. ``Suite A'' is a shorthand for the first suite.}
	\centering
	\begin{tabular}{llll}
		\textbf{Alias}  & \textbf{PURB 
		Suite}                                    
		& 
		\textbf{Public key {[}B{]}} & \textbf{EntryPoint {[}B{]}} \\
		\hline 
		{\normalfont A} & {\normalfont 
		PURB\_AES\_128\_GCM\_SHA\_256\_SECP256R1}      & {\normalfont 
		64}                    & {\normalfont 48}                        \\
		{\normalfont B} & {\normalfont 
		PURB\_AES\_128\_GCM\_SHA\_256\_X25519}         & {\normalfont 
		32}                    & {\normalfont 48}                        \\
		{\normalfont C} & {\normalfont 
		PURB\_AES\_256\_GCM\_SHA\_384\_SECP256R1}      & {\normalfont 
		64}                    & {\normalfont 80}                        \\
		{\normalfont D} & {\normalfont 
		PURB\_AES\_256\_GCM\_SHA\_384\_X25519}         & {\normalfont 
		32}                    & {\normalfont 80}                        \\
		{\normalfont E} & {\normalfont 
		PURB\_CHACHA20\_POLY1305\_SHA\_256\_SECP256R1} & 
		{\normalfont 
		64}                    & {\normalfont 64}                        \\
		{\normalfont F} & {\normalfont 
		PURB\_CHACHA20\_POLY1305\_SHA\_256\_X25519}    & 
		{\normalfont 
		32}                    & {\normalfont 64}        \\               
	\end{tabular}
	\label{table:purb-suites}
\end{table*}

\section{Default Schemes for Payload}

In addition to \purb suites, a list of suitable candidates for a payload 
encryption scheme $(\enc, \dec)$, a MAC algorithm \mac, and a hash 
function $\hashpayload$ must be determined and standardized.
This list can be seamlessly updated with time, as an encoder makes the 
choice and records it in \meta on per-\purb basis.
The chosen schemes are shared by all the suites included in the \purb, 
hence these schemes must match the security level of the suite with 
the highest bit-wise security.
An example of suitable candidates, given the suites from 
Table~\ref{table:purb-suites}, is $(\enc, \dec) = \text{AES256-CBC}$, $\mac 
= \text{HMAC-SHA384}$, and $\hashpayload = \text{SHA3-384}$.

\com{	This seems so trivial to me that it doesn't add much,
	and it would have to be revised to be clearer and consistent
	with the mainline text, which there's no time for. -baf
\section{Leakage: Next Power of 2}


As mentioned in the body of the paper, we prove here that padding to the 
next 
power of two yields a leakage and a overhead in $O(\log \log L)$.

\para{\emph{Lemma 1:}} Padding with buckets $b_i$ following $2^i$ (or, 
equivalently, padding using $f(L) = 2^{\ceil{\log_2L}}$) yield an overhead of  
$O(\log \log L)$.
	
\para{\emph{Proof 1:}} The information we are leaking is in which bucket 
the plaintext landed. To compute how many bits of information this 
represents, we compute the number of buckets $n$ needed to 
accommodate contents of size up to $L$. Since the buckets' sizes are 
powers of two, dictated by $b_i = 2^i$, we obtain:
$$\sum_{i=0}^n 2^i \ge L$$
$$2^{n+1} \ge L$$
$$n \ge \log_2(L-1) - 1$$
$$n \approx O(\log L)$$
Hence, there are $n$ buckets in the range $[0;L]$. Since we leak the 
number $n$, we leak $\log_2(n)$ or $O(\log \log L)$ bits. The same 
reasoning can be used to show the overhead, also in $O(\log \log L)$.
}

\section{Security Proofs}

This section contains the proofs of the security properties provided by 
\mspurb.

\subsection{Preliminaries}
\label{sec:prelim}

Before diving into proving the security of our scheme, we define what it 
means to be \indcca- and \ccatwo-secure for the primitives that \mspurb 
builds upon.

\para{Key-Encapsulation Mechanism (KEM).}
Following the definition from Katz \& Lindell~\cite{katz14introduction}, we 
begin by defining 
KEM as a tuple of PPT algorithms.

\begin{syntax}[KEM]
	\label{syntax:kem}
	\hangtwo
	$\kem\setup(\onelambda) \to \suite$: Given a security parameter 
	$\lambda$, initialize a cipher suite \suite.
	
	\hangone
	$\kem\keygen(\suite) \to (sk, pk)$: Given a cipher suite \suite, 
	generate a (private, public) key pair.
	
	\hangone
	$\kem\encap(pk) \to (c, k)$: Given a public key $pk$, output a ciphertext 
	$c$ and a key $k$.
	
	\hangone
	$\kem\decap(sk, c) \to k/\bot$: Given a private key $sk$ and a ciphertext $c$, 
	output a key $k$ or a special symbol $\bot$ denoting failure.
\end{syntax}

Consider an \indcca security game against an adaptive adversary \A:
\begin{game}[KEM]
	\label{game:kem}
  The KEM \indcca game for a security parameter $\lambda$ is between a
  challenger and an adaptive adversary \A. It proceeds along the
  following phases.

	\hangone
	\textbf{Init:} The challenger and adversary take $\lambda$ as input.
  The adversary outputs a cipher suite $\suite$ it wants to attack.
  The challenger verifies that $\suite$ is a valid cipher suite, i.e.,
  that it a valid output of $\kem\setup(\onelambda)$. The challenger aborts, and
  sets $b^\star \getrand \{0, 1\}$ if $\suite$ is not valid.
	
	\hangone
	\textbf{Setup:} The challenger runs $(sk, pk) \gets 
	\kem\keygen(\suite)$ and gives $pk$ to \A.
	
	\hangone
	\textbf{Phase 1:} \A can make
	decapsulation queries $\qdecap(c)$ with
  ciphertexts $c$ of its choice,
	to the challenger who responds with $\kem\decap(sk, c)$.
	
	\hangone
	\textbf{Challenge:} The challenger runs $(c^\star, k_0) \gets \kem\encap(pk)$ 
	and generates $k_1 \getrand \{0,1\}^{|k_0|}$.
	The challenger picks $ b \getrand \{0,1\}$ and sends $\langle 
	c^\star, k_b \rangle$ to~\A.
	
	\hangone
	\textbf{Phase 2:} \A continues querying $\qdecap(c)$
	with the restriction that $c \neq c^\star$.
	
	\hangone
	\textbf{Guess:} \A outputs its guess $b^\star$ for $b$ and 
	wins if $b^\star = b$.
\end{game}

\newcommand{\advantage}{\textsf{Adv}}
\newcommand{\advkemcca}[1][\A]{\advantage^{\textsf{cca2}}_{\textsf{KEM},#1}}
\newcommand{\advae}[1][\A]{\advantage^{\textsf{ind\$-cca2}}_{\Pi,#1}}
\newcommand{\advmsbe}[1][\A]{\advantage^{\textsf{cca2-out}}_{\textsf{msbe},#1}}
\newcommand{\advmsbein}[1][\A]{\advantage^{\textsf{cpa-in}}_{\textsf{msbe},#1}}
\newcommand{\advmacsforge}[1][\A]{\advantage^{\textsf{suf}}_{\mac,#1}}
\newcommand{\advmacind}[1][\A]{\advantage^{\textsf{ind\$}}_{\mac,#1}}
\newcommand{\advindcpa}[1][\A]{\advantage^{\textsf{ind\$-cpa}}_{(\enc,\dec),#1}}

We define \A's advantage in this game as:
\begin{equation*}
  \advkemcca(\onelambda) = 2 \left| \textrm{Pr}[b = b^{\star}] - \tfrac{1}{2} \right|.
\end{equation*}
We say that a KEM is \indcca-secure if $\advkemcca(\onelambda)$ is negligible in
the security parameter.

\begin{definition}
  We that a KEM is \emph{perfectly correct} if for all $(sk, pk) \gets
  \kem\keygen(\suite)$ and for all $(c, k) \gets \kem\encap(pk)$ we have $k =
  \kem\decap(sk, c)$.
\end{definition}

\begin{insta}[IES-KEM]
	\label{insta:kem}
	We instantiate a KEM based on the Integrated Encryption 
	Scheme~\cite{abdalla01oracle} (see \S\ref{sec:ies} for details).
	\vspace{0.1cm}
	
	\hangone
	$\ies\setup(\onelambda)$: Initialize a cipher suite \suite = 
	$\langle \G, p, g, \hasha \rangle$, where $\G$ is a cyclic group of order 
	$p$ and generated by $g$, and $\hasha:\G \to \{0, 1\}^{2\lambda}$ is a 
	hash function.
	
	\hangone
	$\ies\keygen(\suite)$: Pick $x \in \mathbb{Z}_p$, compute 
	$X = g^{x}$, and output $(sk = x, pk = X)$.
	
	\hangone
	$\ies\encap(pk)$: Given $pk = Y$, pick $x \in \mathbb{Z}_p$, compute 
	$X = g^{x}$, and output $\langle c = X, k = \hasha(Y^x) \rangle$.
	
	\hangone
	$\ies\decap(sk, c)$: Given $sk = y$ and $c = X$, 
	output a key $k = \hasha(X^y)$.
\end{insta}

\begin{theorem}[Theorem 11.22~\cite{katz14introduction} and Section 
7~\cite{abdalla01oracle}]
	If the gap-CDH problem is hard relative to $\G$, and $\hasha$ is 
	modeled as a random oracle, then IES-KEM is an 
	\textnormal{\indcca}-secure KEM.
\end{theorem}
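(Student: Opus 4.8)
The plan is to recall the standard random-oracle reduction showing that the hashed-ElGamal key encapsulation is \indcca-secure under gap-CDH; this is precisely Theorem~11.22 of Katz \& Lindell~\cite{katz14introduction} (equivalently Section~7 of Abdalla et al.~\cite{abdalla01oracle}), so I would present it in condensed form. Given a PPT adversary $\A$ that wins the KEM \indcca game against IES-KEM with advantage $\advkemcca(\onelambda)$, I will build a PPT algorithm $\B$ that, on a gap-CDH instance $(g, A = g^{a}, B = g^{b})$ with access to a DDH oracle $\mathcal{O}_{\mathrm{ddh}}$ for $\G$, computes $g^{ab}$. The algorithm $\B$ simulates the challenger: it answers the suite-selection phase honestly, sets $pk = A$ (so the unknown secret key is $a$), and in the Challenge phase returns the ciphertext $c^{\star} = B$ together with a key $k_{b} \getrand \{0,1\}^{2\lambda}$ sampled uniformly, \emph{independently of the hidden bit $b$}. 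The real encapsulated key would be $\hasha(A^{b}) = \hasha(g^{ab})$; $\B$ cannot compute it, but the observation driving the proof is that $\A$'s view is independent of $b$ unless $\A$ queries $\hasha$ at the group element $g^{ab}$, and $\B$ will detect precisely that event and output $g^{ab}$.

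Concretely, $\B$ keeps a list $L_{H}$ of committed pairs $(Q, h)$ for the random oracle $\hasha$ and a list $L_{D}$ of committed pairs $(X, k)$ for decapsulation. On a fresh query $\hasha(Q)$, $\B$ first tests whether $\mathcal{O}_{\mathrm{ddh}}(A, B, Q) = 1$; if so it has found $Q = g^{ab}$, outputs it, and halts. Otherwise it returns the recorded value if $(Q,h) \in L_{H}$; else, if some $(X,k) \in L_{D}$ has $\mathcal{O}_{\mathrm{ddh}}(A, X, Q) = 1$ (so $Q = \mathrm{CDH}(A,X)$ is the preimage whose hash must equal the decapsulation of $X$), it sets $h \gets k$; else it samples $h \getrand \{0,1\}^{2\lambda}$; finally it stores $(Q,h)$ in $L_{H}$ and returns $h$. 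On a query $\qdecap(X)$ (with $X \ne c^{\star}$ in Phase~2), $\B$ returns the recorded value if $(X,k) \in L_{D}$; else, if some $(Q,h) \in L_{H}$ has $\mathcal{O}_{\mathrm{ddh}}(A, X, Q) = 1$ it returns $h$ and stores $(X,h)$ in $L_{D}$; else it samples $k \getrand \{0,1\}^{2\lambda}$, stores $(X,k)$, and returns $k$. This cross-patching between $L_{H}$ and $L_{D}$, made possible exactly by the DDH oracle that the \emph{gap}-CDH assumption hands us, keeps the simulated decapsulation oracle perfectly consistent with the simulated random oracle even though $\B$ never learns $a$. I expect this consistency argument --- reconciling a decapsulation answer given before the matching $\hasha$-query, and vice versa --- to be the main technical point.

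Finally I would bound the loss. Let $q_D$ be the number of decapsulation queries. One first argues that $\A$ queries $\qdecap(c^{\star})$ \emph{before} the Challenge phase with probability at most $q_{D}/p$, since $c^{\star} = B$ is uniform in $\G$ and independent of $\A$'s Phase~1 view; after the challenge this query is forbidden by the game, so we may condition on it never occurring. Conditioned on this, the simulation is perfect, and $k_b$ is a uniformly random string independent of $b$ unless $\A$ queries $\hasha$ at $g^{ab}$ --- call this event $\mathsf{Guess}$. Off $\mathsf{Guess}$, $\A$'s guess is independent of $b$; on $\mathsf{Guess}$, $\B$ wins its gap-CDH game. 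A standard computation then yields
\begin{equation*}
\advkemcca(\onelambda) \;\le\; 2\Pr[\mathsf{Guess}] + \frac{2 q_{D}}{p} \;\le\; 2\,\advantage^{\textsf{gcdh}}_{\G, \B}(\onelambda) + \frac{2 q_{D}}{p},
\end{equation*}
which is negligible under gap-CDH. Perfect correctness of IES-KEM is immediate from $\hasha(Y^{x}) = \hasha(g^{xy}) = \hasha(X^{y})$, so all hypotheses of the theorem are met.
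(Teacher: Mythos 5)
The paper does not prove this statement itself; it imports it verbatim from Theorem~11.22 of Katz--Lindell and Section~7 of Abdalla et al., and your reduction is precisely the standard proof given there: embed the gap-CDH instance as $pk=A$ and $c^\star=B$, use the DDH oracle both to detect the query $\hasha(g^{ab})$ and to keep the simulated $\hasha$ and $\qdecap$ oracles mutually consistent, and absorb the Phase-1 query $\qdecap(c^\star)$ into a $q_D/p$ term. The argument is correct and essentially identical to the cited source, so there is nothing to flag.
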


\para{Multi-Suite Broadcast Encryption.}
We consider \mspurb as a multi-suite broadcast encryption (MSBE) 
scheme 
extending the single-suite setting by Barth et al.~\cite{boneh05collusion}.

\begin{syntax}[MSBE]
	\label{syntax:msbe}
	\hangtwo
	$\msbe\setup(\onelambda) \to \suite$: Given a security parameter 
	$\lambda$, initialize a cipher suite $\suite$.
	
	\hangone
	$\msbe\keygen(\suite) \to (sk, pk)$: Given a cipher suite \suite, 
	generate a (private, public) key pair.
	
	\hangone
	$\msbe\enc(R, m) \to c$: Given a set of public keys $R = \{pk_1,\ldots, 
	pk_r\}$ with corresponding cipher suites $\suite_1, \ldots, \suite_r$ and a 
	message $m$, generate a ciphertext $c$.
	
	\hangone
	$\msbe\dec(sk, c) \to m/\bot$:  Given a private key $sk$ and the 
	ciphertext $c$, return a message $m$ or $\bot$ if $c$ does not 
	decrypt correctly.
\end{syntax}
Note that \mspurb as described in \S\ref{sec:purb-complete} satisfies the
syntax of a multi-suite broadcast encryption scheme.

Barth et al.~\cite{barth06privacy} define the security of broadcast 
encryption schemes under adaptive chosen-chiphertext attack for 
single-suite schemes. 
Here, we adjust this definition to the multi-suite setting, and 
instead require that the ciphertext is indistinguishable from a random string 
(\ccatwo).

\begin{game}[MSBE]
	\label{game:msbe}
  The MSBE \ccatwo game for a security parameter $\lambda$ is between a
  challenger and an adversary \A.
  It proceeds along the following phases.

	\hangone
	\textbf{Init:} The challenger and adversary take $\lambda$ as input.
  The adversary outputs
  a number of recipients $r$ and corresponding cipher suites
  $\suite_1, \ldots, \suite_r$ it wants to attack. Let $s$ be the number of
  unique cipher suites.
  The challenger verifies,
  for each $i \in \{1, \ldots, r\}$,
  that $\suite_i$ is a valid cipher suite, i.e.,
  that it is a valid output of $\msbe\setup(\onelambda)$. The challenger aborts, and
  sets $b^\star \getrand \{0, 1\}$ if the suites are not all valid.
	
	\hangone
	\textbf{Setup:} The challenger generates private-public key 
	pairs for each recipient $i$ given by \A by running $(sk_i, pk_i) \gets 
	\msbe\keygen(\suite_i)$ and gives $R = \{pk_1,\ldots, pk_r\}$ to \A.
	
	\hangone
	\textbf{Phase 1:} \A
  can make decryption queries 
	$\qdec(pk_i, c)$ to the challenger for any $pk_i \in R$ and any ciphertext 
	$c$ of its choice.
	The challenger replies with $\msbe\dec(sk_i, c)$.
	
	\hangone
	\textbf{Challenge:} \A outputs $m^\star$. The challenger generates 
	$c_0 = \msbe\enc(R, m^\star)$ and $c_1 \getrand \{0, 1\}^{|c_0|}$.
	The challenger picks $ b \getrand \{0,1\}$ and sends $c^\star = 
	c_b$ to \A.
	
	\hangone
	\textbf{Phase 2:} \A continues making decryption queries $\qdec(pk_i, c)$
	with a restriction that $c \neq c^\star$.
	
	\hangone
	\textbf{Guess:} \A outputs its guess $b^\star$ for $b$ and 
	wins if $b^\star = b$.
\end{game}
We define \A's advantage in this game as:
\begin{equation*}
  \advmsbe(\onelambda) = 2\left| \textrm{Pr}[b = b^{\star}] - \tfrac{1}{2} \right|.
\end{equation*}
We say that a MSBE scheme is \ccatwo-secure if $\advmsbe(\onelambda)$ is negligible in
the security parameter.

\medskip

Finally, we require that the $\mac$ scheme
is strongly unforgeable under an adaptive chosen-message attack \emph{and}  
outputs tags that are indistinguishable from random.
A $\mac$ scheme is given by the algorithms $\mac.\keygen, \M,$ and $\V$, where
$\mac.\keygen(\onelambda)$ outputs a key $\mackey$. To compute a tag on the
message $m$, run $\sigma = \M_{\mackey}(m)$. The verification algorithm
$\V_{\mackey}(m, \sigma)$ outputs $\top$ if $\sigma$ is a valid tag on the
message $m$ and $\bot$ otherwise.
We formalize the strong unforgeability and indistinguishability properties
using the following simple games.

\begin{game}[MAC-sforge]
	The MAC-sforge game for a security parameter $\lambda$
  is between a challenger and an adversary \A.
	
	\hangone
	\textbf{Setup:} The challenger and adversary take $\lambda$ as input.
  The challenger generates a MAC key $\mackey \gets \mac.\keygen(1^\lambda)$.
	
	\hangone
	\textbf{Challenge:} The adversary \A is given oracle access to the oracles
  $\M(\cdot)$ and $\V(\cdot)$.
  On a query $\M(m)$ the challenger returns $\sigma = \M_{\mackey}(m)$.
  On a query $\V(m, \sigma)$ the challenger returns $\V_{\mackey}(m, \sigma)$.
	
	\hangone
	\textbf{Output:} \A eventually outputs a message-tag pair $(m, \sigma)$. \A wins if 
	$\V_{\mackey}(m, \sigma) = 1$ \emph{and} \A has not made a query $\M(m)$ that
  returned $\sigma$.
\end{game}
We define \A's advantage in this game as:
\begin{equation*}
  \advmacsforge(\onelambda) = \textrm{Pr}[\text{$\A$ wins}].
\end{equation*}
We say that a MAC scheme is strongly unforgeable under adaptive chosen-message attacks if $\advmacsforge(\onelambda)$ is negligible in the security parameter.

\begin{game}[MAC-IND\$]
  The MAC-IND\$ game is between a challenger and an adversary \A.

  \hangone
	\textbf{Setup:} The challenger and adversary take $\lambda$ as input.
  The challenger generates a MAC key $\mackey \gets \mac.\keygen(1^\lambda)$ and
  picks a bit $b \getrand \{0, 1\}$.

  \hangone
  \textbf{Challenge:} The adversary outputs a message $m$. The challenger
  computes $\sigma_0 = \M_{\mackey}(m)$ and $\sigma_1~\getrand~\{0,
  1\}^{|\sigma_0|}$ and returns $\sigma_b$.

  \hangone
  \textbf{Output:} The adversary outputs its guess $b^\star$ of $b$, and wins if
  $b^\star = b$. 
\end{game}
We define \A's advantage in this game as:
\begin{equation*}
  \advmacind(\onelambda) = 2 \left|  \textrm{Pr}[b = b^\star] - \tfrac{1}{2} \right|.
\end{equation*}
We say that the tags of a MAC scheme are indistinguishable from random if $\advmacind(\onelambda)$ is negligible in the security parameter.

\subsection{Proof of Theorem~\ref{theorem:outsider}}
\label{proof:outsider}

We prove the \ccatwo security of \mspurb as an~MSBE scheme. 
More precisely, we will show that there 
exists adversaries $\B_1, \ldots, \B_5$ such that
\begin{align*}
\advmsbe(\onelambda) \leq
  \;& r \left( \advkemcca[\B_1](\onelambda) + \advae[\B_2](\onelambda)\right) + \\
  &\advmacsforge[\B_3](\onelambda) +
  \advmacind[\B_4](\onelambda) + \\
  &\advindcpa[\B_5](\onelambda).
\end{align*}
Thus, given our assumptions, $\advmsbe(\onelambda)$ is indeed negligible in
$\lambda$. To do so we use a sequence of games.
This sequence of games step by step transforms from 
the situation where $b = 0$ in the \ccatwo game of MSBE, \ie the adversary 
receives the real ciphertext, to $b = 1$, \ie the adversary receives a random 
string.

\begin{game}[\gamestart]
	\label{game:start}
  This game is as the original MSBE \ccatwo game where $b = 0$.\\
\end{game}

\begin{game}[\gamememkeys]
	\label{game:memkeys}
  As in \gamestart, but the challenger will no longer call \hdr\decap to derive
  the keys $k_i$ on ciphertexts derived from the challenge
  ciphertext $c^\star$. In particular, for every recipient $pk_i$ using a suite
  $\suite_j$, we store $(X_j^\star, k_i^\star)$ when constructing the PURB headers for
  the challenge ciphertext. 
  Then, when receiving a decryption query for a recipient 
  $\qdec(pk_i(\suite_j), c)$, we proceed by
  following \pdec. 
  If the encoded public key
  $\tau$ recovered in step (1) of \pdec is such that $\unhide(\tau) =
  X_j^{\star}$, then we use $k_i = k_i^\star$ (as stored when creating the
  challenge ciphertext) directly,
  rather than computing $k_i =
  \hdr\decap(y_i, \tau)$ in step (3) of $\pdec$. If the encoded public key 
  $\tau$ does not
  match $X_j^{\star}$, then the challenger proceeds as before.
\end{game}

\begin{game}[\gamesubkeys]
	\label{game:subkeys}
	As in \gamememkeys, but we change how the keys $k_1^\star,\ldots, k_r^\star$
  for the \emph{challenge ciphertext}
  are computed in $\hdr\encap$.
  Rather than computing $k_i^\star =
  \hasha(Y_i^{x})$ as in step (2) of \hdr\encap, we set $k_i^\star \getrand \{0,
  1\}^{\lambda_H}$ for all the keys, where $\lambda_H$ is the bit-length of the
  corresponding hash function $\hasha$.
  Recall that as per the changes in \gamememkeys, the challenger will store 
  $k_i^\star$ generated in this way, and use them directly (without calling
  \hdr\decap) when asked to decrypt variants of the challenge ciphertext.
\end{game}

\begin{game}[\gamememmsg]
	\label{game:memmsg}
  Let $e_i$ be the encrypted entry point under key $Z_i$ (derived from 
  $k_i$) for recipient $i$ 
  computed in line 47 of \layout (step (8) of \penc). The game goes as in 
  \gamesubkeys, but for the challenge ciphertext, the challenger saves the 
	mapping of the challenge entry points and the encapsulated key 
	$K^\star$ 
	with metadata $\meta^\star$: $(e^\star_i, k_i^\star, K^\star \parallel 
	\meta^\star)$.
	If the challenger receives a decryption query $\qdec(pk_i(\suite_i), c)$ it 
  proceeds as before, except when it should decrypt $e^\star_i$ using key 
  $k_i^\star$ in
  step (4) of \pdec. In that case, it acts as if the decryption returned $K^\star
  \parallel \meta^\star$.
\end{game}

\begin{game}[\gamesubenc]
	\label{game:subenc}
	As in \gamememmsg, but the challenger replaces $ e^\star_1,\ldots, 
	e^\star_r$ in the challenge ciphertext with 
	random strings of the appropriate length. Note that per the change in 
	\gamememmsg, the challenger will not try to decrypt these $e^\star_i$, 
	but will recover $K^\star$ and $\meta^\star$ directly instead. 
\end{game}

\begin{game}[\gamerepbot]
	\label{game:repbot}
	As in \gamesubenc, but the challenger replies differently to the queries 
	$\qdec(pk_i(\suite_i), c)$ where $c$ is not 
	equal the challenge ciphertext $c^\star$ but the encoded public key
	$\tau$ recovered in step (1) of \pdec is such that $\unhide(\tau) =
	X_j^{\star}$ \emph{and} $e_i = e^\star_i$. In this case, the 
	challenger replies with $\bot$ directly, without running 
	$\V_{\mackey}(\cdot)$ (step (5) of \pdec).
\end{game}

\begin{game}[\gamesubmac]
	\label{game:submac}
	As in \gamerepbot, but the challenger replaces the integrity tag in the 
	challenge ciphertext in step (9) of \penc with a random string 
	of the same length.
\end{game}

\begin{game}[\gamesubpay]
	\label{game:subpay}
	As in \gamesubmac, but the challenger replaces the encrypted payload 
	$\ctxtpayload$ in the challenge ciphertext in step (7) of \penc 
	with a random string of the same length.
\end{game}

\para{Conclusion.} As of \gamesubpay, all ciphertexts in the PURBs header, the
payload encryption and the MAC have been replaced by random strings. The open
slots in the hash tables are always filled with random bits. Finally, the
encoded keys $\tau = \hide(X)$ are indistinguishable from random strings as
well, since the keys $X$ are random. Therefore, the PURB ciphertexts $c$ are
indeed indistinguishable from random strings, as in the MSBE game with $b = 1$.

\newcommand{\gameevent}{W}
\begin{proof}

Let $\gameevent_i$ be the event that \A outputs $b^\star = 1$ in game $G_i$.
We aim to show that 
\begin{equation*}
  \begin{split}
  \advmsbe(\onelambda) &= \big| \textrm{Pr}[b^{\star} = 1 \;|\; b = 0 ] -
                                \textrm{Pr}[b^{\star} = 1 \;|\; b = 1 ] \big| \\
                       &= \big| \textrm{Pr}[\gameevent_0] -
                                \textrm{Pr}[\gameevent_7] \big|
  \end{split}
\end{equation*}
is negligible. To do so, we show that each of the steps in the sequence of games
is negligible, i.e., that
$
\big| \textrm{Pr}[\gameevent_i] -
      \textrm{Pr}[\gameevent_{i + 1}] \big|
$
is negligible. The result then follows from the triangle inequality.

\begin{indist}[\gamestart <--> \gamememkeys]
  As long as the KEMs are perfectly correct, the games
  \gamestart and \gamememkeys are identical. Therefore:
  \begin{equation*}
  \big| \textrm{Pr}[\gameevent_0] -
        \textrm{Pr}[\gameevent_1] \big| = 0.
  \end{equation*}
\end{indist}

\begin{indist}[\gamememkeys <--> \gamesubkeys]
	We show that the games \gamememkeys and \gamesubkeys are 
	indistinguishable using a hybrid argument on the number of recipients $r$.
  Consider the hybrid games
	$\gameh{i}$ where the first $i$ recipients
	use random keys $k_1,\ldots, k_i$ as in \gamesubkeys,
	whereas the remaining $r - i$ recipients
	use the real keys $k_{i+1},\ldots, k_r$ as in \gamememkeys.
	Then $\gamememkeys = \gameh{0}$ and $\gamesubkeys = \gameh{r}$.

  We prove that \A cannot distinguish $\gameh{j - 1}$ from
  $\gameh{j}$. Let
	$\suite_j = \langle \G, p, g, \hide(\cdot), \Pi, \hasha, \hashb \rangle$,
  be
  the suite corresponding to recipient $j$.
  Suppose \A can distinguish $\gameh{j - 1}$ from $\gameh{j}$, then we can
  build a distinguisher \B against
  the \ccatwo security of the IES KEM for the suite $\suite'_j = \langle \mathbb{G}, p, 
  g, \hasha \rangle$.
  Recall that \B receives, from 
  its \ccatwo-KEM challenger,
	\begin{compactitem}
		\item a public key $Y$;
		\item a challenge $\langle X^\star, k^{\star} \rangle$, where 
		depending on bit 
		$b \getrand \{0, 1\}$, we have $k^\star = \hasha({Y}^{x^\star})$ if $b = 
		0$ or $k^{\star} \getrand \{0, 1\}^{\lambda_H}$ if $b = 1$ (where
    $\lambda_H$ is the bit-length of $\hasha$);
		\item access to a $\decap(\cdot)$ oracle for all but $X^\star$.
	\end{compactitem}
  At the start of the game, \B will set $pk_j = Y$, so that the public key of
  recipient $j$ matches that of its IES KEM challenger. Note that \B does not
  know the corresponding private key $y_j$. For all other recipients $i$, \B
  sets $(sk_i = y_i,pk_i =  Y_i) = \pkeygen(S_i)$.

  The distinguisher \B will use its challenge $(X^\star, k^{\star})$
  to construct the challenge ciphertext for \A. In particular, when running
  \hdr\encap for a suite $\suite_j$, it sets $X = X^\star$ in step (1) of 
  $\hdr\encap$.
  Moreover, for recipient $j$ it will use $k_j = k^{\star}$. For all other
  recipients $i$ with corresponding suites $\suite_i$ it proceeds as follows 
  when computing $k_i$ in \hdr\encap.
  \begin{compactitem}
  \item If $i < j$, then it sets $k_i \getrand \{0, 1\}^{\lambda_H}$ for
    appropriate $\lambda_H$;
  \item If $i > j$ and the suite $\suite_i$ for user $i$ is the same as suite 
  $\suite_j$ for user $j$, then it sets $k_i = \hasha({X^\star}^{y_i})$; and
  \item If $i > j$, but $S_j \neq S_i$, then it computes $k_i$ as per steps
    (1) and (2) of \hdr\encap.
  \end{compactitem}
  Thereafter, \B continues running \penc as before.
	
  Whenever \B receives a decryption query for a user $pk_i$, it proceeds as
  before. When it receives a decryption query for user $pk_j$, it uses its 
  IES-KEM $\decap$ oracle in step (2) of \hdr\decap. Note that \B is not 
  allowed to call $\decap(\cdot)$ on $X^\star$, but as per the changes in 
  $\gamememkeys$, it will
  directly use $k^\star$ for user $pk_j$ if \hdr\decap recovers $X^\star$ in 
  step (1).

  If $b = 0$ in \B's IES KEM challenge,
  then recipient $j$'s key $k_j = \hasha(Y^{x^\star})$,
  and hence \B perfectly simulates $\gameh{j - 1}$.
  If $b = 1$ in \B's IES KEM challenge,
  then $j$'s key $k_j \getrand \{0, 1\}^{\lambda_H}$
  and, hence, \B perfectly simulates $\gameh{j}$.
  If \A distinguishes $\gameh{j - 1}$ from $\gameh{j}$
  then \B breaks the \ccatwo-KEM security of IES.
  Hence,
  $\gameh{j - 1}$ and $\gameh{j}$ are indistinguishable.
  Repeating this argument $r$ times shows that
  \gamememkeys and \gamesubkeys
  are indistinguishable.
  More precisely:
  \begin{equation*}
  \big| \textrm{Pr}[\gameevent_1] -
        \textrm{Pr}[\gameevent_2] \big| \leq r \cdot \advkemcca(\onelambda).
  \end{equation*}
\end{indist}

\begin{indist}[\gamestart <--> \gamememkeys]
  By perfect correctness of the authentication encryption scheme, we have that
  for all keys $k$ and messages $m$ that $\D_k(\E_k(m)) = m$,
  thus, games \gamesubkeys and \gamememmsg are identical. Therefore:
  \begin{equation*}
  \big| \textrm{Pr}[\gameevent_2] -
        \textrm{Pr}[\gameevent_3] \big| = 0.
  \end{equation*}
\end{indist}

\begin{indist}[\gamememmsg <--> \gamesubenc]
	Similarly to the proof above, consider the hybrid games
	$\gameh{i}$ where the first $i$ entry points are substituted with random 
	strings $e_1, \ldots, e_i$ as in 
	\gamesubenc, 
	whereas the remaining $r - i$ are the actual encryptions as in 
	\gamememmsg.
	Then $\gamememmsg = \gameh{0}$ and $\gamesubenc = \gameh{r}$.
	We show that \A cannot distinguish $\gameh{j - 1}$ from
	$\gameh{j}$. Let 
	$\suite_j = \langle \G, p, g, \hide(\cdot), \Pi, \hasha, \hashb \rangle$,
  be the suite corresponding to recipient $j$. We show that if \A distinguishes
  $\gameh{j - 1}$ from $\gameh{j}$
  then we can build a distinguisher \B against
  the \ccatwo security of $\Pi$.
  \B receives from its \ccatwo challenger:
	\begin{compactitem}
		\item a challenge ciphertext $e^\star$, in response to an encryption call 
		with a message $m$ such that, depending on the bit $b \in \{0, 1\}$, 
		we have that $e^\star = \E_{Z}(m)$ if $b = 0$ or $e^\star$ is a random string if $b 
		= 1$;
		\item a decryption oracle $\D_{Z}(\cdot)$.
	\end{compactitem}

  When constructing the challenge ciphertext, \B calls its challenge oracle with
  $K \parallel \meta$ to obtain $e^\star$, and then sets $e_j^\star = e^\star$ for user $j$'s
  entry point (in line 47 of \layout). We note that in the random oracle the 
  real encryption key $Z_j =
  \hashb(\text{``key''} \parallel k_j)$ is independent from adversary \A's view,
  so we can replace it with the random key of the \ccatwo challenger. For
  other users $i$ it proceeds as follows:
  \begin{compactitem}
  \item If $i < j$, it sets $e_i^\star$ to a random string of appropriate length.
  \item If $i > j$, it computes $e_i^\star$ as per line 47 of \layout.
  \end{compactitem}
  
  Thereafter, \B answers decryption queries as before. Except that whenever, \B
  derives key $k_j$ for user $j$, it will use its decryption oracle 
  $\D_Z(\cdot)$.
  Note that in particular,
  because of the changes in \gamememmsg, 
  \B will not make $\D_Z(\cdot)$ queries on $e^\star_i$ from the challenge 
  ciphertext $c^\star$.

	If $b =0$, \B simulates \gameh{j-1}, and if $b=1$, it simulates 
	\gameh{j}. 
	Therefore, if \A distinguishes between \gameh{j-1} and \gameh{j}, then \B 
	breaks the \ccatwo security of $\Pi$. To show that \gamememmsg is 
	indistinguishable from \gamesubenc, repeat this argument $r$ times.
  More precisely:
  \begin{equation*}
  \big| \textrm{Pr}[\gameevent_3] -
        \textrm{Pr}[\gameevent_4] \big| \leq r \cdot \advae(\onelambda).
  \end{equation*}
\end{indist}

\begin{indist}[\gamesubenc <--> \gamerepbot]
	The challenger's actions in \gamesubenc and \gamerepbot only differ if \A
	could create a decryption request $\qdec(pk_i(\suite_i), c)$ where 
	$\unhide(\tau) = X_i^{\star}$, $e_i = e^\star_i$, and the integrity tag 
	$\sigma$ 
	is valid but $c$ is different from $c^\star$ (recall \A is not allowed to
  query $c^\star$ itself). We show that if \A can cause the challenger to output
  $\bot$ incorrectly, then we can build a simulator \B that breaks the
	strong unforgeability of \mac.
	
	Assume a simulator \B that tries to win an unforgeability game.
	Simulator \B receives access to the oracles $\M(\cdot)$ and $\V(\cdot)$, and needs to output a pair $(c, \sigma)$, such that $\V_{\mackey}(c, \sigma)$ returns true.

  Simulator \B now proceeds as follows. When creating the challenge ciphertext
  $c^\star$, it does not compute $\sigma$ in step (9) of \penc using 
  $K^\star$, but instead uses
  its oracle $\M$ and sets $\sigma = \M(c')$. Note that because of the random
  oracle model for $\hashpayload$ and the fact that \A's view is 
  independent of
  $K^\star$, this change of \mackey remains undetected.

  Whenever \A makes a decryption query $\qdec(pk_i(\suite_i),\allowbreak c)$ \B 
  proceeds as
  before, except when it derives the key $K^*$. In that case it runs $\V(c',
  \sigma)$ to use its oracle to verify the MAC in step (5) of \pdec. If $\V(c',
  \sigma)$ returns $\top$ then \B outputs $(c', \sigma)$ as its forgery (by
  construction, $c'$ was not queried to the MAC oracle $\M(\cdot)$).

  Therefore, \A cannot make queries that cause the challenger to incorrectly
  output $\bot$, and therefore the two games are indistinguishable, provided \mac 
  is strongly unforgeable.
  More precisely:
  \begin{equation*}
  \big| \textrm{Pr}[\gameevent_4] -
        \textrm{Pr}[\gameevent_5] \big| \leq \advmacsforge(\onelambda).
  \end{equation*}
\end{indist}

\begin{indist}[\gamerepbot <--> \gamesubmac]
	If \A can distinguish between \gamerepbot and \gamesubmac, then we 
	can build a distinguisher \B that breaks the indistinguishability from 
	random bits (MAC-IND\$) of \mac.
  
  Distinguisher \B proceeds as follows to compute the challenge ciphertext
  $c^\star$. It proceeds as before, except that in step (9) of \penc, it 
  submits
  $c'$ to its challenge oracle to receive a tag $\tau^\star$. It then sets $\tau
  = \tau^\star$ and proceeds to construct the PURB ciphertext.
  
  Note that as per the changes before, \B never needs to verify a MAC under the
  key that was used to create $\tau^\star$ for the challenge ciphertext.
  Moreover, as before, \A's view is independent of the $K^\star$, so also this
  change of $\mackey$ remains undetected.

	If $b = 0$, \B simulates \gamerepbot, and
  if $b = 1$, \B simulates \gamesubmac.
	Hence, if \A can distinguish between these two games, \B breaks the 
	MAC-IND\$ game.
  More precisely:
  \begin{equation*}
  \big| \textrm{Pr}[\gameevent_5] -
        \textrm{Pr}[\gameevent_6] \big| \leq \advmacind(\onelambda).
  \end{equation*}
\end{indist}

\begin{indist}[\gamesubmac <--> \gamesubpay]
	If \A can distinguish between \gamesubmac and \gamesubpay, then we 
	can build a distinguisher \B that breaks the \cpa property of $(\enc, 
	\dec)$.
	In the \cpa game~\cite{rogaway04nonce}, \B receives:
	\begin{compactitem}
		\item a challenge ciphertext $\ctxtpayload = c_b$, \st $c_0 = 
		\enc_{\enckey}(m)$ on a chosen-by-\B $m$, $c_1 \getrand \{0, 
		1\}^{|c_0|}$, and $b \getrand \{0, 1\}$.
	\end{compactitem}
	\B runs \pdec as before to create a challenge for \A, except that \B uses 
	the \cpa challenge ciphertext $\ctxtpayload$ in step (7), instead of 
	encrypting, as \B does not know $\enckey$.
  As before, \A's view is independent of $K^\star$, so also this
  change of $\enckey$ remains undetected.
	
	\B answers decryption queries $\qdec(pk_i(\suite_i), c)$ from \A as 
  before. In particular
	\begin{compactitem}
		\item if $\unhide(\tau) = X_i^{\star}$ and $e_i = e^\star_i$, \B returns 
		$\bot$ as per the changes in $\gamerepbot$;
		\item Otherwise, \B runs $\pdec(\cdot)$.
	\end{compactitem}
	If $b = 0$, \B simulates \gamesubmac, and, if $b = 1$, \B simulates 
	\gamesubpay.
	Hence, if \A can distinguish between these two games, \B can break the 
	the \cpa property of $(\enc, \dec)$.
  More precisely:
  \begin{equation*}
  \big| \textrm{Pr}[\gameevent_6] -
        \textrm{Pr}[\gameevent_7] \big| \leq \advindcpa(\onelambda).
  \end{equation*}
\end{indist}
Combining the individual inequalities we find that there exists adversaries
$\B_1, \ldots, \B_5$ such that
\begin{align*}
\advmsbe(\onelambda) \leq
  \;& r \left( \advkemcca[\B_1](\onelambda) + \advae[\B_2](\onelambda)\right) + \\
  &\advmacsforge[\B_3](\onelambda) +
  \advmacind[\B_4](\onelambda) + \\
  &\advindcpa[\B_5](\onelambda),
\end{align*}
completing the proof.
\end{proof}

\subsection{Proof of Theorem~\ref{theorem:insider}}
\label{proof:insider}

For our \mspurb \cpa recipient-privacy game, we take inspiration from the 
single-suite recipient-privacy game defined by Barth et 
al.~\cite{barth06privacy}, but we restate it in the \cpa setting.

\begin{game}[Recipient-Privacy]
	\label{game:recpri}
	The game is between a challenger and an adversary \A, and proceeds 
	along the following phases:

  \hangone
	\textbf{Init:} The challenger and adversary take $\lambda$ as input.
  The adversary outputs
  a number of recipients $r$ and corresponding cipher suites
  $\suite_1, \ldots, \suite_r$ it wants to attack. Let $s$ be the number of
  unique cipher suites.
  The challenger verifies,
  for each $i \in \{1, \ldots, r\}$,
  that $\suite_i$ is a valid cipher suite, i.e.,
  that it a valid output of $\msbe\setup(\onelambda)$. The challenger aborts, and
  sets $b^\star \getrand \{0, 1\}$ if the suites are not all valid.
  Adversary \A then outputs two sets of recipients $N_0, N_1 \subseteq \{1, \ldots, 
	n\}$ such that $|N_0| = |N_1| = r$, and the number of users in $N_0$
  and $N_1$ using suite $\suite_j$ is the same.
	
	\hangone
	\textbf{Setup:} For each $i \in 1,\ldots, n$ given by \A, the challenger
	runs $(sk_i, pk_i) \gets \pkeygen(\suite_i)$, where $\suite_i$ is previously 
	chosen by \A.
	The challenger gives two sets $R_0 = \{pk^0_1, \ldots, 
	pk^0_r\}$ and $R_1 = \{pk^1_1, \ldots, pk^1_r\}$ to \A, where $R_0, 
	R_1$ are the generated public keys of the recipients $N_0, N_1$ 
	respectively.
	The challenger also gives to \A all $sk_i$ that correspond to $i \in N_0 
	\cap N_1$.
	
	
	\hangone
	\textbf{Challenge:} \A outputs $m^\star$. The challenger generates 
	$c_0 = \penc(R_0, m^\star)$ and $c_1 = \penc(R_1, m^\star)$.
	The challenger flips a coin $ b \getrand \{0,1\}$ and sends $c^\star = 
	c_b$ to \A.
	
	
	\hangone
	\textbf{Guess:} \A outputs its guess $b^\star$ for $b$ and 
	wins if $b^\star = b$.
\end{game}

We define \A's advantage in this game as:
\begin{equation*}
  \advmsbein(\onelambda) = 2\left| \textrm{Pr}[b = b^{\star}] - \tfrac{1}{2} \right|.
\end{equation*}
We say that a MSBE scheme is \texttt{cpa}-secure against insiders if $\advmsbein(\onelambda)$ is negligible in
the security parameter.

The conditions on $N_0$ and $N_1$ in the game ensure that \A cannot trivially win by
looking at the size of the ciphertext. PURBs allows for suites with different
groups (resulting in different size encodings of the corresponding IES public
key) and for suites to use different authenticated encryption schemes
(that could result in different sizes of encrypted entry points). Since PURBs
must encode groups and entry points into the header, we mandate that for each
suite the number of recipients is the same in $N_0$ and $N_1$.
This assumption is similar to requiring equal-size sets of recipients in a 
challenge game for single-suite broadcast encryption~\cite{barth06privacy}.
As in broadcast encryption, if this requirement is an issue, a sender can 
add dummy recipients to avoid structural leakage to an insider adversary.

We will show that
\begin{equation*}
  \advmsbein(\onelambda) \leq 2 d \cdot \advkemcca[\B](\onelambda),
\end{equation*}
where $d$ is the number of recipients in which $N_0$ and $N_1$ differ.

\begin{proof}
Similarly to Barth et al.~\cite{barth06privacy}, we prove recipient privacy 
when the sets $R_0$ and $R_1$ differ only by one public key in one suite.
The general case follows by a hybrid argument.
Consider the following games:

\begin{game}[\gameinstart]
	\label{game:instart}
	This game is as the original recipient-privacy \cpa game where $b = 0$ 
	and $pk_i = R_0 \setminus R_1$, $pk_j = R_1 \setminus R_0$, where
	the public keys $pk_i$ and $pk_j$ are of the same suite $\suite$.
\end{game}

\begin{game}[\gamerandkey]
	As in \gameinstart, but we change how a key $k_i^\star$ 
	corresponding to the recipient $i$ is computed in $\hdr\encap$ for
	the \emph{challenge ciphertext}.
	Instead of computing $k_i^\star = \hasha(Y_i^{x})$ (where $Y_i = pk_i$) 
	as in step (2) of $\hdr\encap$, we set $k_i^\star \getrand \{0, 
	1\}^{\lambda_H}$.
	As the challenger generates fresh public keys for each encryption query 
	and thus a fresh key $k_i$, and does not have to answer decryption queries,
  it does not need to memorize $k_i^\star$.
\end{game}

\begin{game}[\gameinend]
	As in \gamerandkey, but we change the random sampling $k_i^\star$ in 
	$\hdr\encap$ for the challenge ciphertext with $k_i^\star = 
	\hasha(Y_j^{x}) = k_j^\star$ where $Y_j = pk_j$.
	The game now is the original recipient-privacy \cpa game where $b = 1$.
\end{game}

\para{Conclusion.}
\gameinstart represents the recipient-privacy game with $b=0$ and 
\gameinend recipient-privacy game with $b=1$.
If \A cannot distinguish between \gameinstart and \gameinend, \A does not 
have an advantage in winning the recipient-privacy game.

Let $\gameevent_i$ be the event that \A outputs $b^\star = 1$ in game $G_i$.

\begin{indist}[\gameinstart <--> \gamerandkey]
	If \A can distinguish between \gameinstart and \gamerandkey, we can 
	build a distinguisher \B against the \ccatwo security of the IES KEM.
	Recall that \B receives, from its \ccatwo-KEM challenger,
	\begin{compactitem}
		\item a public key $Y$;
		\item a challenge $\langle X^\star, k^{\star} \rangle$, where 
		depending on bit $b \getrand \{0, 1\}$, we have $k^\star = 
		\hasha({Y}^{x^\star})$ if $b = 0$ or $k^{\star} \getrand \{0, 
		1\}^{l(\lambda)}$ if $b = 1$;
		\item access to a $\decap(\cdot)$ oracle for all but $X^\star$.
	\end{compactitem}
	At the start of the game, \B will set $pk_i = Y$, so that the public key of
	recipient $i$ matches that of its IES KEM challenger. Note that \B does 
	not know the corresponding private key $y_i$. For all other recipients 
	$h$, \B sets $(sk_h = y_h, pk_h =  Y_h) = \pkeygen(S_h)$.
	As \A plays an \cpa game, \B does not need to use the $\decap(\cdot)$ 
	oracle (in fact, for \cpa recipient privacy \cpa security of the IES KEM
  suffices).
	
	If $b=0$ in the IES-KEM challenge, then \B simulates \gameinstart, and, 
	If $b=1$, \B simulates \gamerandkey. Hence, if \A distinguishes between 
	\gameinstart and \gamerandkey, \B wins in the \ccatwo IES-KEM game.
  Therefore:
  \begin{equation*}
    \left| \textrm{Pr}[\gameevent_0] -
    \textrm{Pr}[\gameevent_1] \right| \leq \advkemcca[\B](\onelambda)
  \end{equation*}
  
\end{indist}

\begin{indist}[\gamerandkey <--> \gameinend]
	The proof follows the same steps as the proof of 
	\gameinstart <--> \gamerandkey.
  Therefore:
  \begin{equation*}
    \left| \textrm{Pr}[\gameevent_0] -
    \textrm{Pr}[\gameevent_1] \right| \leq \advkemcca[\B](\onelambda).
  \end{equation*}
\end{indist}

Let $d$ be the number of recipients that differ in $N_0$ and $N_1$.
Then by repeating the above two steps $d$ times in a hybrid argument, we find
that:
\begin{equation*}
  \advmsbein(\onelambda) \leq 2 d \cdot \advkemcca[\B](\onelambda),
\end{equation*}
as desired.
\end{proof}

\end{document}